\documentclass[onecolumn,showpacs,preprint,amsmath,amssymb,superscriptaddress]{revtex4-1}

\usepackage{dcolumn}
\usepackage{bm}
\usepackage{hyperref}
\newcommand{\ignore}[1]{}
\usepackage{graphicx}
\usepackage{subfigure}
\usepackage{amsmath,amssymb,amsfonts,amsthm,color,latexsym}



\usepackage[export]{adjustbox}

\usepackage{tabstackengine}

\def\mb{\mathbf}

\def\bm{\boldsymbol}
\newcommand{\Rmnum}[1]{\uppercase\expandafter{\romannumeral #1\relax}}
\newcommand{\rmnum}[1]{\lowercase\expandafter{\romannumeral #1\relax}}
\def\mb{\mathbf}
\newcommand*\diff{\mathop{}\!\mathrm{d}}

\mathchardef\mhyphen="2D
\usepackage{amsmath}

\newtheorem{theorem}{Theorem}[section]

\newtheorem{proposition}[theorem]{Proposition}

\usepackage[normalem]{ulem}


\begin{document}
\preprint{}

\title{Data-driven construction of stochastic reduced dynamics encoded with non-Markovian features}
\author{Zhiyuan She}
\affiliation{Department of Computational Mathematics, Science \& Engineering, Michigan State University, MI 48824, USA}%
\author{Pei Ge}
\affiliation{Department of Computational Mathematics, Science \& Engineering, Michigan State University, MI 48824, USA}
\author{Huan Lei}
\thanks{leihuan@msu.edu}
\affiliation{Department of Computational Mathematics, Science \& Engineering, Michigan State University, MI 48824, USA}
\affiliation{Department of Statistics \& Probability, Michigan State University, MI 48824, USA}%


\begin{abstract}
One important problem in constructing the reduced dynamics of molecular systems is the accurate modeling of the non-Markovian behavior arising from the dynamics of unresolved variables. The main complication emerges from the lack of scale separations, where the reduced dynamics generally exhibits pronounced memory and non-white noise terms. We propose a data-driven approach to learn the reduced model of multi-dimensional resolved variables that faithfully retains the non-Markovian dynamics. Different from the common approaches based on the direct construction of the memory function, the present approach seeks a set of non-Markovian features that encode the history of the resolved variables, and establishes a joint learning of the extended Markovian dynamics in terms of both the resolved variables and these features. The training is based on matching the evolution of the correlation functions of the extended variables that can be directly obtained from the ones of the resolved variables. The constructed model essentially approximates the multi-dimensional generalized Langevin equation and ensures numerical stability without empirical treatment. We demonstrate the effectiveness of the method by constructing the reduced models of molecular systems in terms of both one-dimensional and four-dimensional resolved variables. 
\end{abstract}

\pacs{}

\maketitle
\section{Introduction}
\label{sec:introduction}
Predictive modeling of multi-scale dynamic systems is a long-standing problem in many 
fields such as biology, materials science, and fluid physics. One essential challenge arises from the high-dimensionality; numerical simulations of the full models often show limitations in the achievable spatio-temporal scales. Alternatively, reduced models in terms of a set of resolved variables are often used to probe the evolution on the scale of interest. However, the construction of reliable reduced models remains a highly non-trivial problem. In particular, for systems without a clear scale
separation, the reduced dynamics often exhibits non-Markovian memory effects, where the analytic form is generally unknown. To close the reduced dynamics, existing methods are primarily based on the following two approaches. The first approach seeks various numerical approximations of the memory term by projecting the full dynamics onto the resolved variables based on frameworks such as the Mori-Zwanzig formalism \cite{Mori1965, Zwanzig1973} or canonical models such as the generalized Langevin equation (GLE) \cite{Zwanzigbook}. Examples include the t-model approximation \cite{ChHaKu02}, the Galerkin discretization \cite{Darve_PNAS_2009}, regularized integral equation discretization \cite{Lange2006}, the hierarchical construction \cite{LiXian2014,Panos_rMZ_2015,zhu2018faber,Hudson2018,Price_Stinis_PNAS_2021}, and so on. 
Recent studies \cite{ma2018model, Vlachas_Byeon_PRSA_2018, Harlim_JCP_2020, WANG_Hesthaven_RNN_JCP_2020} based on the recurrent neural networks \cite{HochSchm97} provide a promising approach to learn the memory term of deterministic dynamics. Yet, for ergodic dynamics, how to impose the coherent noise term compensating for the unresolved variables remains open. The second approach parameterizes the memory term by certain ansatz, e.g., the fictitious particle \cite{Dav_Voth_JCP_2015}, continued fraction \cite{Wall_1948_book,Mori1965b}, rational function \cite{Corless_2003_book}, such that the memory and the noise terms can be embedded in an extended Markovian dynamics \cite{Mori1965b,ceriotti2009langevin, baczewski2013numerical,Dav_Voth_JCP_2015,Lei2016,Jung_Hanke_JCTC_2017,Lee2019,russo2019deep,ma2019coarse,Grogan_Lei_JCP_2020}. In addition, non-Markovian models are represented by discrete dynamics with exogenous inputs in form of NARMAX (nonlinear autoregression moving average
with exogenous input) \cite{chorin2015discrete,Lin_Lu_JCP_2021} and SINN (statistics information neural network) \cite{Zhu_Tang_arxiv_2022} and parameterized for each specific time step. 
Despite the overall success, most studies focus on the cases with a scalar memory function. Notably, the reduced model of a two-dimensional GLE is constructed in Ref. \cite{Lee2019}. To the best of our knowledge, the systematic construction of stochastic reduced dynamics of multi-dimensional resolved variables remains under-explored.  

Ideally, to obtain a reliable reduced model, the construction needs to accurately retain the non-Markovian features, enable certain modeling flexibility (e.g., the dimensionality of the resolved variables) and adaptivity (e.g., the order of approximation), and guarantee the numerical stability and robustness. In a recent study, we developed a Petrov-Galerkin approach \cite{Lei_Li_JCP_2021} to construct the non-Markovian reduced dynamics by projecting the full dynamics into a subspace spanned by a set of projection bases in form of the fractional derivatives of the resolved variables.  The obtained reduced model is parameterized as extended stochastic differential equations by introducing a set of test bases. Different from most existing approaches, the construction does not rely on the direct fitting of the memory function. Non-local statistical properties can be naturally matched by choosing the appropriate bases, and the model accuracy can be systematically improved by introducing more basis functions to expand the projection subspace.  
Despite these appealing properties, the construction relies on the heuristic choices of the projection and test bases. Given the target number of basis, how to choose the optimal basis functions for the best representation of the non-Markovian dynamics remains an open problem. Furthermore, the numerical stability needs to be treated empirically. These issues limit the applications in complex systems with multi-dimensional resolved variables. 

In this work, we aim to address the above issues by developing a new data-driven approach to construct the stochastic reduced dynamics of multi-dimensional resolved variables. The method is based on the joint learning of a set of non-Markovian features and the extended dynamic equation in terms of both the resolved variables and these features.  Unlike the empirically chosen projection bases adopted in the previous work \cite{Lei_Li_JCP_2021}, the non-Markovian features take an interpretable form that encodes the history of the resolved variables, and are learned along with the extended Markovian dynamic such that they are \emph{optimal} for the reduced model representation.  In this sense, they represent the optimal subspace that embodies the non-Markovian nature of the resolved variables.  
The learning process enables the adaptive choices of the number of features and is easy to implement by matching the evolution of the correlation functions of the extended variables. In particular, the explicit form of the encoder function enables us to obtain the correlation functions of these features directly from the ones of the resolved variables rather than the time-series samples.  
The constructed model automatically ensures numerical stability, strictly satisfies the second fluctuation-dissipation theorem \cite{Kubo66}, and retains the consistent invariant distribution \cite{espanol2004statistical,noid_multiscale_2008}.

We demonstrate the method by modeling the dynamics of a tagged particle immersed in solvents and a polymer molecule. With the same number of features (or equivalently, the projection bases), the present method yields more accurate reduced models than the previous methods \cite{Lei2016,Lei_Li_JCP_2021} due to the concurrent learning of the non-Markovian features. 
More importantly, reduced models with respect to multi-dimensional resolved variables can be conveniently constructed without the cumbersome efforts of matrix-valued kernel fitting and stabilization treatment. This is well-suited for model reduction in practical applications, where the constructed reduced models often need to retain the non-local correlations among the resolved variables. It provides a convenient approach to construct meso-scale models encoded with molecular-level fidelity and paves the way towards constructing reliable continuum-level transport model equations \cite{Lei_Wu_E_2020,Lei_E_DeePN2_2022}. 

Finally, it is worthwhile to mention that the present study focuses on the model reduction of ergodic dynamic systems where the full or part of the resolved variables are specified as known quantities that either retain a clear physical interpretation (e.g., the tagged particle position), or are experimentally accessible (e.g., the polymer end-to-end distance, the radius of gyration). 
Another relevant direction focuses on learning the slow or Markovian dynamics from the complex dynamic systems where the resolved variables are unknown \emph{a priori}; we refer to Refs. \cite{Rohrdanz_Zheng_JCP_2011,Per_Noe_JCP_2013, Li_Ma_MS_2014, Krivov_JCTC_2013,Lu_Vanden_JCP_2014,Bitt_Sch_JNS_2018} on learning resolved variables that retain the Markovianity, Refs. \cite{Coifman_Kevrekidis_MMS_2008,Chiavazzo_Covino_PNAS_2017,Crosskey_Maggioni_MMS_2017,Ye_Yang_Maggioni_arxiv_2021, Feng_Gao_arxiv_2022, Zieli_Hesthaven_2022} on learning the slow dynamics on a non-linear manifold, and Refs. \cite{Giannakis_ACHA_2019,Klus_Schu_JNonlinear_2018,Mauricio_Schut_JCP_2018,Klus_Schutte_Physical_D_2020} on model reduction of the transfer operator. 


\section{Methods}
\label{sec:methods}
\subsection{Problem Setup}
Let us consider the full model as a Hamiltonian system represented by a $6N\mhyphen$dimensional phase vector $\mb Z = [\mb Q; \mb P]$, where $\mb Q$ and $\mb P$ are the position and momentum vectors, respectively. The equation of motion follows
\begin{equation}
\dot{\mb Z} = \mb S \nabla H(\mb Z),
\label{eq:full_model}
\end{equation}
where $\mb S = \begin{pmatrix} 0 & \mb I\\ -\mb I &0
\end{pmatrix}$ is the symplectic matrix, and $H(\mb Z)$ is the Hamiltonian function and initial condition is given by $\mb Z(0) = \mb Z_0$.  
For high-dimensional systems with $N \gg 1$, the numerical simulation of Eq. \eqref{eq:full_model} can be computational expensive. It is often desirable to construct a reduced model with respect to a set of low-dimensional resolved variables $\mb z(t) := \phi\left(\mb Z(t)\right)$ where $\phi : \mathbb{R}^{6N} \to \mathbb{R}^{m}$ represents the mapping from the full to the coarse-grained state space with $m \ll N$. With the explicit form of $H(\mb Z)$ and $\phi(\mb Z)$, the evolution of $\mb z(t)$ can be mapped from the initial values via the Koopman operator \cite{Koopman315}, i.e.,  $\mb z(t) = {\rm e}^{tL} \mb z(0)$, where the Liouville operator $L \phi(\mb Z) = -(\left(\nabla H (\mb Z_0) \right)^T \mb{S}\nabla_{\mb Z_0}) \phi(\mb Z)$ depends on the full-dimensional phase vector $\mb Z$. Using the Duhamel–Dyson formula, the evolution of $\mb z(t)$ can be further formulated in terms of $\mb z$ based on the Mori-Zwanzig (MZ) projection formalism \cite{Mori1965, Zwanzig1973}. However, the numerical evaluation of the derived model relies on solving the full-dimensional orthogonal dynamics \cite{ChHaKu02}, which can be still computational expensive. 

In practice, the resolved variables are often defined by the position vector $\mb Q$. The MZ-formed reduced dynamics is often simplified into the GLEs, i.e.,
\begin{equation}
\begin{split}
\dot{\mb q} &= \mb M^{-1} \mb p \\
\dot{\mb p} &= -\nabla U(\mb q) - \int_0^t \bm\theta(t - \tau) \dot{\mb q}(\tau) \diff \tau +  \mb{\mathcal{R}}(t),
\end{split}
\label{eq:GLE}
\end{equation}
where $\mb q \in \mathbb{R}^{m} $ is the so-called collective variables, $\mb M$ is the mass matrix, $U(\mb q)$ is the free energy function, $\bm \theta(t): \mathbb{R}^{+} \to \mathbb{R}^{m\times m}$ is a matrix-valued function representing the memory kernel, and $\mathcal{R}(t)$ is a stationary colored noise related to $\bm\theta(t)$ through the second fluctuation-dissipation condition \cite{Kubo1966}, i.e., $\left\langle \mb{\mathcal{R}}(t) \mb{\mathcal{R}}(0)^T \right\rangle = k_BT \bm \theta(t)$. Numerical simulation of Eq. \eqref{eq:GLE} requires the explicit knowledge of both the free energy $U(\mb q)$ and the memory function $\bm\theta(t)$. Several methods based on importance sampling \cite{Kumar_Kollman_JCC_1992,Darve_JCP_2001,Laio_Parrinello_PNAS_2002} and temperature elevation \cite{Tuckerman_JCP_2002,TAMD_CPL_2006,TAMD_JCP_2006} have been developed to construct the multi-dimensional free energy function. In real applications, the main challenge often lies in the treatment of the memory kernel $\bm\theta(t)$. In particular, for multi-dimensional collective variables $\mb q$,  the efficient construction of numerically stable matrix-valued memory function remains under-explored. 

In this study, we develop an alternative approach to learn the reduced model. Rather than directly constructing the memory function $\bm\theta(t)$ in Eq. \eqref{eq:GLE}, we seek a set of non-Markovian features from the full model, denoted by $\left\{\bm\zeta_{i}\right\}_{i=1}^n$, and establish a joint learning of the reduced Markovian dynamics in terms of both the resolved variables and these features, i.e., 
\begin{equation}
\diff \tilde{\mb z} = \mb g\left(\tilde{\mb z}\right) \diff t + \bm\Sigma \diff \mb W_t, 
\label{eq:reduced_model_general}
\end{equation}
where $\tilde{\mb z} := \left[\mb q; \mb p; \bm\zeta_{1}; \cdots; \bm\zeta_{n}\right]$ represents the extended variables and $\mb W_t$ represents the standard Wiener process. 
In principle, any such extended system would generally lead to a non-Markovian dynamics for the resolved variables $\mb z = \left[\mb q; \mb p\right]$. However, the essential challenge is to determine $\left\{\bm\zeta_{i}\right\}_{i=1}^n$ so that the non-local statistical properties of $\mb z$ can be preserved with sufficient accuracy. Also, the form of $\mb g(\cdot)$ and $\bm\Sigma$ will need to be properly designed such that the reduced model retains the consistent thermal fluctuations and density distribution. In particular, the introduction of auxiliary variables can be loosely considered as approximating the full-dimensional Koopman operator in a sub-space. However, different from Ref. \cite{Lei_Li_JCP_2021}, the features $\left\{\bm\zeta_{i}\right\}_{i=1}^n$ are not the empirically-chosen projection bases; instead, they are learned along with model equation \eqref{eq:reduced_model_general} for the \emph{best approximation} of the non-local statistics. This essential difference enables the present method to generate more accurate reduced model and be easy to implement for multi-dimensional resolved variables without empirical treatment for numerical stability.  

\subsection{Non-Markovian features and the extended dynamics}
\label{sec:non_markovian_features}
To illustrate the essential idea, let us consider a solute molecule immersed solvent particles. To construct a reduced model \eqref{eq:reduced_model_general} for the solute molecule, a main question is how to construct the auxiliary variables $\bm \zeta := \left[\bm\zeta_{1}; \bm\zeta_{2}; \cdots; \bm\zeta_{n}\right]$. Intuitively, $\bm \zeta_i$ should depend on the full-dimensional vector $\mb Z$ such that their evolution encodes certain unresolved dynamics orthogonal to the subspace spanned by $\mb z(t)$. A straightforward approach is to represent $\bm \zeta(t)$ as a function of $\mb Z(t)$, i.e., $\bm \zeta = \mb h(\mb Z)$. However, the direct construction of the general formulation $\mb h(\mb Z)$ would become impractical since the learning  generally involves sampling the individual solvent particles near the solute molecule; the computational cost could become intractable due to the high-dimensionality of $\mb Z$. 

To circumvent the above challenges, the key ascribes to formulate
$\bm \zeta (t)$ such that it properly encodes the unresolved dynamics of $\mb Z(t)$, and
meanwhile, can be easily sampled.
One important observation is that the history of $\mb p(t)$ naturally encodes the unresolved dynamics orthogonal to
$\mb z(t)$ and the values can be conveniently obtained. To see this, we note that the dynamics follows $\dot{\mb p} = L\mb p$ where the Liouville operator $L \phi(\mb Z) = -(\left(\nabla H (\mb Z_0) \right)^T \mb{S}\nabla_{\mb Z_0}) \phi(\mb Z)$ depends on the full-dimensional vector $\mb Z$. Therefore, it is possible to construct $\bm \zeta(t)$ by some encoder functions
in terms of the time history of $\mb p(t)$, i.e., $\mb p(\tau)$ with $\tau \le t$, such that they retain certain components orthogonal to $\mb z(t)$. This is somewhat similar to
the study \cite{Lei_Wu_E_2020} on modeling the non-local constitutive dynamics of non-Newtonian fluids by learning a set of features from the polymer configuration space. The main difference is that the present features $\bm\zeta$ are non-Markovian in the temporal space.

Accordingly, we define a set of non-Markovian features $\left\{\bm\zeta_{i}\right\}_{i=1}^n$ by 
\begin{equation}
\begin{split}
\bm\zeta_{i}(t) &= \int_0^{+\infty} \bm\omega_i(\tau )\mb v(t-\tau) \diff \tau \\
&\approx \sum_{j=1}^{N_w} \mb w_i(j\delta t) \mb v(t- j\delta t) 
\end{split}
\label{eq:nonlocal_features}   
\end{equation}
where $\mb v := \mb M^{-1} \mb p$ represents the generalized velocity, $\bm\omega_i:\mathbb{R}^{+}\to \mathbb{R}^{m\times m}$ represents the encoder function represented by $N_w$ discrete weights $\left\{\mb w_i(j \delta t)\right\}_{j=1}^{N_w}$ whose values will be determined later.

$\bm\zeta_{i}(t)$ can be loosely viewed as a generalized convolution over the history of $\mb v(t)$ whose evolution encodes the orthogonal dynamics of $\mb z(t)$. Therefore, it is possible to learn a set of $\bm\zeta_i(t)$ such that the joint dynamics in terms of both $\mb z(t)$ and $\bm\zeta_i(t)$ can be well approximated by the extended Markovian model \eqref{eq:reduced_model_general}. Moreover, the linear form in terms of $\mb v(t)$ ensures that the invariant density function of $\bm\zeta_{i}(t)$ retains the Gaussian distribution consistent with $\mb v(t)$. We can further impose a constraint such that $\mb v(t)$ and $\bm\zeta_i(t)$ are uncorrelated. This leads to an additional quadratic term in the energy function of the extended system, i.e., $W(\mb q, \mb p, \bm\zeta) = U(\mb q) + \frac{1}{2} \mb p^T \mb M^{-1} \mb p + \frac{1}{2} \bm \zeta^T \hat{\bm\Lambda}^{-1} \bm \zeta$, where $\hat{\bm\Lambda}$ represents the covariance matrix of the $\bm \zeta_1, \cdots, \bm\zeta_n$.  The reduced dynamics can be written as
\begin{equation}
\diff
\begin{pmatrix}
{\mb q}\\
{\mb p}\\
{\bm \zeta}
\end{pmatrix}
= \mb G\nabla W(\mb q, \mb p,\bm \zeta) \diff t
+ \bm \Sigma \diff {\mb W}_t,
\label{eq:SDE_simple}
\end{equation}
where the matrix $\mb G \in \mathbb{R}^{(2+n)m \times (2+n)m}$ takes the form
\begin{align}
\mb G = \begin{pmatrix}
\begin{matrix}
0
\end{matrix}
&
\begin{matrix}
\mb I &0 &\cdots &0
\end{matrix}\\
\begin{matrix}
-\mb I \\ 0 \\\vdots \\ 0
\end{matrix}
&\mb{J}
\end{pmatrix}
\begin{pmatrix}
\mb I
&
\begin{matrix}
0 & 0 &\cdots & 0
\end{matrix} \\
\begin{matrix}
0 \\ 0\\ \vdots \\0
\end{matrix}
&
\begin{bmatrix}
\mb I & \begin{matrix} &  & \end{matrix} \\
\begin{matrix}
&\\
&
\end{matrix}
 &\hat{\bm\Lambda}
\end{bmatrix}
\end{pmatrix}
%
\label{eq:G_matrix}.
\end{align}
The matrix $\mb J \in \mathbb{R}^{nm \times nm}$ further determines the statistical properties of the resolved variables and will be learned along with the non-Markovian features $\left\{\bm \omega_i(t)\right\}_{i=1}^n$ from the training samples as discussed in next subsection. Given the reduced model in form of Eqs. \eqref{eq:SDE_simple} and \eqref{eq:G_matrix}, the noise covariance matrix can be determined by 
\begin{equation}
\bm\Sigma \bm\Sigma^T = -\beta^{-1} (\mb J \bm\Lambda +  \bm\Lambda^T \mb J^T), 
\label{eq:noise_covariance}
\end{equation}
where $\beta = 1/k_BT$ and $\bm\Lambda = {\rm diag}(\mb I, \hat{\bm\Lambda})$. Given this choice, we can show that model \eqref{eq:SDE_simple} strictly satisfies the second-fluctuation dissipation theorem. Specifically, the embedded kernel in Eq. \eqref{eq:SDE_simple} takes the form 
\begin{equation}
\Tilde{\bm\theta}(t) = -\left(\tilde{\mb J}_{11} \delta(t) + \tilde{\mb J}_{12} {\rm e}^{\tilde{\mb J}_{22} t} \tilde{\mb J}_{21}\right),
\end{equation}
where $\tilde{\mb J}_{11} = [\tilde{\mb J}]_{1:m, 1:m}$, $\tilde{\mb J}_{12} = [\tilde{\mb J}]_{1:m,m+1:}$ and $\tilde{\mb J}_{21} = [\tilde{\mb J}]_{m+1:,1:m}$ are the sub-blocks of the matrix $\tilde{\mb J} := \mb J \bm\Lambda$. The colored noise $\Tilde{\mb{\mathcal{R}}}(t)$ in terms of $\mb p(t)$ is related to $\Tilde{\bm\theta}(t)$ by
\begin{equation}
\left\langle \Tilde{\mb{\mathcal{R}}}(t) \Tilde{\mb{\mathcal{R}}}(t')^T \right\rangle = -\beta^{-1} \left(\tilde{\mb J}_{12} {\rm e}^{\tilde{\mb J}_{22}(t-t')} \tilde{\mb J}_{21} + (\tilde{\mb J}_{11} + \tilde{\mb J}_{11}^T) \delta(t - t')\right)
\label{eq:general_2nd_fdt}
\end{equation}
with $t' < t$. Moreover, we can show that the reduce model retains the consistent invariant density function with the full model, i.e.,  
\begin{equation}
\rho_{\rm eq} \propto \exp\left[-\beta W(\mb q, \mb p, \bm \zeta)\right].    
\end{equation}
We refer to Appendix \ref{app:fdt} and \ref{app:invariant_density} for details.

We conclude this subsection with two remarks: (\Rmnum{1}) In principle, the mass matrix $\mb M$ further depends on $\mb q$. Ref. \cite{Lee2019} reports that the varying mass matrix plays a secondary effect on the reduced dynamics of the molecular system therein. A constant mass matrix is adopted in the present study; reduced models with the varying mass matrix can be constructed by introducing an additional term in the conservative force and will be considered in the future study. (\Rmnum{2}) The non-Markovian features $\left\{\bm\zeta_{i}\right\}_{i=1}^n$ in form of Eq. \eqref{eq:nonlocal_features} can be generalized to retain the state-dependence, e.g., 
$\bm\zeta_{i}(t) = \int_0^{+\infty} \bm\omega_i(\tau, \mb q(\tau) )\mb v(t-\tau) \diff \tau $, which leads to a reduced model with state-dependent non-Markovian memory. In this study, we demonstrate the proposed learning framework by constructing the reduced model \eqref{eq:SDE_simple} that approximates the standard GLE \eqref{eq:GLE} with state-independent memory function $\bm \theta(t)$. 
As shown in the numerical examples, although $\bm\theta(t)$ is not explicitly constructed, it is well approximated by 
the memory kernel embedded in the reduced model \eqref{eq:SDE_simple} by matching the evolution of the correlation matrices for both the resolved and the extended variables.   
The learning of reduced models with the heterogeneous memory term will be systematically investigated in the future study.

\subsection{Joint learning of the reduced dynamics}
\label{sec:learn_reduced_dynamics}
Construction of the above reduced models relies on the joint learning of the non-Markovian features \eqref{eq:nonlocal_features} in form of the encoder functions $\left\{\bm\omega_i(t)\right\}_{i=1}^n$ and the reduced dynamics \eqref{eq:SDE_simple}\eqref{eq:G_matrix} determined by the free energy $U(\mb q)$ and the matrix $\mb J$. In this study, we represent the multi-dimensional free energy $U(\mb q)$ using a neural network and parameterize it based on the constraint molecular dynamics \cite{frenkel2001understanding}; we refer to Appendix for details. Furthermore, the covariance of the noise term specified by Eq. \eqref{eq:noise_covariance} implies that $\mb J$ and $\bm \Lambda$ (i.e., the encoder functions $\bm\omega_i(t)$) need to satisfy the following constraint 
\begin{equation}
\mb J \bm \Lambda + \bm \Lambda \mb J^T \preccurlyeq 0.
\label{eq:noise_constraint}
\end{equation}

Directly imposing the condition \eqref{eq:noise_constraint} becomes cumbersome for the joint learning of $\mb J$ and $\bm\omega_i(t)$. Fortunately, this issue can be avoided by imposing an orthogonal constraint among the non-Markovian features, i.e., 
\begin{equation}
\begin{split}
\left[\hat{\bm \Lambda}\right]_{ij} &:= \beta \left\langle \bm\zeta_i, \bm\zeta_j \right\rangle \\
&= \beta  \sum_{k,k'} \left\langle \mb w_i(t-k\delta t)\mb v(k\delta t), \mb w_j(t-k'\delta t)\mb v(k'\delta t) \right\rangle \\ 
&= \delta_{ij} \mb I, \quad 1\le i, j\le n, 
\end{split}    
\end{equation}
where the inner product $\displaystyle \left\langle \mb f(\mb Z) , \mb g(\mb Z) \right\rangle = \int \mb f(\mb Z) \mb g(\mb Z)^T \rho (\mb Z) \diff \mb Z$ is defined with respect to the equilibrium density function of the full model $\displaystyle \rho (\mb Z) = e^{-\beta H(\mb Z)} / \int e^{-\beta H(\mb Z)} \diff \mb Z $. In addition, we also impose the orthogonal constraints such that $\bm\zeta$ and $\mb p$ are uncorrelated. Therefore, condition \eqref{eq:noise_constraint} can be transformed into seeking $\mb J$ s.t. $\mb J + \mb J^T \preccurlyeq 0$, and we represent $\mb J$ by 
\begin{equation}
\mb J = - \mb L \mb L^T + \mb J^A,
\label{eq:J_SPD}
\end{equation}
where $\mb L \in \mathbb{R}^{(n+1)m\times(n+1)m}$ is the lower-triangle matrix with positive diagonal elements and $\mb L \mb L^T$ represents the Cholesky decomposition of a symmetric positive-definite matrix. $\mb J^A$ represents an anti-symmetric matrix. Unlike the studies \cite{Mori1965b,  ceriotti2009langevin} based on the direct kernel approximation, we note that $\mb J$ takes a more general form and is not restricted to be diagonal or tri-diagonal.  

With the proper form of $\mb J$, we can cast the reduced dynamics into the evolution of the correlation matrices by multiply $\mb v(0)$ to both sides of Eq. \eqref{eq:SDE_simple}, i.e.,
\begin{equation}
\frac{\diff}{\diff t}
\underbrace{
\begin{pmatrix}
{\left\langle \mb M\mb{v}, \mb v(0)\right\rangle} \\
{\left\langle \bm\zeta_1, \mb v(0)\right\rangle} \\
\vdots \\
\left\langle {\bm\zeta}_{n}, \mb v(0)\right\rangle
\end{pmatrix}
}_{\mb C_1(t)}
+
\underbrace{
\begin{pmatrix}
{\left\langle \nabla U(\mb q), \mb v(0)\right\rangle} \\
0 \\
\vdots \\
0
\end{pmatrix}
}_{\mb C_0(t)}
=
\mb J
\underbrace{
\begin{pmatrix}
{\left\langle \mb{v}, \mb v(0)\right\rangle} \\
{\left\langle \bm\zeta_1, \mb v(0)\right\rangle} \\
\vdots \\
\left\langle {\bm\zeta}_{n}, \mb v(0)\right\rangle
\end{pmatrix}
}_{\mb C_2(t)},
\label{eq:correlation_evolution}
\end{equation} 
where the correlation matrices $\left\langle \bm\zeta_i(t), \mb v(0)\right\rangle$ can be directly obtained from the correlation matrix of the resolved variables $\left\langle \mb v(t), \mb v(0)\right\rangle$ given the encoder weights, i.e.,
\begin{equation}
\left\langle \bm\zeta_i(t),  \mb v(0)\right\rangle = \sum_{j=1}^{N_w} \mb w_i(t_j)\left\langle  \mb v(t-t_j), \mb v(0)\right\rangle,
\nonumber
\end{equation}
where $t_j = j\delta t$ and encoder weights
$\mb w_i(t_j)$ will be learned jointly. Therefore, we are able to construct $\mb J$ from the pre-computed, noise-free correlation matrices instead of the on-the-fly computation from the time-series samples of $\mb q$ and $\mb p$. The reduced model can be trained by minimizing the following loss function 
\begin{equation}
\begin{split}
L_{C} &=  \sum_{j=1}^{N_t}\left\Vert {\mb C_1}'(t_j) + \mb C_0(t_j) -
\mb J {\mb C_2}(t_j)\right\Vert^2 ~
L_{\Lambda} = \left\Vert \bm \Lambda - \mb I\right \Vert^2, \\
L &= \lambda_{\Lambda} L_{C} +  \lambda_{\Lambda} L_{\Lambda}, 
\end{split}
\label{eq:loss_function}
\end{equation}
where $\lambda_C$ and $\lambda_{\Lambda}$ are the hyperparameters. $t_j$ refers to the discrete time points and $N_t$ represents the total number of sample points of the correlation matrices obtained from the full model.  The loss term $L_C$ ensures that the non-local statistical properties of the resolved variables can be accurately preserved while the loss term $L_{\Lambda}$ ensures the aforementioned orthogonality among the non-Markovian features. To simulate the constructed model, we always set $\hat{\bm \Lambda} = \mb I$ such that $\mb J$ in form of Eq. \eqref{eq:J_SPD} strictly satisfies the semi-positive definiteness condition.
We emphasize that the non-Markovian encoder weights $\left\{\mb w_i(t_j)\right\}_{j=1}^{N_w}$ do not explicitly appear in the loss function. However, they are involved in the training process along with $\mb J$ since the correlation functions $\mb C_1$ and $\mb C_2$ further depend on the definition of $\bm \zeta_i$, i.e., they are concurrently learned for the best approximation of the extended Markovian dynamics of $\left[\mb q; \mb p; \bm \zeta\right]$.  As shown in Sec. \ref{sec:numerical_result}, this joint learning of both the non-Markovian features and the dynamic equations enables us to probe the optimal representation of the reduced models that leads to more accurate numerical results than the ones constructed by the pre-selected bases, and can be easily implemented for models with multi-dimensional resolved variables. In this study, we choose $N_t = 5000$ for all the numerical examples and choose $N_w = 1800$ for the one-dimensional reduced model and $1200$ for the four-dimensional reduced model, respectively.  The training is conducted by the ADAM optimization algorithm \cite{Kingma_Ba_Adam_2015} where $1000$ points are randomly selected per each training step.

We conclude this subsection with the following remarks: (\Rmnum{1}) Instead of Eq. \eqref{eq:correlation_evolution}, the reduced dynamics can be also cast into the evolution of the correlation matrices by multiplying $\mb q(0)$ to both sides of Eq. \eqref{eq:SDE_simple}. For the present study, we found that both formulations yield accurate reduced models. (\Rmnum{2}) Rather than learning the full sets of non-Markovian features, we can fix one of them as $\mb M \dot{\mb v} + \nabla U(\mb q)$; this ensures that the time-derivatives of correlation functions of the reduced model can accurately match the values of the full model near $t = 0$.  In the numerical examples presented in following Sec. \ref{sec:numerical_result}, all the reduced models are constructed with this choice. For simple notation, we set it to be the last feature. For example, the fourth-order reduced model is constructed using $4$ non-Markovian features, where $\bm\zeta_4$ is set to be $\mb M \dot{\mb v} + \nabla U(\mb q)$ and does not involve in the training process. 
(\Rmnum{3}) In principle, for reduced models of Hamiltonian systems, the form of matrix $\mb J$ can be further restricted to 
\begin{equation}
\mb J = - \rm{diag}(0, \hat{\mb L} {\hat{\mb L}}^T) + \mb J^A,
\label{eq:J_SPD_2}
\end{equation}
where $\hat{\mb L} \in \mathbb{R}^{nm\times nm}$ is a lower-triangle matrix. Eq. \eqref{eq:J_SPD_2} ensures that the embedded kernel in Eq. \eqref{eq:SDE_simple} does not contain the Markovian memory term (i.e., $\left(\mb J_{11} +\mb J_{11}^T\right) \delta(t)$). $\tilde{\bm \theta}(t)$ recovers the form of standard GLE and the second fluctuation-dissipation relationship shown in Eq. \eqref{eq:general_2nd_fdt} recovers the standard form, i.e., 
$\left\langle \Tilde{\mb{\mathcal{R}}}(t) \Tilde{\mb{\mathcal{R}}}(t')^T \right\rangle = \beta^{-1}\Tilde{\bm \theta}(t-t')$. In this study, both forms yield accurate numerical results; the contribution of the Markovian memory term constructed by Eq. \eqref{eq:J_SPD} is less than $1\%$.

\section{Numerical results}
\label{sec:numerical_result}
\subsection{A tagged particle in aqueous environment}
We demonstrate our method by modeling a tagged particle immersed in solvent particles. The particle interaction is governed by the pairwise force
\begin{align*}
    \mathbf{F}_{ij} (\mb Q_{ij}) = \begin{cases}
    f_0(1-Q_{ij}/r_c)\mathbf{e}_{ij}, \textrm{ }&Q_{ij}\le r_c\\
    0,&Q_{ij}>r_c
    \end{cases}
\end{align*}
where $\mb Q_i$ and $\mb Q_j$ are the positions of $i\mhyphen$th and $j\mhyphen$th particles. $\mathbf{Q}_{ij} = \mathbf{Q}_i-\mathbf{Q}_j$, $Q_{ij}=\|\mathbf{Q}_i-\mathbf{Q}_j\|$, and $\mathbf{e}_{ij} = \frac{\mathbf{Q}_{ij}}{Q_{ij}}$, and $r_c$ is the cut-off distance. The full system consists of $4000$ particles in a $10\times10\times10$ cubic box with periodic boundary condition along each direction. We set $f_0 = 50$, $r_c = 1$, and impose Nos\'{e}-Hoover thermostat with $k_BT = 0.5$.

\begin{figure}
\centering
	\subfigure[]{
	\centering
		\hspace*{0em}\includegraphics[scale=0.45]{./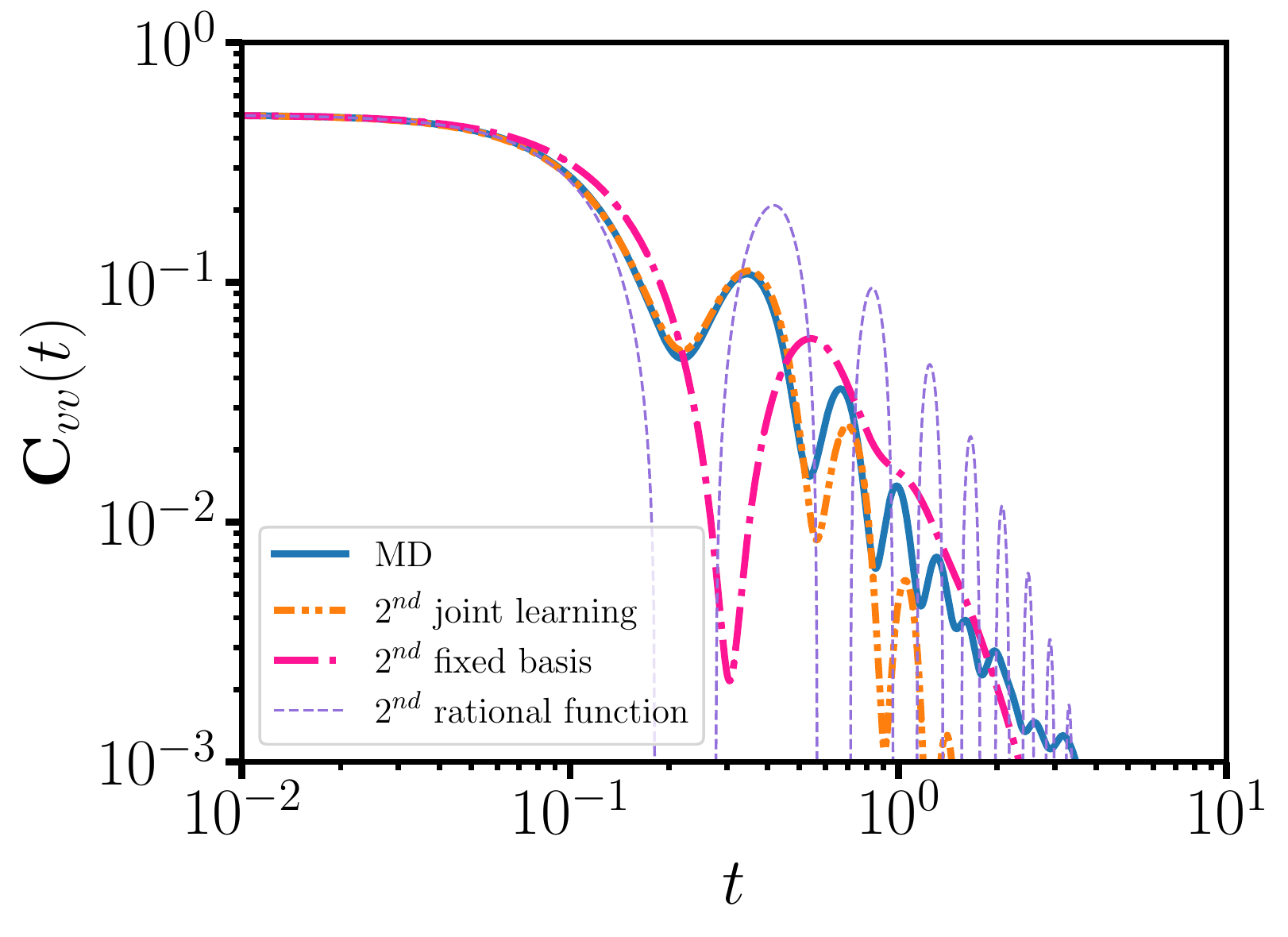}
	}
	\subfigure[]{
	\centering
		\hspace*{0em}\includegraphics[scale=0.45]{./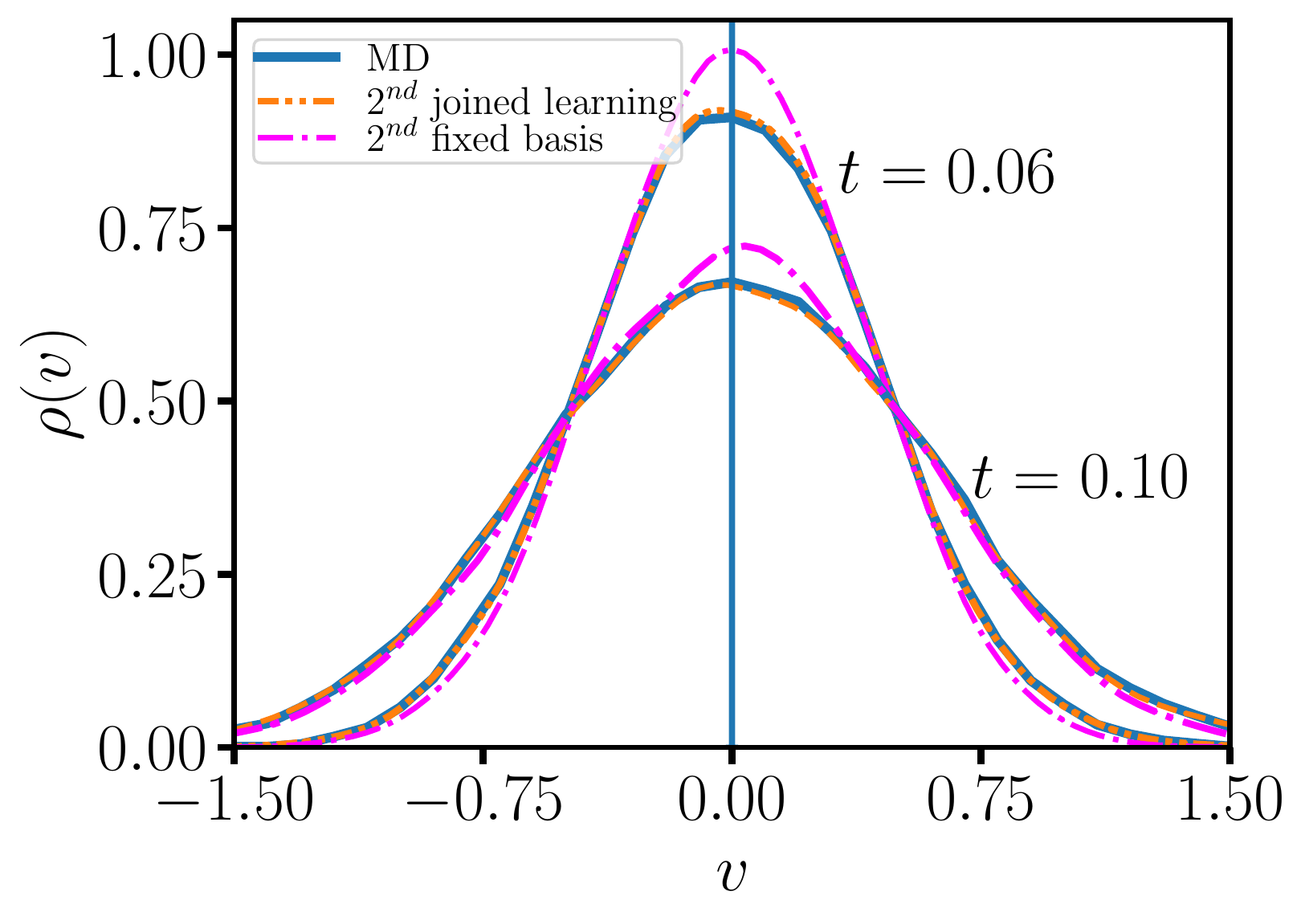}
	}
	\subfigure[]{
	\centering
		\hspace*{0em}\includegraphics[scale=0.45]{./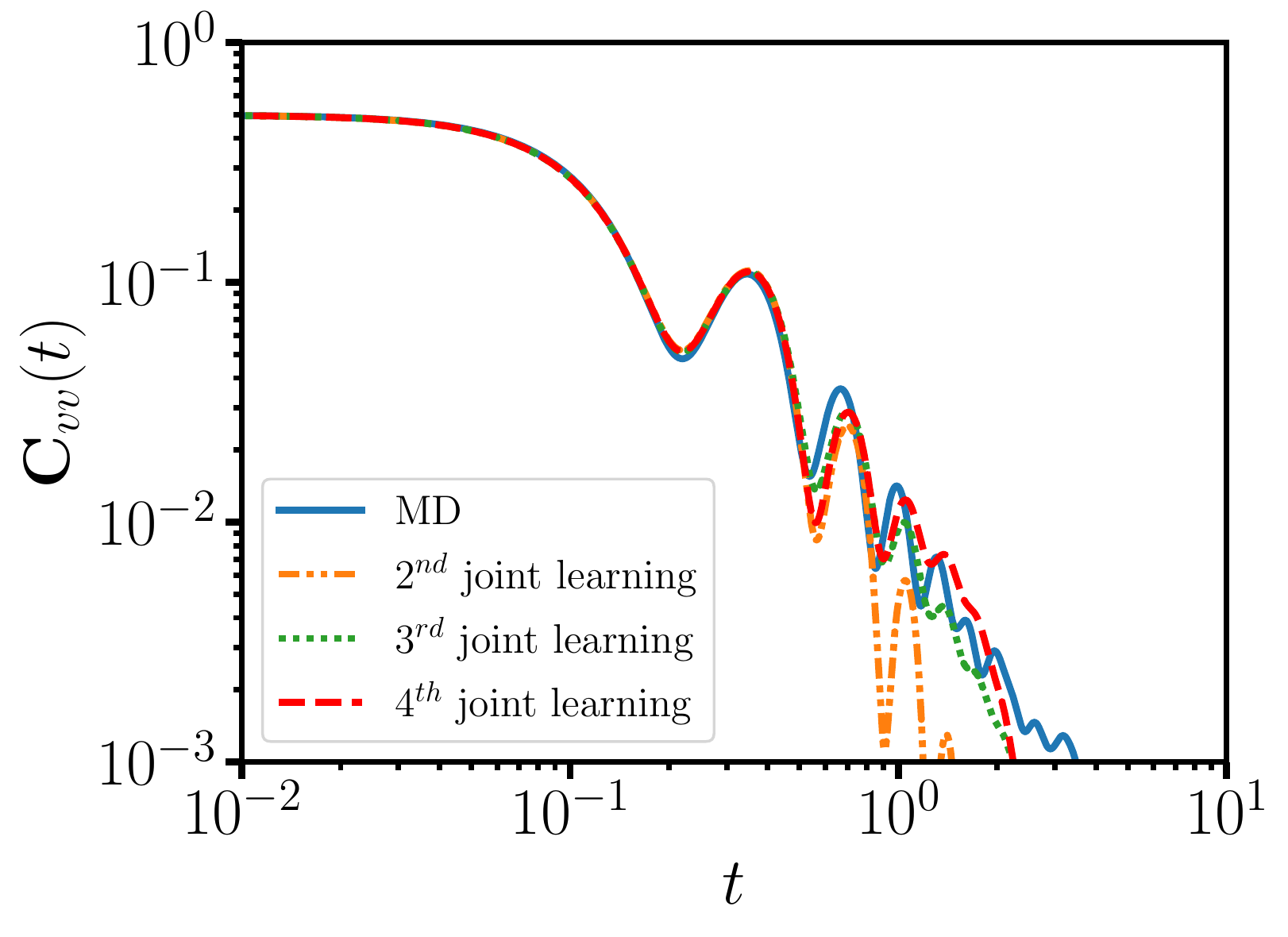}
	}
	\subfigure[]{
	\centering
		\hspace*{0em}\includegraphics[scale=0.45]{./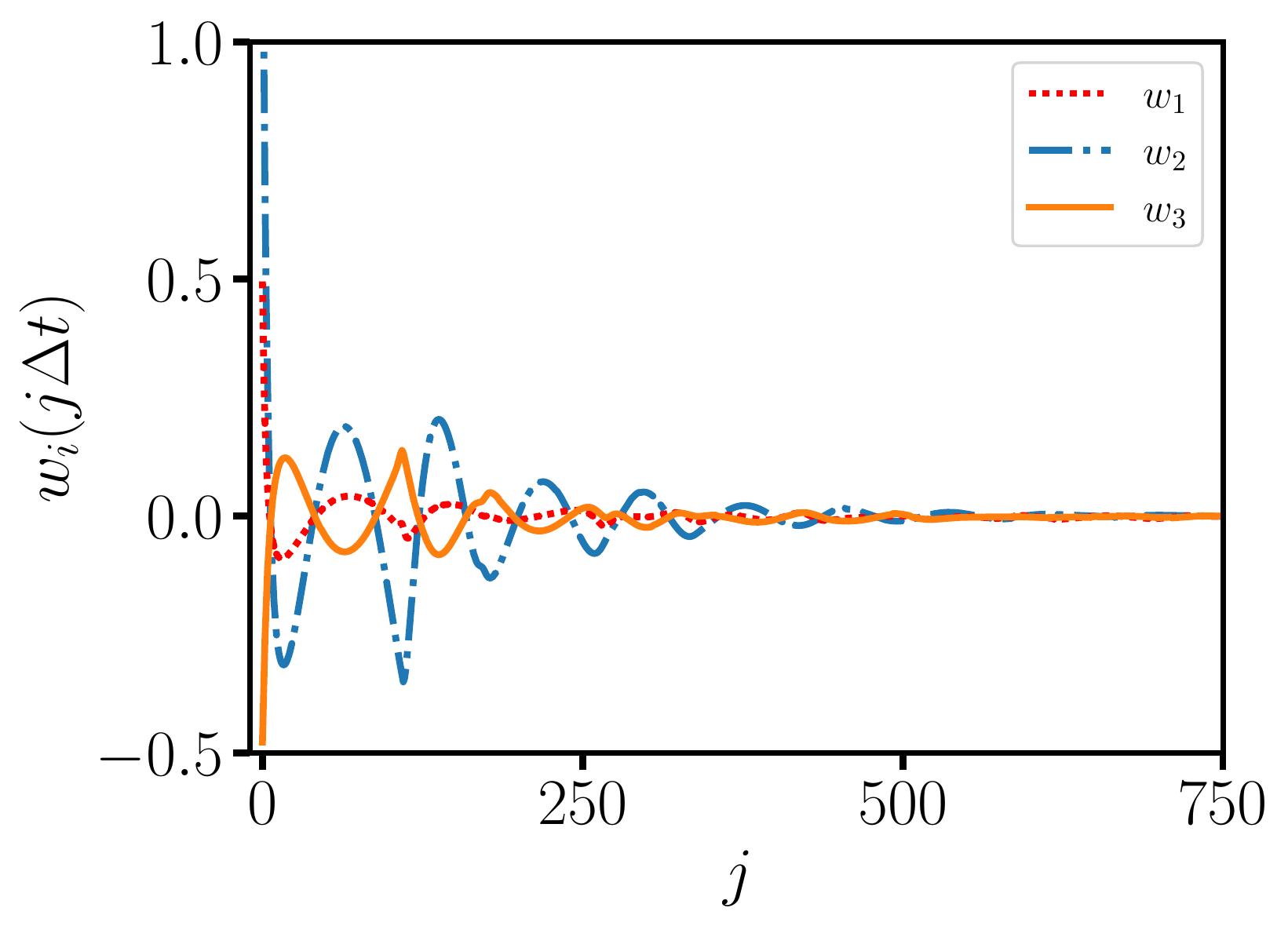}
	}
	\caption{Numerical results for a tagged particle in the solvent particle bath. (a) Velocity correlation function $C_{vv}(t)$ obtained from the full MD model and the reduced models constructed by the present method based on the joint learning approximation, the rational function approximation \cite{Lei2016}, and the Petrov-Galerkin projection with fixed bases \cite{Lei_Li_JCP_2021}. 
	(b) Predicted evolution of the probability density function of the particle velocity obtained from the full MD and the different reduced models with the second-order approximation. The initial velocity $v$ is set to $0$ (the vertical line). 
	(c)  $C_{vv}(t)$ obtained from the full MD model and different orders of the present joint learning approximation.
	(d) Encoder weights for the three non-Markovian features obtained from the present joint learning with the fourth-order approximation. }
	\label{fig:tag_particle}
\end{figure}

The reduced dynamics in terms of the tagged particle is constructed in form of Eq. \eqref{eq:SDE_simple} along with the learning of the non-Markovian features $\left\{\bm \zeta_i\right\}_{i=1}^n$. The free energy $U(\mb q)$ vanishes for this case. For comparison, we also construct the reduced model using the previous approaches based on the Petrov-Galerkin projection (named as fixed-basis) \cite{Lei_Li_JCP_2021} and the rational function approximation \cite{Lei2016}.  Fig. \ref{fig:tag_particle}(a) shows the velocity correlation function of constructed models using two non-Markovian features, or equivalently, two projection bases, as well as the second-order rational function approximation. The model constructed by the present (named as the joint-learning)  method shows the best agreement with the full model based on the molecular dynamics (MD) simulations.  The model accuracy can be further examined by the evolution of probability density function (PDF) of the particle velocity. Specifically, we fix the velocity to be zero as $t = 0$ and sample the instantaneous PDF thereafter. Fig. \ref{fig:tag_particle}(b) shows the obtained PDF at $t = 0.06$. The present approach yields more accurate result than the Petrov-Galerkin method. As shown in Fig. \ref{fig:tag_particle}(c), the accuracy of the reduced model shows further improvement as we increase the number of non-Markovian features. In particular, the reduced model with the fourth-order approximation can accurately capture the oscillations over the full regime. Fig. \ref{fig:tag_particle}(d) shows the obtained encoder weights of the fourth-order approximation. All of the three encoder functions show pronounced oscillations near $t=0$ and decay to $0$ for large $t$. Unlike the empirically chosen fractional-derivative bases in Ref. \cite{Lei_Li_JCP_2021}, the present method enables the encoders to be optimized for the best approximation of the non-local statistics, and therefore yields more accurate results. 

\begin{figure*}[htbp]
\center
\includegraphics[scale=0.3]{././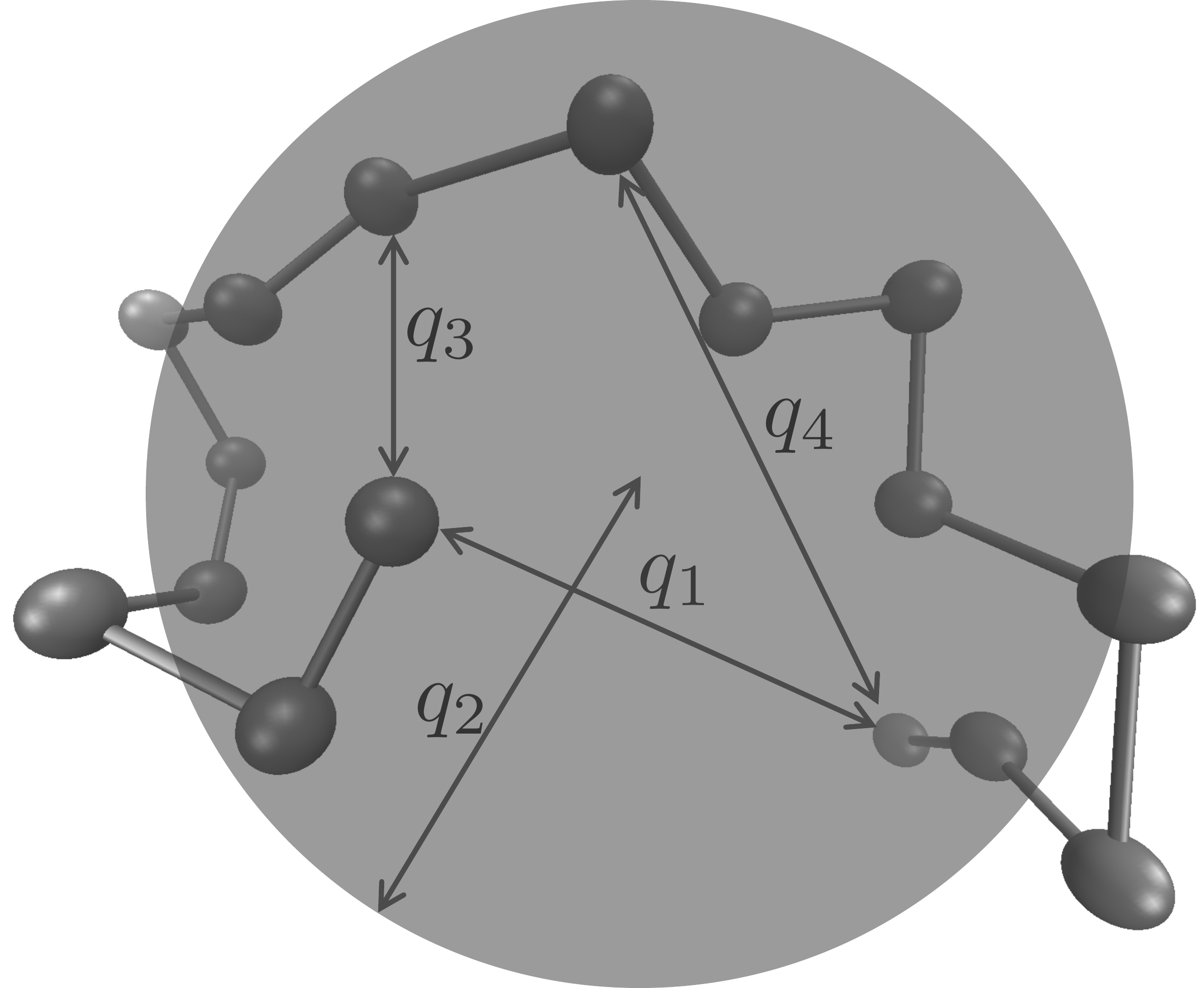}
\caption{A sketch of a chain-molecule represented by united atom model. Reduced models are constructed with respect to a four-dimensional resolved vector $\mb q$, which represents the end-to-end distance ($q_1$), the radius of gyration ($q_2$), and the end-to-middle distances ($q_3$ and $q_4$), respectively.}
\label{fig:molecule_sketch}
\end{figure*}

\subsection{One-dimensional reduced model of a polymer molecule}
We consider the reduced dynamics of a polymer molecule consisting of $N=16$ atoms. The intramolecular potential is governed by 
\begin{equation}
    V_{\rm mol}(\mb Q) = \sum_{i\neq j}^{N}V_{\rm p}(Q_{ij})+\sum_{i=1}^{N_b}V_{\rm b}(l_i)+\sum_{i=1}^{N_a}V_{\rm a}(\theta_i)+\sum_{i=1}^{N_d}V_{\rm d}(\phi_i),
\end{equation}
where $l_i$, $\theta_i$, $\phi_i$ represent the individual bond length, bond angle, and dihedral angle, respectively.
$V_{\rm p}$,  $V_{\rm b}$, $V_{\rm a}$, and $V_{\rm d}$ represent the pairwise Lennard-Jones, finite extensible nonlinear elastic bond, harmonic angle, and multiharmonic dihedral interactions whose explicit forms are specified in Appendix \ref{app:MD_polymer}. The atom mass is set to unit, and thermal temperature $k_BT$ is set to $1.0$. Fig. \ref{fig:molecule_sketch} shows a sketch of the polymer molecule. 

To examine the effectiveness of the present method, we first construct a 1D reduced dynamics in terms of the end-to-end distance $q_1 = \left\Vert \mb Q_1 - \mb Q_N\right\Vert$ as done in the previous work \cite{Lei_Li_JCP_2021} based on the Petrov-Galerkin method, and compare the numerical results obtained from the two methods.   
Figure \ref{fig:1D_molecule_system}(a) shows the velocity correlation function $C_{vv}(t) = \left\langle v_1(t) v_1(0)\right\rangle$ obtained from the full MD and different orders of fixed-basis and joint-learning approximations. With the same order of approximation, the current method yields better agreement with the MD results. Specifically, the second-order model of the current method  can capture the pattern around $t=4$ and the fourth-order model can capture the patterns around $t=0.4$ and $t = 4$. However, the previous method with the same order approximation shows limitation to accurately capture these two patterns. 

\begin{figure}
\centering
	\subfigure[]{
	\centering
		\hspace*{0em}\includegraphics[scale=0.45]{./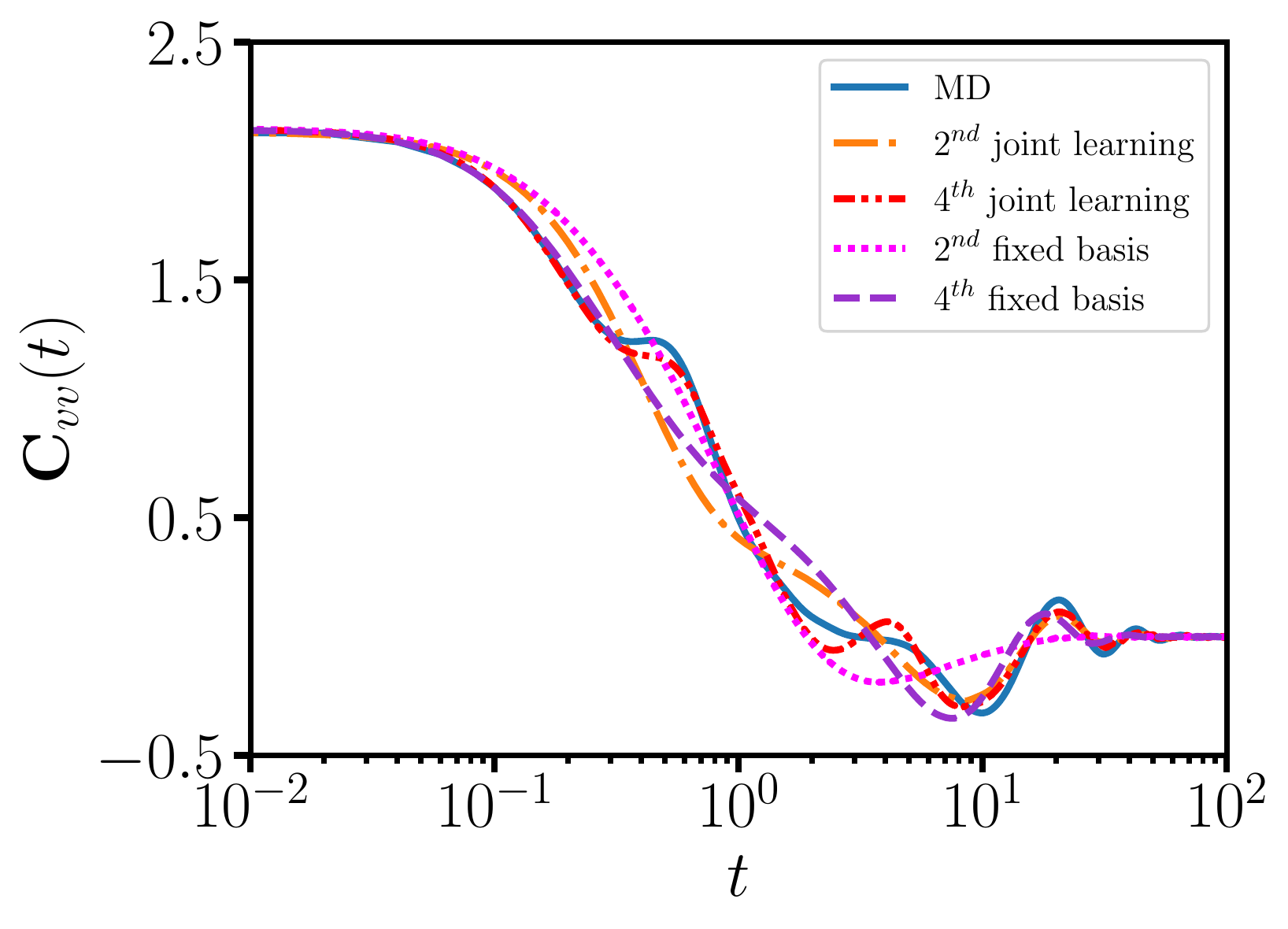}
	}
	\subfigure[]{
	\centering
		\hspace*{0em}\includegraphics[scale=0.45]{./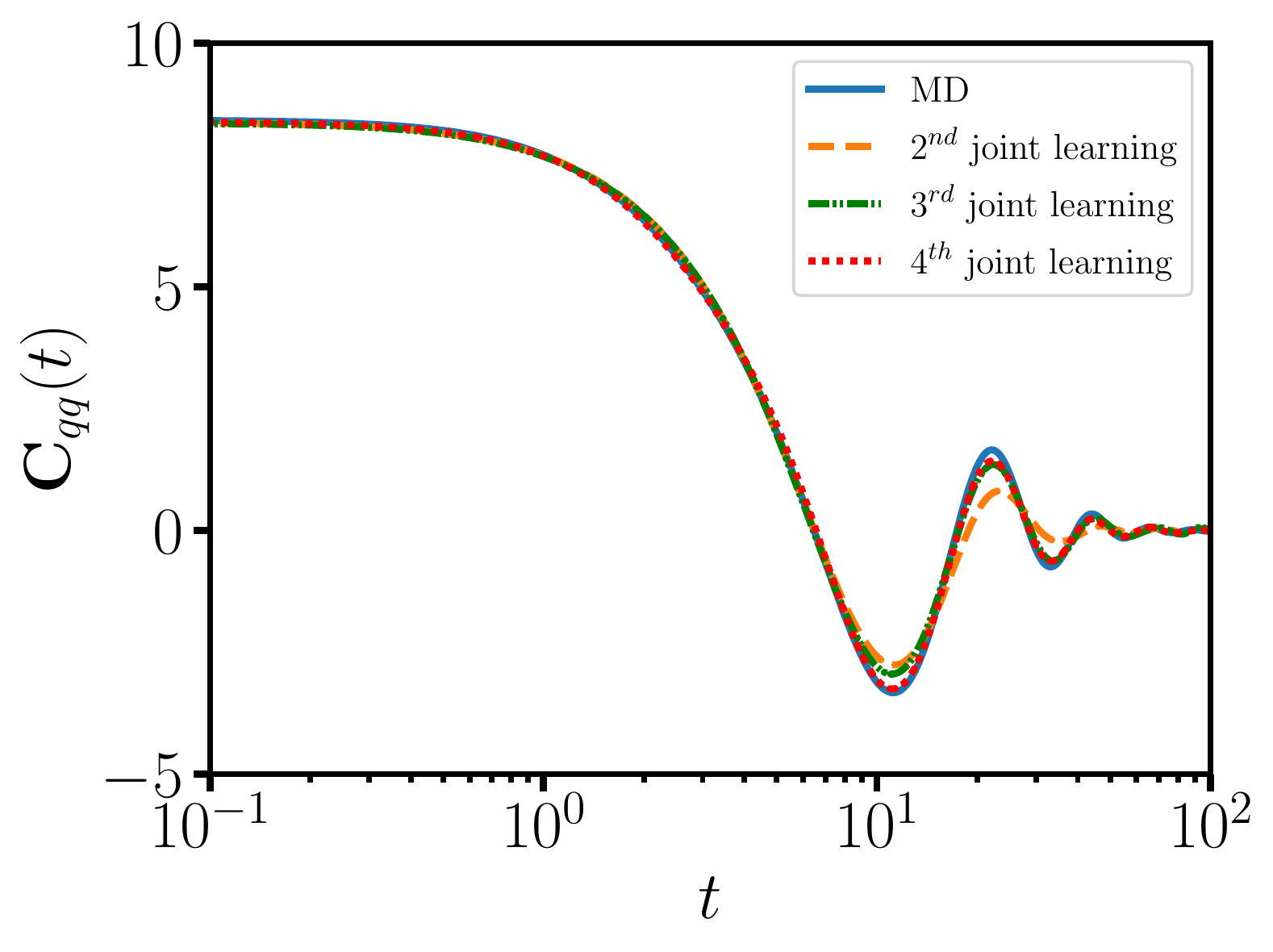}
	}
	\subfigure[]{
	\centering
		\hspace*{0em}\includegraphics[scale=0.45]{./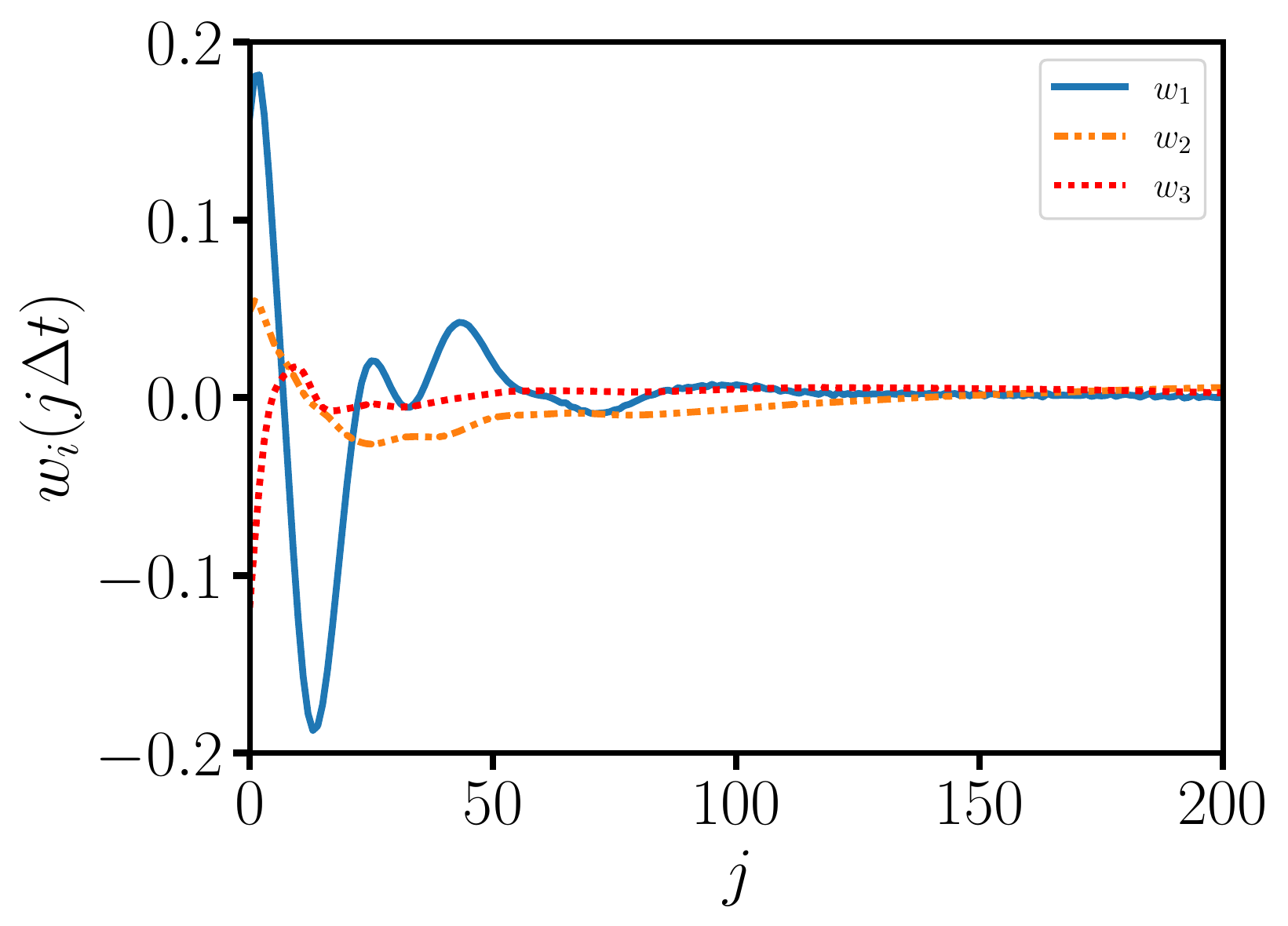}
	}
	\subfigure[]{
	\centering
		\hspace*{0em}\includegraphics[scale=0.45]{./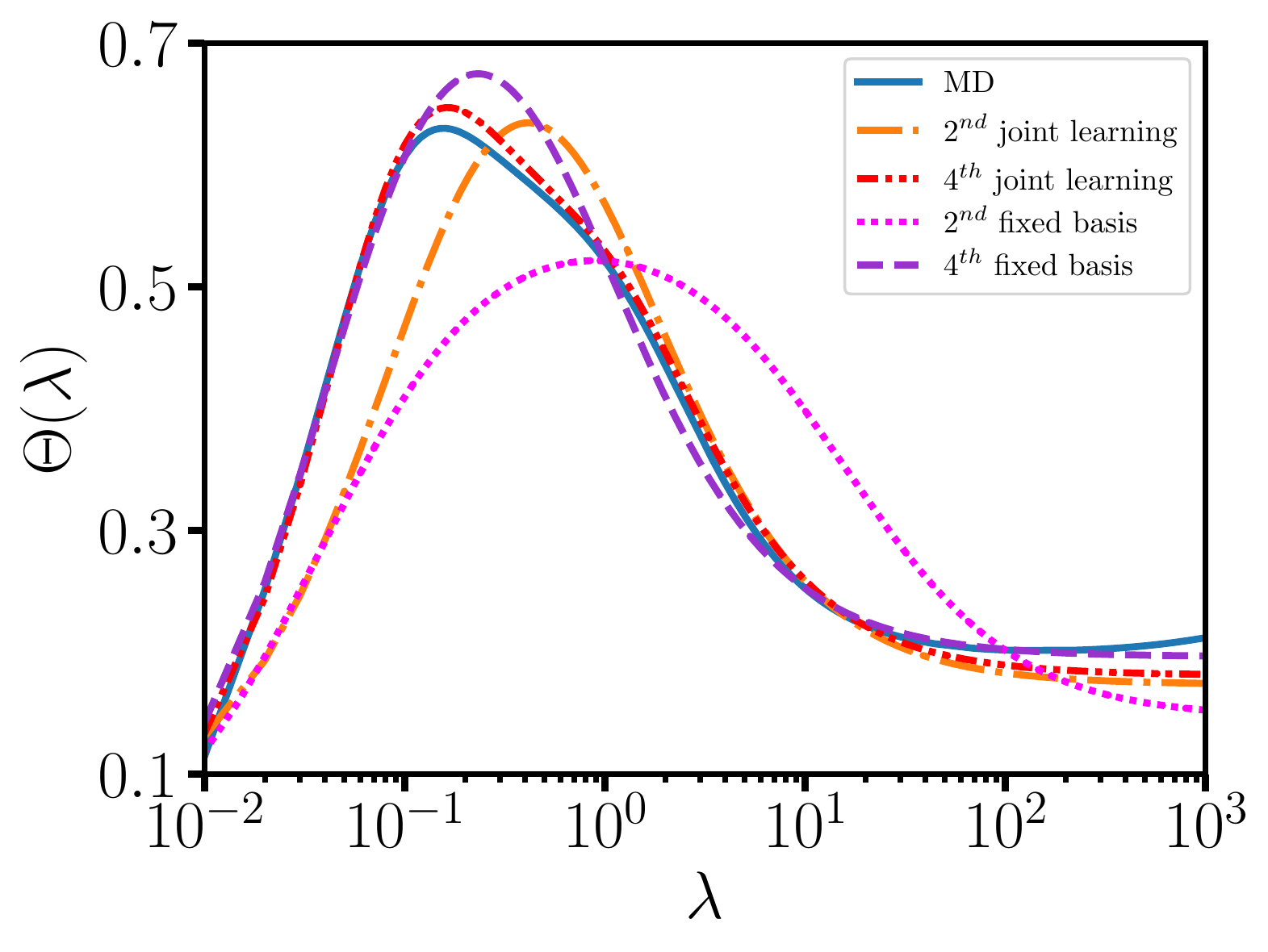}
	}
	\caption{Numerical results of a one-dimensional reduced model representing the dynamics of the end–end distance of a polymer molecule system. (a-b) Velocity correlation function $C_{vv}(t)$ and the Laplace transform of the memory function $\Theta(\lambda)$ obtained from the full MD simulations and the different orders of the present joint learning approximation, and the Petrov-Galerkin projection with fixed basis approximation. (c) Displacement correlation function $C_{qq}(t)$ obtain form the full MD and different orders of the joint learning approximation. (d) Encoder weights for the three non-Markovian features of the reduced model with the fourth-order approximation.}
	\label{fig:1D_molecule_system}
\end{figure}

Figure \ref{fig:1D_molecule_system}(b) shows the displacement auto-correlation function 
$C_{qq}(t) = \left\langle q_1(t) q_1(0)\right\rangle$
obtained from full MD and the reduced models constructed by the present method with different number of non-Markovian features. As we introduce more features, the predicted correlation functions approaches the MD results. In particular, the fourth-order model can capture the oscillations of the MD results at $t = 10$ and $t= 25$. Figure \ref{fig:1D_molecule_system}(c) shows the encoder weights of non-Markovian features for the fourth-order approximation. Similar to the tagged particle system, the encoder functions exhibit pronounced oscillations at the short time and decay to zero at longer time.

The accuracy of the constructed reduced models can be further examined by comparing the embedding memory kernels $\tilde{\bm\theta}(t)$ with the full MD model. Figure \ref{fig:1D_molecule_system}(d) shows the Laplace transform of the memory kernel of the reduced models $\tilde{\bm \Theta}(\lambda) = \int_0^{+\infty} \tilde{\bm\theta}(t) \exp{(-t/\lambda)} \diff t$. The MD kernel $\bm \Theta(\lambda)$ is obtained by $\bm \Theta(\lambda) = -\mb G(\lambda) \mb H(\lambda)^{-1}$, where $\mb G(\lambda)$ and $\mb H(\lambda)$ are the Laplace transform of the correlation matrices $\mb g(t) = \left\langle \mb M \dot{\mb v}(t) + \nabla U(\mb q), \mb q(0)\right\rangle$ and $\mb h(t) = \left\langle \mb v(t), \mb q(0)\right\rangle$. Compared with the previous method, the current method yields better agreement with MD results. Specifically, the second$\mhyphen$ and fourth-order of the joint learning approximation, and the fourth$\mhyphen$order of the fixed basis approximation show good agreement with the MD result $\bm \Theta(\lambda)$ for $\lambda$ between $1$ and $1000$. Furthermore, the fourth-order model of the joint learning approximation can further capture the  pronounced peak regime of the MD results near $\lambda=0.1$.  We emphasize that the \emph{memory kernel $\tilde{\bm\theta}(t)$ is not explicitly constructed during the learning process}; $\tilde{\bm\theta}(t)$ approaches $\bm\theta(t)$ as we impose the constraint \eqref{eq:correlation_evolution} such that the correlation matrices of the reduced dynamics match the ones of the full model. This enables us to circumvent the direct fitting of the matrix-valued memory function for multi-dimensional GLEs, and efficiently construct the numerically-stable reduced model that retains the non-local statistics and coherent noise as shown in the following example.

\subsection{Four-dimensional reduced model of a polymer molecule}
Finally, we construct a reduced model in terms of a four-dimensional resolved vector $\mb q = [q_1, q_2, q_3, q_4]$ defined by 
\begin{equation}
\begin{split}
q_1 &= \left\Vert \mb Q_1 - \mb Q_N\right\Vert, \\
q_2^2 &= \frac{1}{N}\sum_{i=1}^{N}\Vert\mathbf{Q}_i-\mathbf{Q}_c\Vert^2,  \quad \mathbf{Q}_c= \frac{1}{N}\sum_{i=1}^{N}\mathbf{Q}_i, \\
q_3 &=\left\Vert \mathbf{Q}_{\lfloor{\frac{N}{2}}\rfloor}-\mathbf{Q}_1\right\Vert, \\
q_4 &=\left\Vert \mathbf{Q}_{\lceil{\frac{N}{2}}\rceil}-\mathbf{Q}_N\right\Vert,
\end{split}
\label{eq:resolved_variables}
\end{equation}
where $q_1$, $q_2$, $q_3$, and $q_4$ represent the end-to-end distance, radius of gyration, and two center-to-end distances, respectively. The four-dimensional free energy function $U(\mb q)$ is constructed by matching the average force sampled from the constraint molecular dynamics and represented by a neural network; we refer to Appendix \ref{app:free_energy} for details. 
Rather than constructing the four-dimensional GLE kernel $\bm\theta(t)$, we directly learn the reduced model \eqref{eq:SDE_simple} by minimizing the loss function \eqref{eq:loss_function}. 

\begin{figure}
\centering
	\subfigure[]{
	\centering
		\includegraphics[scale=0.45]{./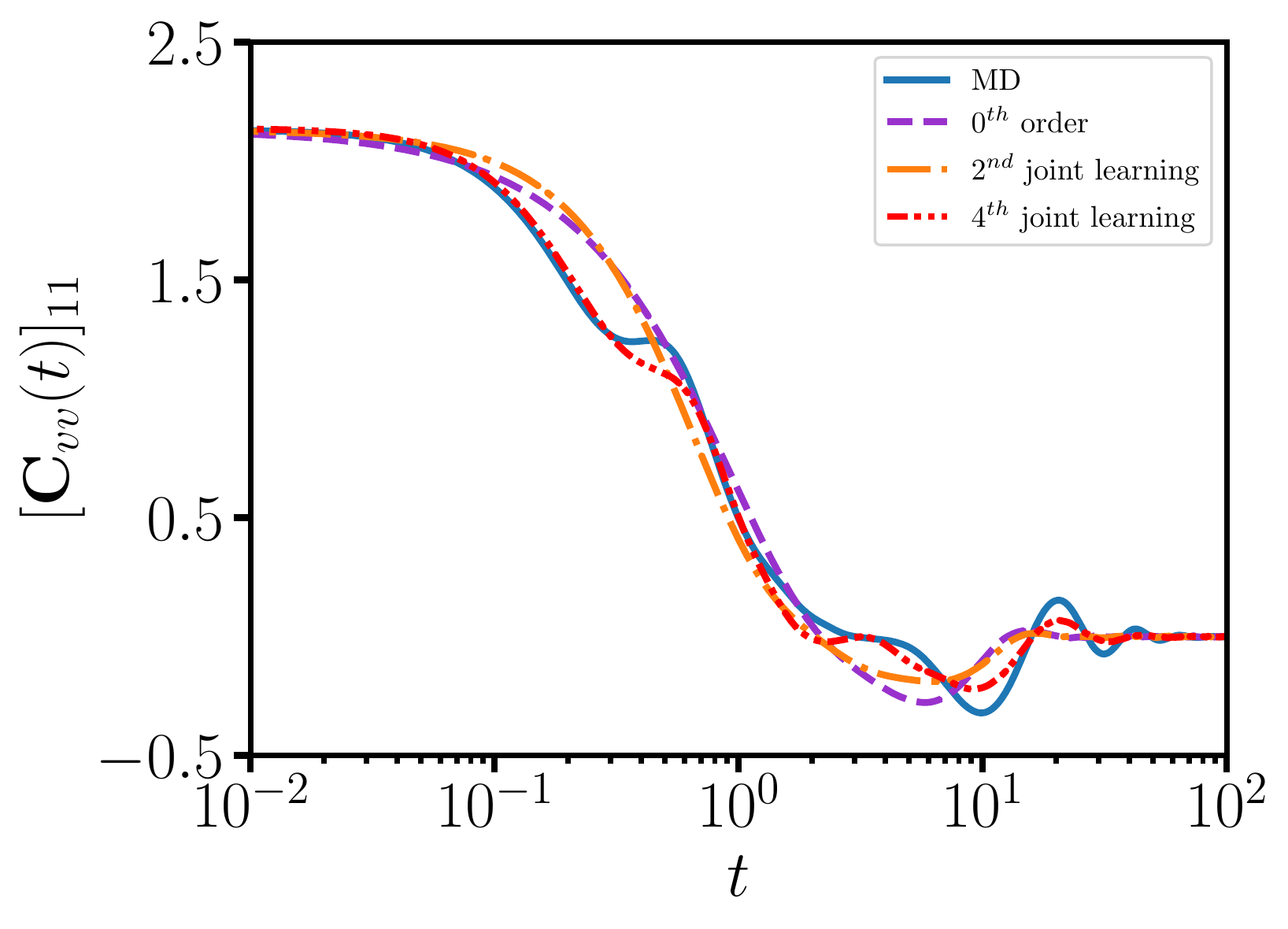}
	}
	\subfigure[]{
	\centering
		\includegraphics[scale=0.45]{./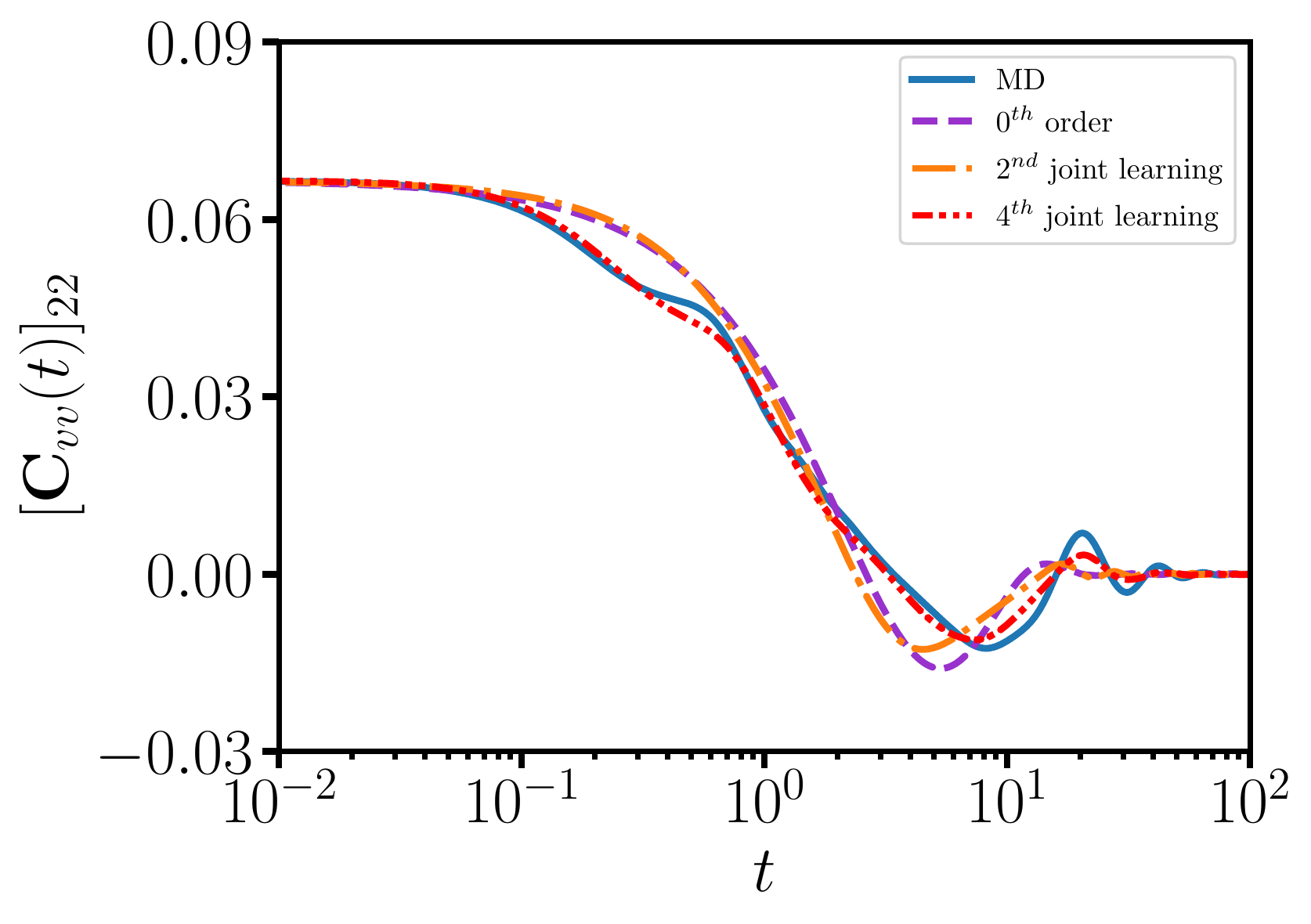}
	}
	\subfigure[]{
	\centering
		\includegraphics[scale=0.45]{./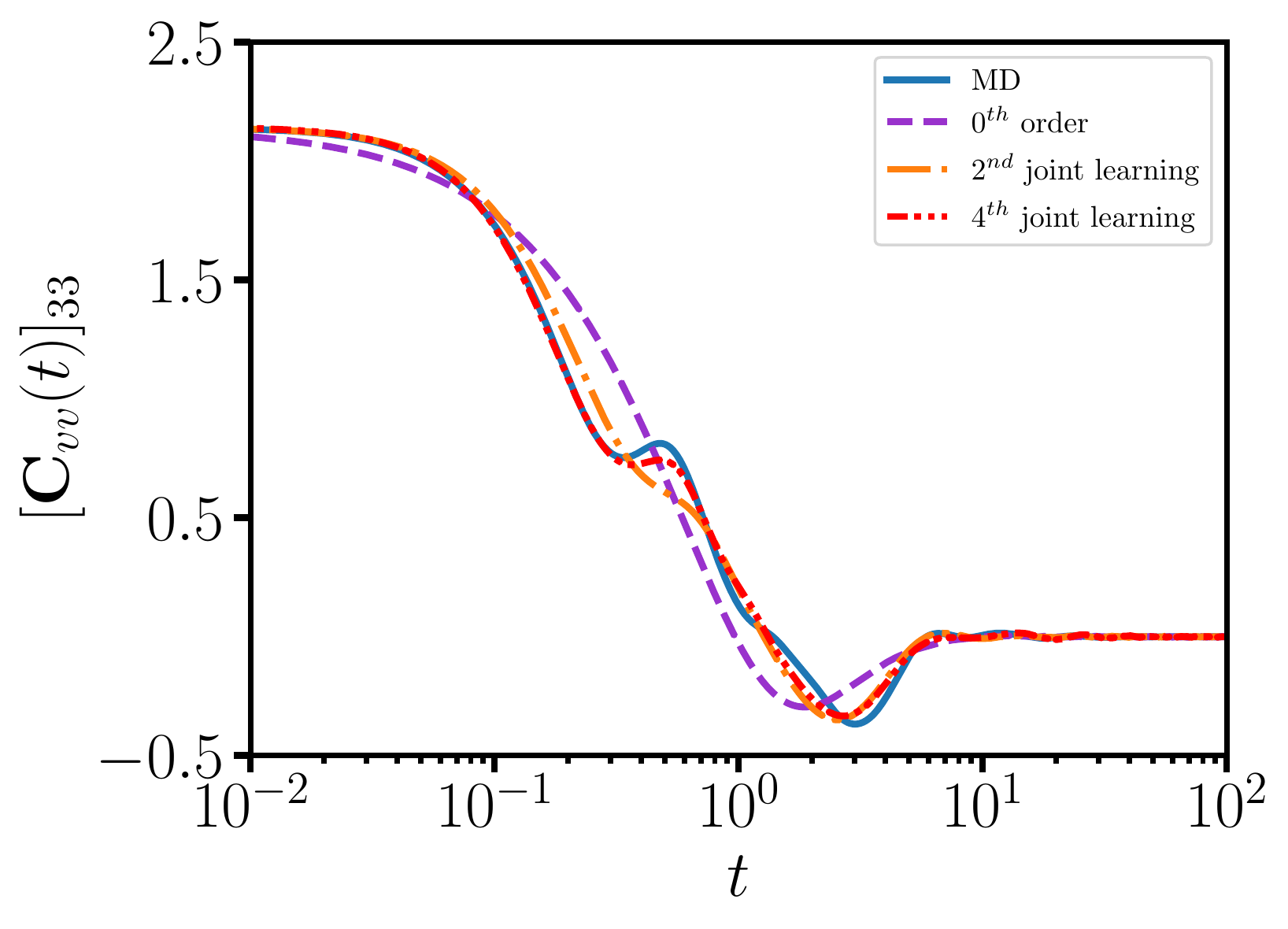}
	}
	\subfigure[]{
	\centering
		\includegraphics[scale=0.45]{./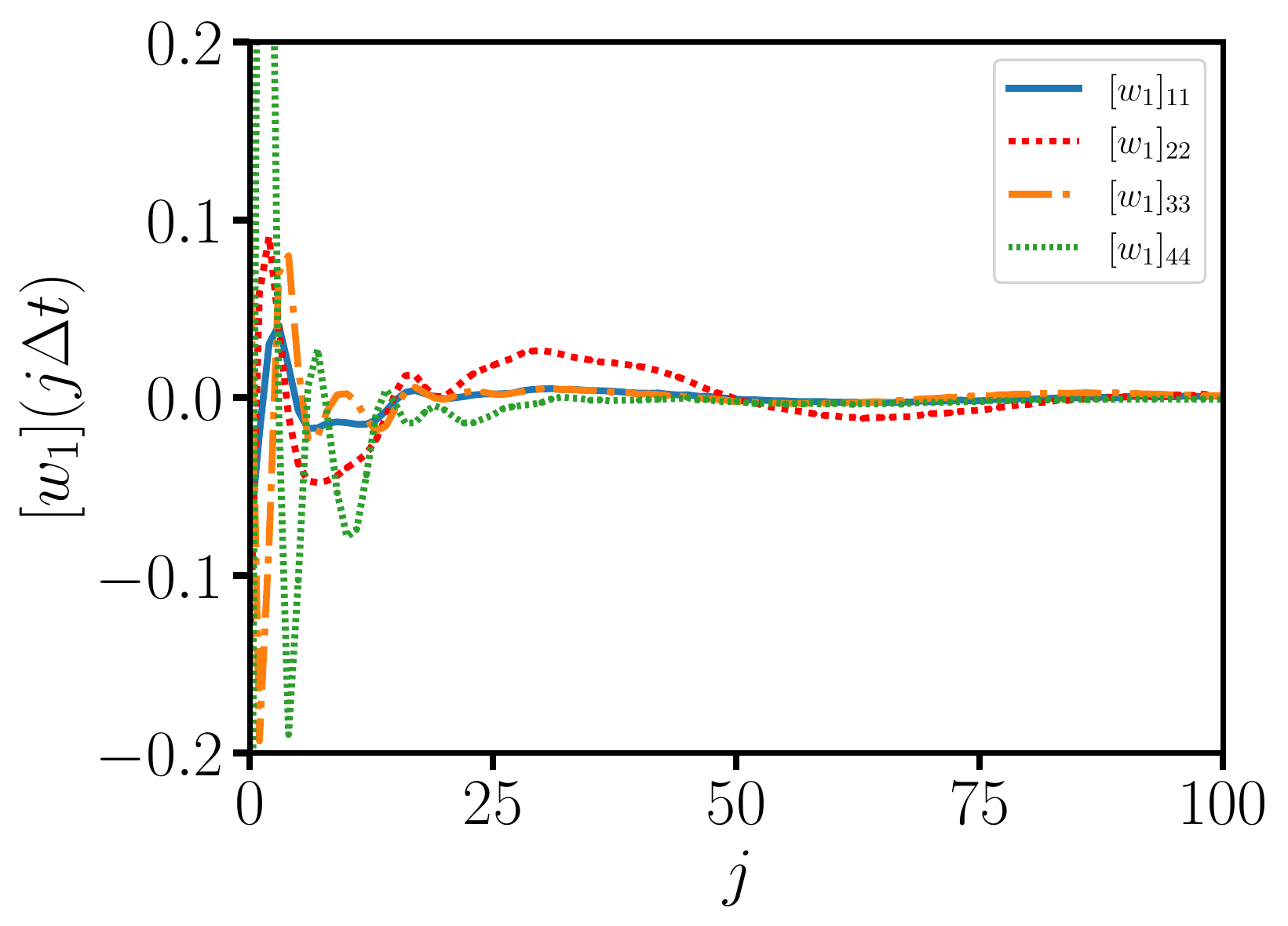}
	}
	\caption{Numerical results of a four-dimensional reduced model representing the dynamics of the of a polymer molecule system with conformation states characterized by the resolved variables $\mb q$ defined by Eq. \eqref{eq:resolved_variables}. (a-c)  Diagonal components of velocity correlation function $\mb C_{vv}(t) = \left\langle \mb v(t) \mb v(0)^T\right\rangle$. $\left[\mb C_{vv}(t)\right]_{44}$ is not shown as it is similar to $\left[\mb C_{vv}(t)\right]_{33}$. (d) Constructed encoder weights of the first non-Markovian features $\bm\zeta_1$ for the reduced model of the fourth-order approximation.}
	\label{fig:4D_molecule_system_part_1}
\end{figure}

Figure \ref{fig:4D_molecule_system_part_1}(a-c) show the diagonal components of the velocity correlation matrix $\mb C_{vv}(t) = \left\langle \mb v(t) \mb v(0)^T\right\rangle$ obtained from the full MD and the reduced models using different order approximations. Specifically, the components $\left[\mb C_{vv}(t)\right]_{11}$ and $\left[\mb C_{vv}(t)\right]_{33}$ show similar values near $t= 0$ since both $q_1$ and $q_3$ characterize the distances between the individual particles, e.g., $v_1 = (\mb Q_1 - \mb Q_N)\cdot(\mb V_1 - \mb V_N)/\left\Vert \mb Q_1 - \mb Q_N\right\Vert$. As the velocities of the two free-end particles and the middle particles are nearly uncorrelated, the variances of both $v_1$ and $v_3$ are close to $2k_BT$. On the long-time scale, $\left[\mb C_{vv}(t)\right]_{11}$ and $\left[\mb C_{vv}(t)\right]_{22}$ decay much slower than $\left[\mb C_{vv}(t)\right]_{33}$ and $\left[\mb C_{vv}(t)\right]_{44}$ and show pronounced oscillations near $t = 10$ and $t=25$. The differences can be understood as follow: Compared with the end-to-middle distances $q_3$ and $q_4$, the end-to-end distance $q_1$ and radius of gyration $q_2$ represent the global states of the molecular conformation. Based on the scaling law of the idealized Gaussian chain model \cite{degennes_scaling}, the relaxation time of $q_1$ and $q_2$ is proportional to $N^2$. Accordingly, $\left[\mb C_{vv}(t)\right]_{11}$ decays four times slower than $\left[\mb C_{vv}(t)\right]_{33}$, which is qualitatively consistent with the present numerical results.  

The transient dynamics of the correlation functions can be accurately captured by the reduced model. As we increase the number of non-Markovian features, the predictions show better agreement with MD results. Specifically, the zeroth-order (i.e., Langevin) model is insufficient to capture the patterns around $0.5$ and $5$. The second-order model yields an accurate prediction for $\left[\mb C_{vv}(t)\right]_{33}$ but less accurate predictions for $\left[\mb C_{vv}(t)\right]_{11}$ and $\left[\mb C_{vv}(t)\right]_{22}$. The fourth-order model yields good agreement for all the components over the full regime. Fig. \ref{fig:4D_molecule_system_part_1}(d) shows the encoder weights of the first non-Markovian feature $\bm\zeta_1$, which naturally encode the non-local statistics among the resolved variables, and decay to $0$ at large time.

\begin{figure}
\centering
	\subfigure[]{
	\centering
		\hspace*{0em}\includegraphics[scale=0.45]{./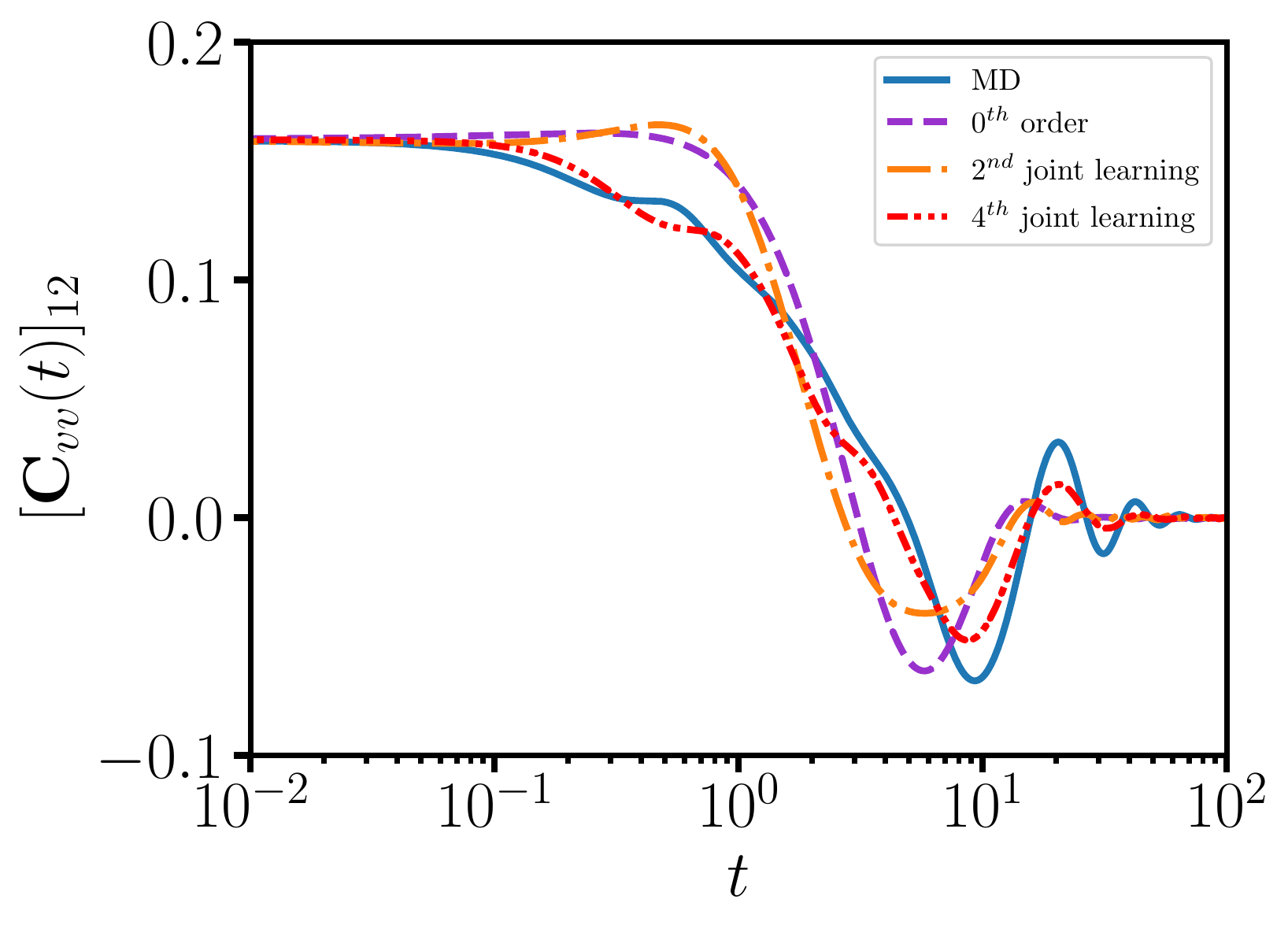}
	}
	\hfill
	\subfigure[]{
	\centering
		\hspace*{0em}\includegraphics[scale=0.45]{./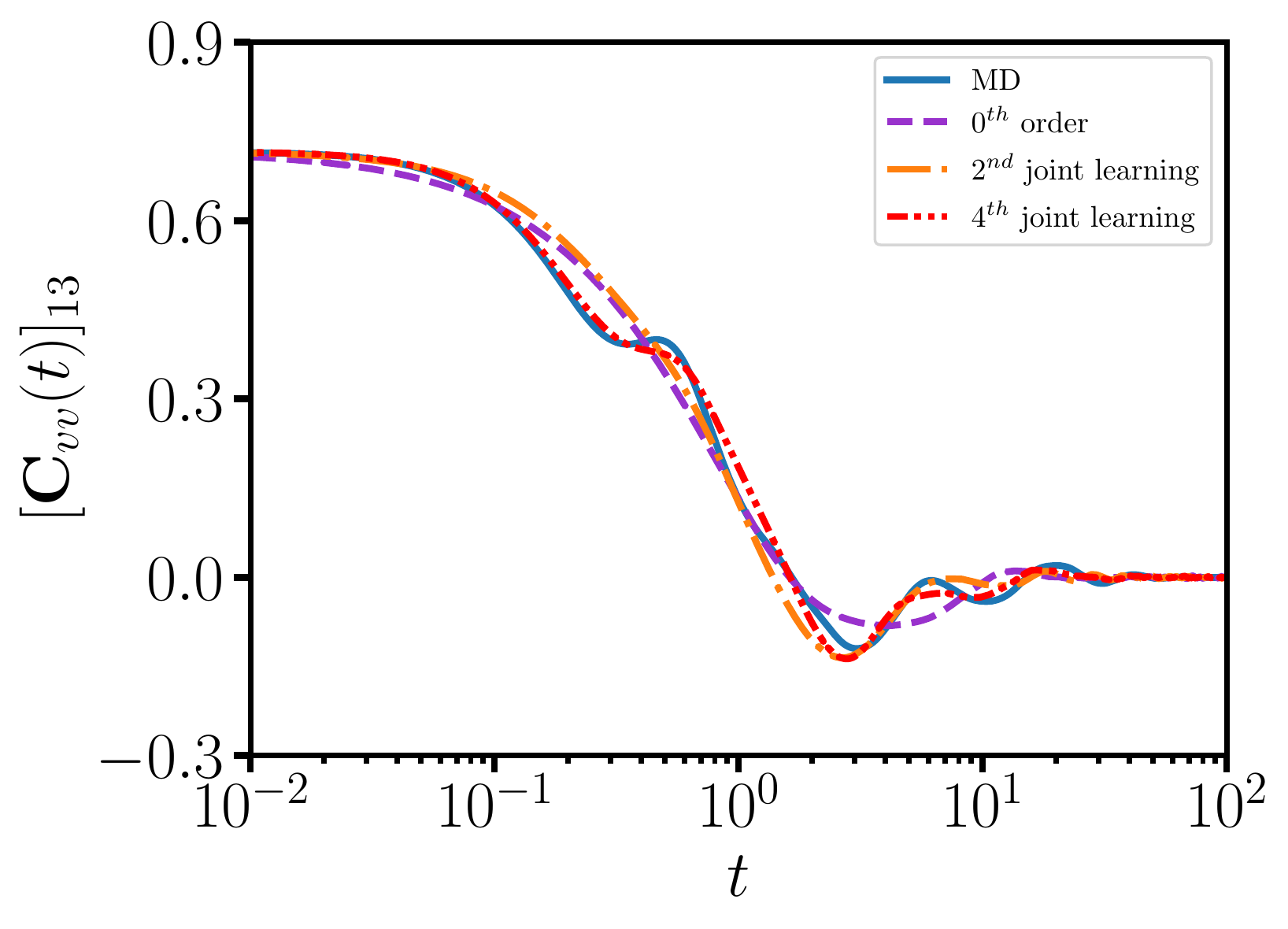}
	}
	\hfill
	\subfigure[]{
	\centering
		\hspace*{0em}\includegraphics[scale=0.45]{./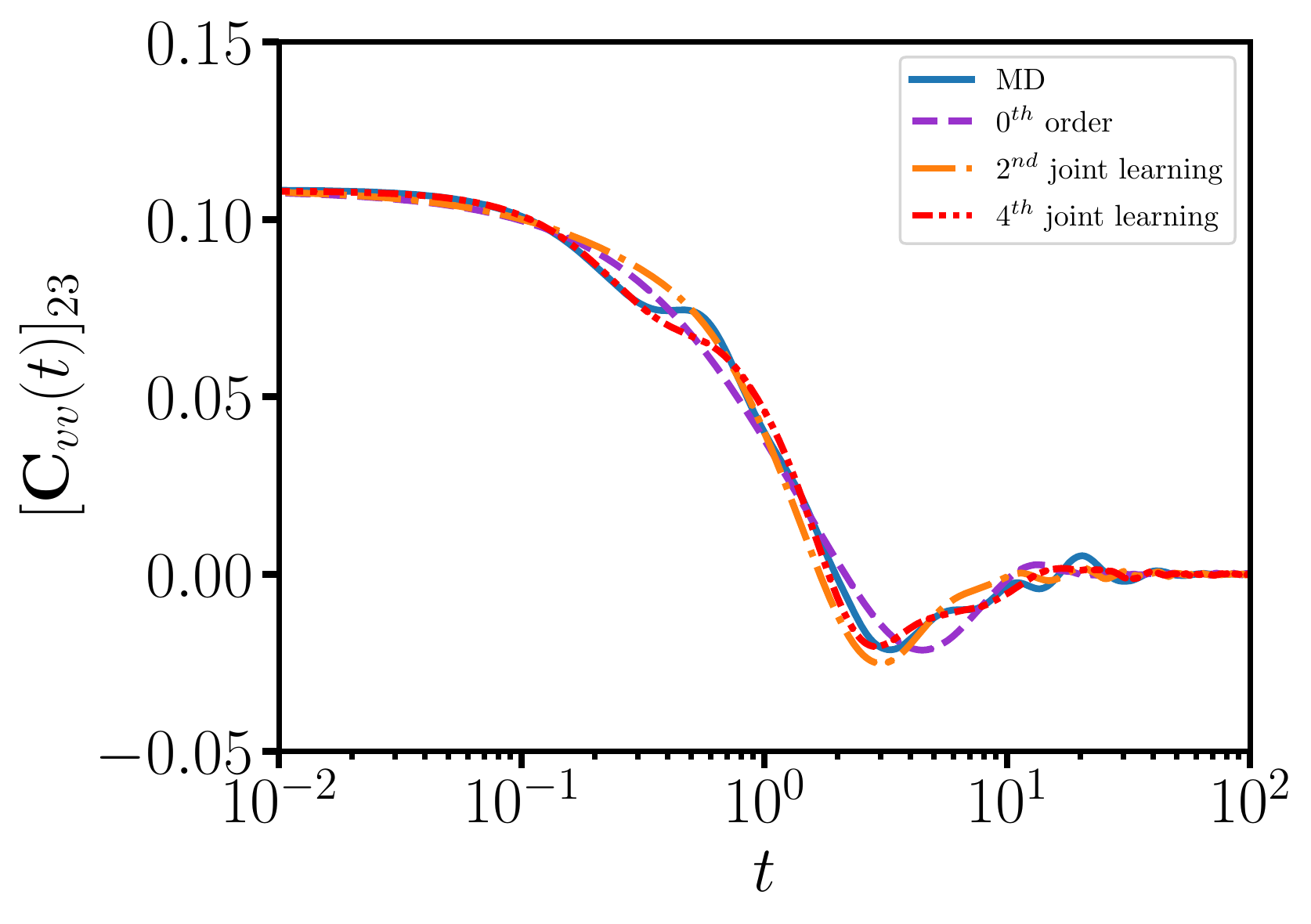}
	}
	\hfill
	\subfigure[]{
	\centering
		\hspace*{0em}\includegraphics[scale=0.45]{./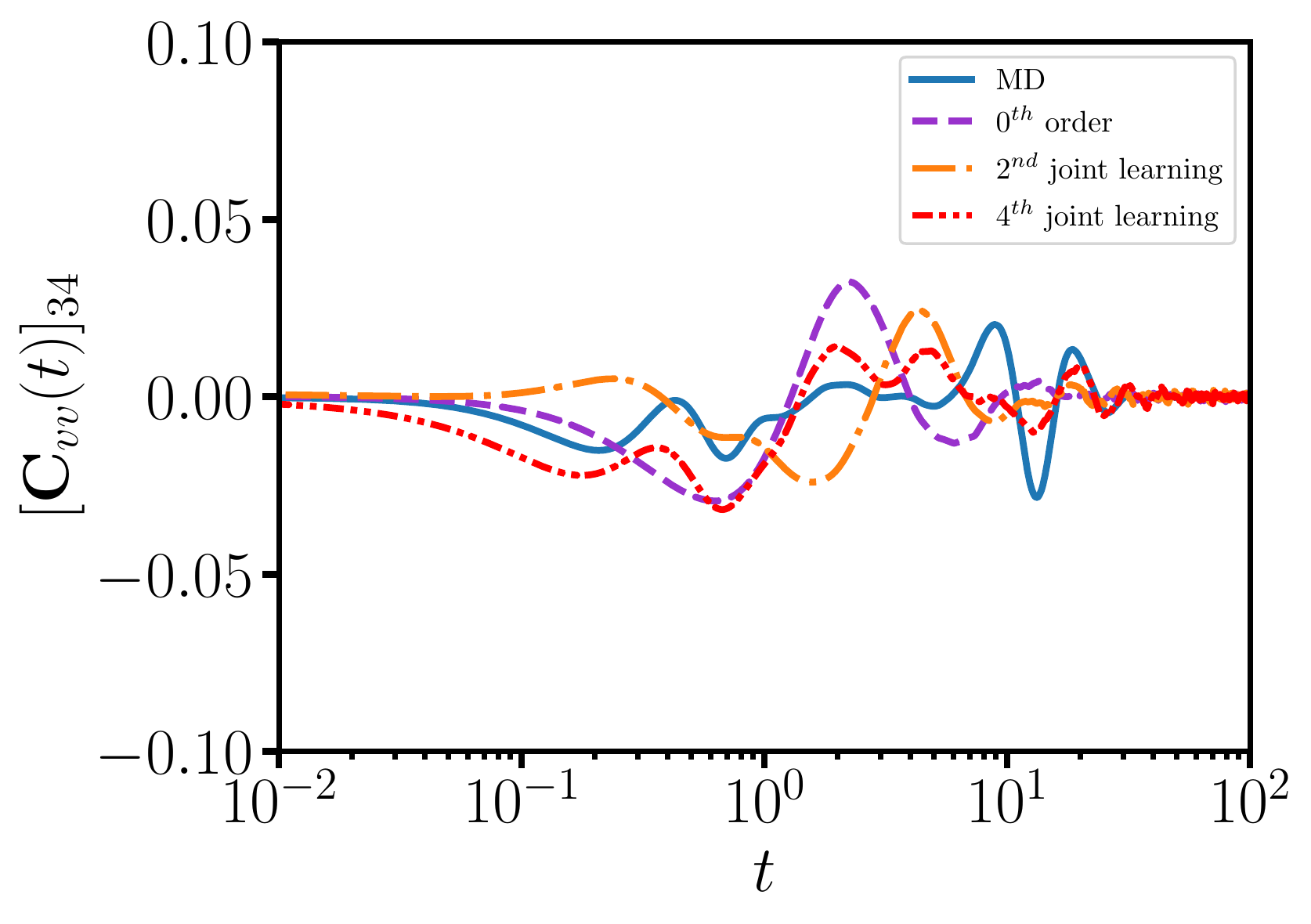}
	}
	\caption{Off-diagonal components of the velocity correlation function $\mb C_{vv}(t)$ for a polymer molecule system whose conformation states are characterized by a four-dimensional resolved vector $\mb q$ defined by Eq. \eqref{eq:resolved_variables}.
	}
	\label{fig:4D_molecule_system_part_2}
\end{figure}

Fig. \ref{fig:4D_molecule_system_part_2} shows the off-diagonal components of the velocity correlation matrix $\mb C_{vv}(t)$. Similar to the diagonal components, $\left[\mb C_{vv}(t)\right]_{12}$ represents the coupling between the dynamics of two global conformation states and therefore exhibits the longest correlation with pronounced oscillations at $t=10$ and $t=25$. On the other hand, $\left[\mb C_{vv}(t)\right]_{13}$ and $\left[\mb C_{vv}(t)\right]_{23}$ represent the coupling between a global state and semi-global state, and therefore exhibit intermediate correlation. In addition,  $\left[\mb C_{vv}(t)\right]_{34}$ exhibits weaker correlation compared with the other components since the coupling between the dynamics of $q_3$ and $q_4$ is mainly governed by the local bond$\mhyphen$ and angle$\mhyphen$interactions associated with $8\mhyphen$th and $9\mhyphen$th atom. The predictions of the second-order reduced model show fairly good agreement with the full MD results for $\left[\mb C_{vv}(t)\right]_{13}$ and $\left[\mb C_{vv}(t)\right]_{23}$ but less agreement for $\left[\mb C_{vv}(t)\right]_{12}$. The fourth-order reduced model yields good agreement for all the components.

Fig. \ref{fig:4D_memory_kernel} shows the components of the embedded matrix-valued kernels  in the Laplace space obtained from the full MD and the reduced models.  In particular,  $\tilde{\bm\Theta}(\lambda)$   obtained from the second-order model shows good agreement with $\bm\Theta(\lambda)$ obtained from the full MD within the regime of large $\lambda$. The fourth-order model yields good agreement over the full regime, which is consistent with the accurate prediction of the velocity correlation functions shown in Fig. \ref{fig:4D_molecule_system_part_1} and \ref{fig:4D_molecule_system_part_2}. While the kernel function $\bm\theta(t)$ is not explicitly constructed in the present method, the accurate recovery of $\bm\Theta(\lambda)$ verifies that the constructed models faithfully retain the non-Markovian dynamics of the resolved variables. 

\begin{figure}
\centering
	\subfigure[]{
	\centering
		\hspace*{0em}\includegraphics[scale=0.45]{./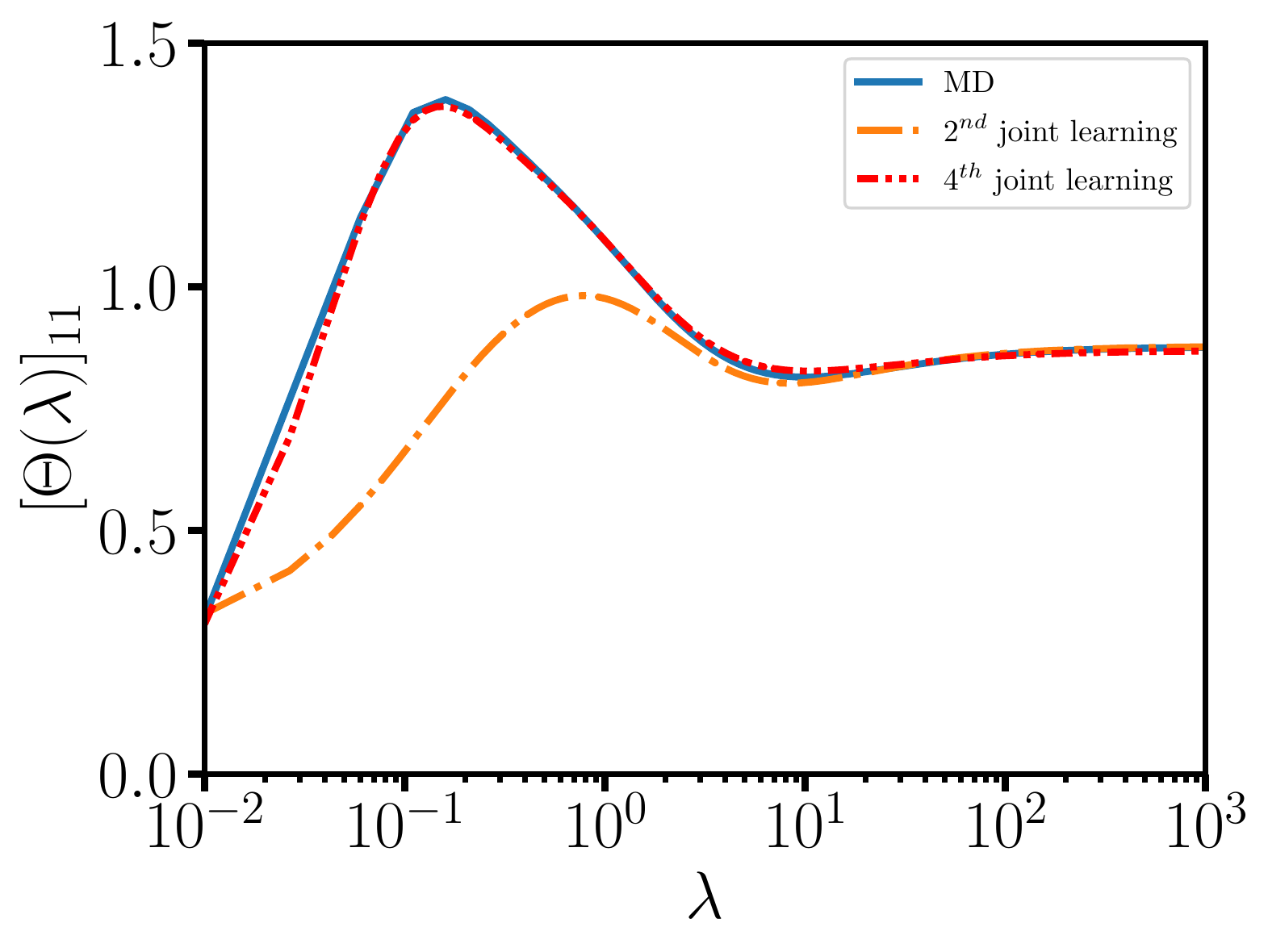}
	}
	\subfigure[]{
	\centering
		\hspace*{0em}\includegraphics[scale=0.45]{./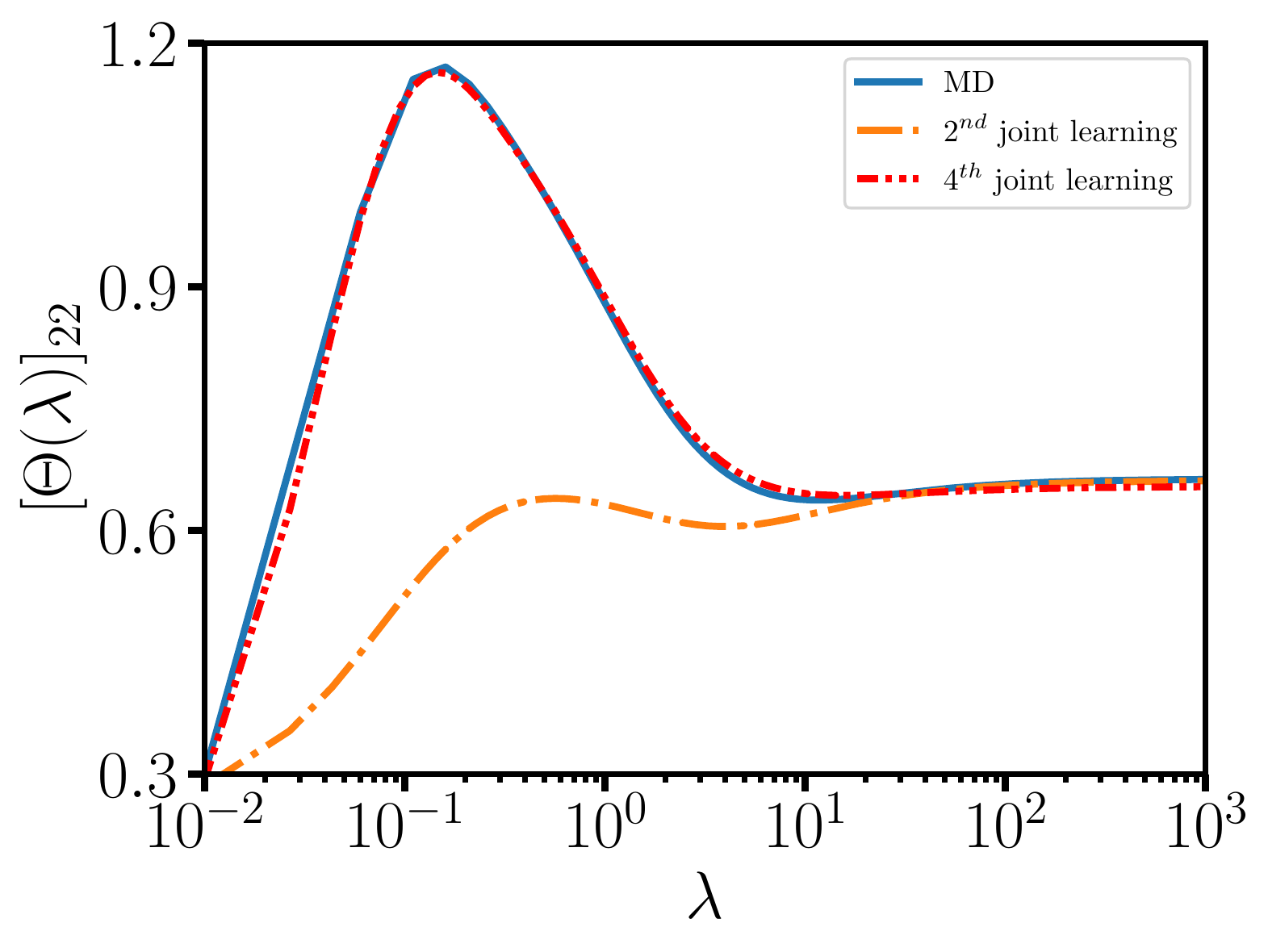}
	}
	\subfigure[]{
	\centering
		\hspace*{0em}\includegraphics[scale=0.45]{./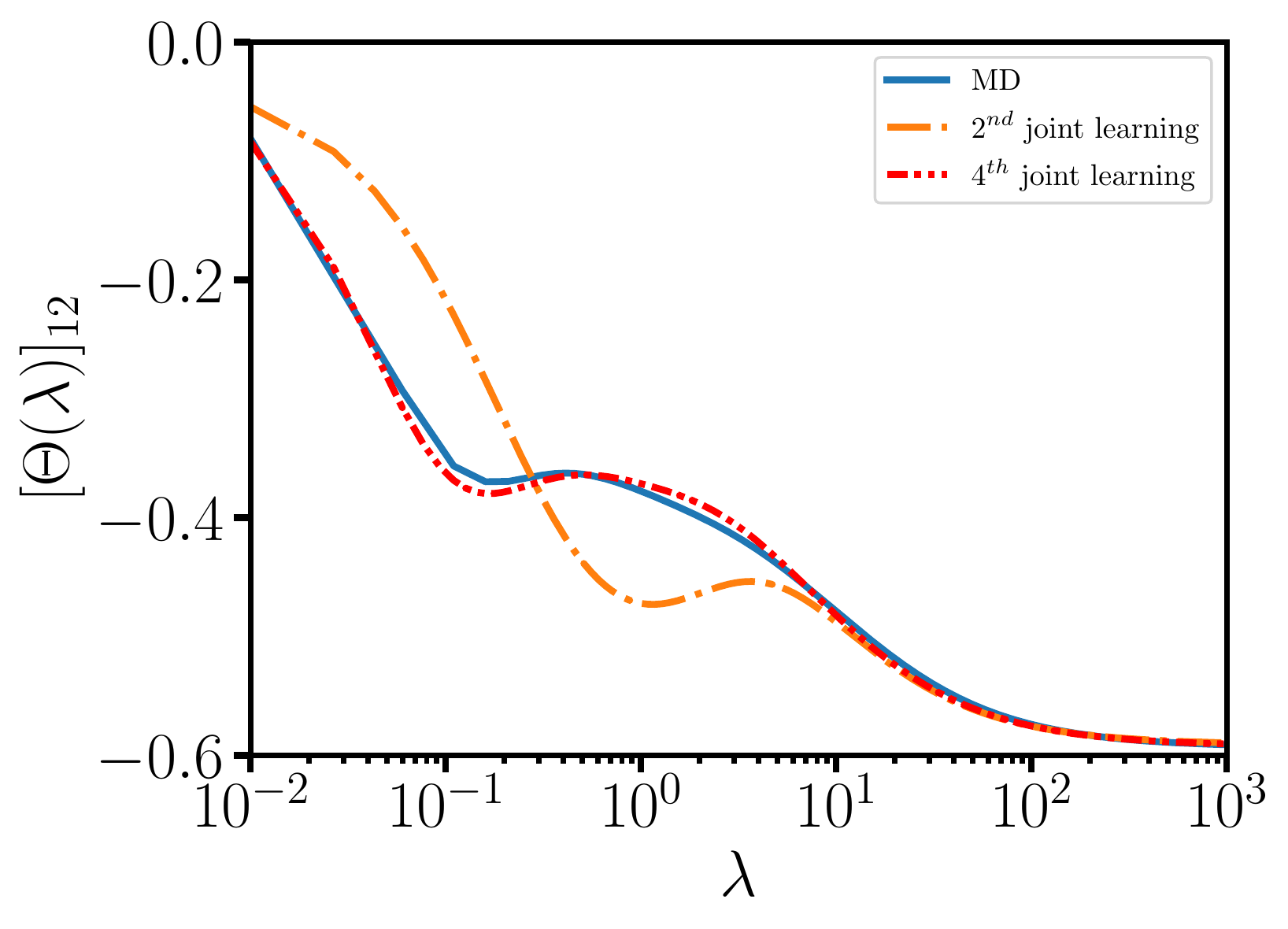}
	}
	\subfigure[]{
	\centering
		\hspace*{0em}\includegraphics[scale=0.45]{./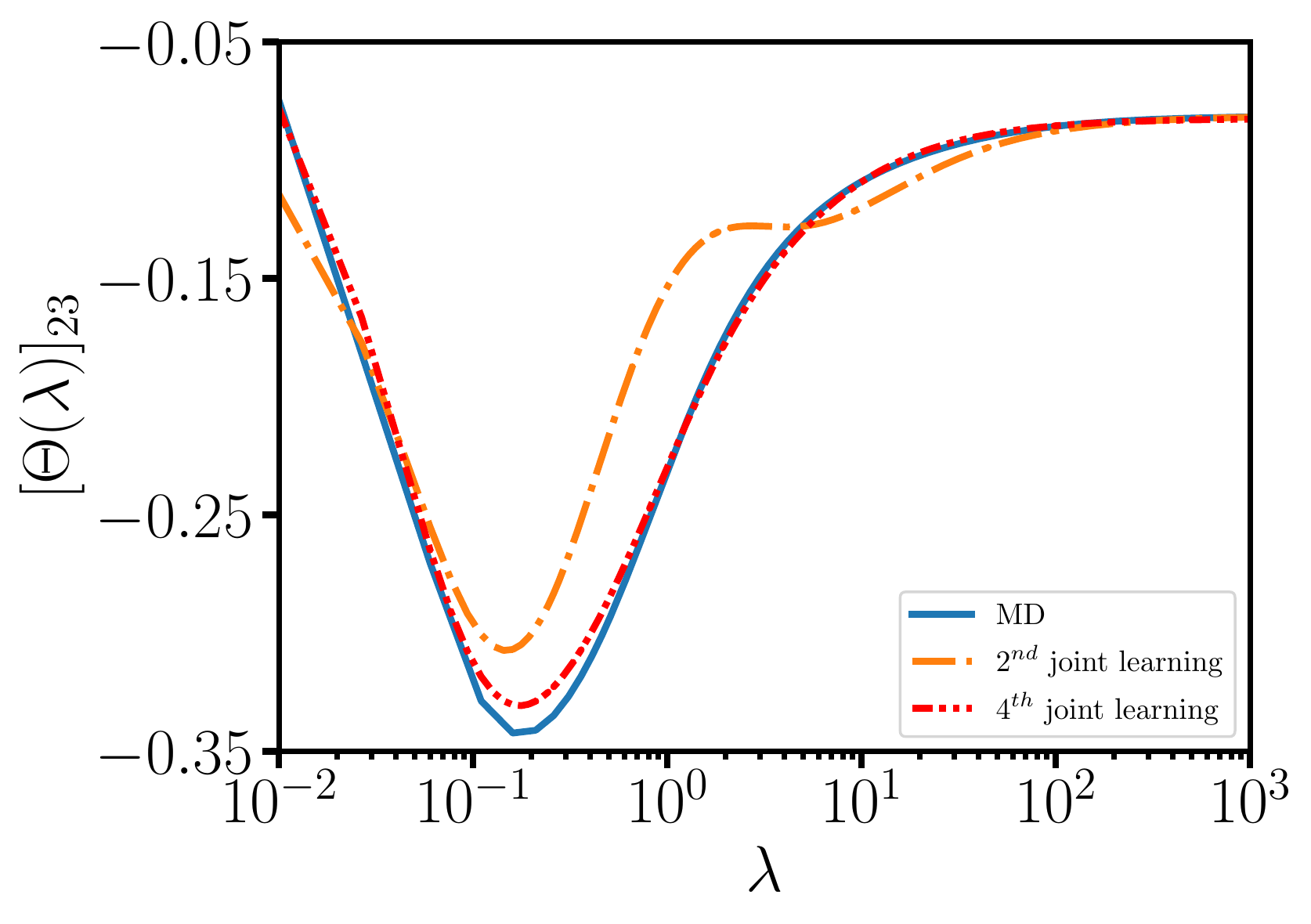}
	}
	\caption{Components of the embedded matrix-valued kernel $\bm\Theta(\lambda)$ in the Laplace space obtained from the full MD and a four-dimensional reduced model of a polymer molecule system.}
	\label{fig:4D_memory_kernel}
\end{figure}

\section{Discussion}
\label{sec:discussion}
In this study, we developed a data-driven approach to accurately learn the stochastic reduced dynamics of full Hamiltonian systems with non-Markovian memory. The method essentially provides an efficient approach to approximate the multi-dimensional generalized Langevin equation. Rather than directly fitting the matrix-valued memory kernel, the present method seeks a set of non-Markovian features whose evolution naturally encodes with the orthogonal dynamics of the resolved variables, and establishes a joint learning of the extended dynamics in terms of both the resolved variables and the non-Markovian features. Compared with the previous studies based on the rational function approximation \cite{Lei2016} and 
the Petrov-Galerkin projection \cite{Lei_Li_JCP_2021} with the pre-selected fractional derivative bases, the present method enables us to probe the optimal representation of the reduced dynamics through the joint learning of the non-Markovian features. The constructed features retain a clear physical interpretation and can be loosely viewed as the convolution of the velocity history. This enables us to construct the proper learning formulation such that the reduced dynamics strictly preserves the second fluctuation-dissipation theorem and retains the consistent invariant density distribution. Moreover, the learning process does not require the on-the-fly computation of the time correlations of these features from the time-series samples, 
and automatically ensures numerical stability of the constructed model without empirical treatment. This is particularly well-suited for the construction of reduced dynamics of complex systems, where multi-dimensional resolved variables are often needed to characterize the transition dynamics with non-local cross-correlations among the variables.

Despite the overall success, there are several open questions that are worth further exploration. For instance, the encoders are currently formulated as the standard linear convolution operators. Other nonlinear formulations may further facilitate the retaining of the non-Markovian features. Also, the proper design of the encoder functions that represent state-dependent features may enable us to faithfully construct the reduced dynamics retaining the heterogeneous memory term. We leave these issues for future studies.  


\begin{acknowledgments}
The work is supported by the Extreme Science and Engineering Discovery Environment (XSEDE) Bridges at the Pittsburgh Supercomputing Center through allocation MTH210005, the Strategic Partnership Grant at Michigan State University, and the National Science Foundation under Grant DMS-2110981. 
\end{acknowledgments}


\appendix

\section{Microscale model of the polymer molecule}
\label{app:MD_polymer}
The polymer molecule is modeled as a bead-spring chain consisting of $4$ sub-units. Each sub-unit consists of $4$ atoms. The full potential is given by 
\begin{equation}
    V_{\rm mol}(\mb Q) = \sum_{i\neq j}^{N}V_{\rm p}(Q_{ij})+\sum_{i=1}^{N_b}V_{\rm b}(l_i)+\sum_{i=1}^{N_a}V_{\rm a}(\theta_i)+\sum_{i=1}^{N_d}V_{\rm d}(\phi_i),
\end{equation}
where $V_{\rm p}$,  $V_{\rm b}$, $V_{\rm a}$, and $V_{\rm d}$ represent the pairwise, bond, angle, and dihedral interactions whose detailed forms are specified as below.

The pairwise interaction $V_{\rm p}$ is modeled by the Lennard-Jones potential 
\begin{equation}
V_{\rm p}(Q) = 
\begin{cases} 4\varepsilon \left[ \left(\frac{\sigma}{Q}\right)^{12} - \left(\frac{\sigma}{Q}\right)^6 \right] - 
4\varepsilon \left[ \left(\frac{\sigma}{Q_c}\right)^{12} - \left(\frac{\sigma}{Q_c}\right)^6 \right], \quad Q < Q_c \\
0, \quad Q \ge Q_c
\end{cases}
\end{equation}
where $\varepsilon = 0.005$, $\sigma = 1.8$ and $Q_c = 10.0$.

The bond potential $V_{\rm b}$ is modeled by the finite extensible nonlinear elastic bond (FENE) potential 
\begin{equation}
V_{\rm b}(l) = -\frac{k_s}{2}l^2_{0}\log\left[ 1-\frac{l^2}{l^2_{0}}\right],
\end{equation}
where three different bond types. Within each sub-unit, the atoms $1\mhyphen2$, $3\mhyphen4$ are connected by type-1 bond. The atoms $2\mhyphen3$ are connected by type-2 bond. Finally, the sub-unit groups
are connected by type-3 bond. The detailed parameter set is given by Tab. \ref{tab:bond}. 

\begin{table}[htbp]
\centering
\begin{tabular*}
{0.25\textwidth}{c@{\extracolsep{\fill}} cc}
\hline\hline
Type & $k_s$ & $l_0$   \\
\hline
1    & 0.4  & 1.8    \\
2    & 0.64 & 1.6    \\
3    & 0.32 & 1.8  \\
\hline
\end{tabular*}
\caption{Parameters of the FENE bond interactions.}
\label{tab:bond}
\end{table}

The angle potential $V_{\rm a}$ is modeled by the harmonic angle potential  
\begin{equation}
V_{\rm a}(\theta) = \frac{k_a}{2} \left(\theta-\theta_0\right)^2,
\end{equation}
where two different types. Within each sub-unit group, the bond angles formed by $1\mhyphen2\mhyphen3$
and $2\mhyphen3\mhyphen4$ are imposed by type-1 potential. The bond angles formed by atoms of different sub-unit groups
(e.g., $3\mhyphen4\mhyphen5$, $4\mhyphen5\mhyphen6$) are imposed by type-2 potential.
The detailed parameter set is given by Tab. \ref{tab:angle}. 

\begin{table}[htbp]
\centering
\begin{tabular*}
{0.25\textwidth}{c@{\extracolsep{\fill}} cc}
\hline\hline
Type & $k_a$ & $\theta_0$   \\
\hline
1    & 1.2  & 114.0   \\
2    & 1.5 & 119.7   \\
\hline
\end{tabular*}
\caption{Parameters of the  harmonic angle interaction.}
\label{tab:angle}
\end{table}

The dihedral potential $V_{\rm d}$ is modeled by the multiharmonic dihedral potential
\begin{equation}
V_{\rm d}(\phi) = \sum_{i=1}^6 A_n \cos^{(n-1)}(\phi),
\end{equation}
where two different types. Type-1 dihedral potential is imposed to dihedral 
angles formed by $2\mhyphen3\mhyphen4\mhyphen5$, $4\mhyphen5\mhyphen6\mhyphen7$, $\cdots$.
Type-2 dihedral potential is imposed to dihedral 
angles formed by $3\mhyphen4\mhyphen5\mhyphen6$, $7\mhyphen8\mhyphen9\mhyphen10$, $\cdots$. The detailed parameter set is given by Tab. \ref{tab:dihedral}. 

\begin{table}[htbp]
\centering
\begin{tabular*}
{0.7\textwidth}{c@{\extracolsep{\fill}} cccccc}
\hline\hline
Type & $A_1$ & $A_2$ & $A_3$ & $A_4$  & $A_5$ & $A_6$ \\
\hline
1    & 0.0673 &1.8479 &0.0079  &-2.2410  &-0.0058   &0.0051 \\
2    & 0.1602  &-3.9993   &0.2483   &6.2837   &0.0165  &-0.0146  \\
\hline
\end{tabular*}
\caption{Parameters of the multiharmonic dihedral interaction.}
\label{tab:dihedral}
\end{table}

\section{Construction of the four-dimensional free energy function}
\label{app:free_energy}
Accurate construction of the multi-dimensional free energy is a well-known non-trivial problem. To construct the free energy function $U(\mb q)$ for the four-dimensional resolved variables $\mb q$ defined by \eqref{eq:resolved_variables}, we conduct the constraint molecular dynamics simulation to sample the average force. Specifically, for each target configuration $\mb q^{\ast}$, we impose a biased quadratic potential $U_{\rm bias} (\mb q, \mb q^{\ast})$ by
\begin{equation}
U_{\rm bias} (\mb q, \mb q^{\ast}) = \frac{1}{2} \sum_{i=1}^4 k_i\left(q_i - q_i^{\ast} \right)^2, 
\end{equation}
where $k_1, \cdots, k_4$ represents the magnitude of the bias potential. We choose the values such that the fluctuations are about $5\%$ of target values. For the polymer molecule considered in the present study, the effective constraint force applied to the full atom $\left\{\mb Q_j\right\}_{j=1}^N$ is given by
\begin{equation}
\mb F_{\rm bias} (\mb q, \mb q^{\ast}) = - \sum_{i=1}^4 k_i\left(q_i - q_i^{\ast} \right) \nabla_{\mb Q_j} q_i,   
\end{equation}
where the gradient terms are given by
\begin{equation}
\begin{split}
\nabla_{\mb Q_j} q_1 &= \frac{\mb Q_1 - \mb Q_N}{q_1} \delta_{j,1} +   \frac{\mb Q_N - \mb Q_1}{q_1}  \delta_{j,N}, \\
\nabla_{\mb Q_j} q_2 &= \frac{2\left(\mb Q_j - \mb Q_c\right)}{N q_2}, \\
\nabla_{\mb Q_j} q_3 &= \frac{\mb Q_1 - \mb Q_{\lfloor{\frac{N}{2}}\rfloor}}{q_3} \delta_{j,1} +   \frac{\mb Q_{\lfloor{\frac{N}{2}}\rfloor} - \mb Q_1}{q_3}  \delta_{j,\lfloor{\frac{N}{2}}\rfloor}, \\
\nabla_{\mb Q_j} q_4 &= \frac{\mb Q_N - \mb Q_{\lceil{\frac{N}{2}}\rceil}}{q_4} \delta_{j,N} +   \frac{\mb Q_{\lceil{\frac{N}{2}}\rceil} - \mb Q_N}{q_4} \delta_{j,\lceil{\frac{N}{2}}\rceil},
\end{split}    
\end{equation}
where $\delta_{i,j}$ represents the Kronecker delta function. 

The free energy $U(\mb q)$ is approximated by a $4\mhyphen$layer fully connected neural network $\tilde{U}(\mb q)$. Each hidden layer has $160$ neurons; hyperbolic tangent function is used as the activation function. $\tilde{U}(\mb q)$ is trained by minimizing the empirical loss 
\begin{equation}
L = \sum_{k=1}^{N_s} \left\Vert -\nabla_{\mb q^{(k)}} \tilde{U}(\mb q) -  \mb F_{\rm bias} (\mb q, \mb q^{(k)})\right\Vert^2, 
\end{equation}
where $\mb q^{(k)}$ represents a sampled configuration. In this work, we construct $\tilde{U}(\mb q)$ using $N_s = 400000$ sample points. 

To verify the accuracy of $\tilde{U}(\mb q)$, we numerically evaluate the integration
\begin{equation}
k_BT \mb I \equiv \int \mb q \otimes \nabla U(\mb q) {\rm e}^{-U(\mb q)/k_BT} \diff \mb q / \int {\rm e}^{-U(\mb q)/k_BT} \diff \mb q  \approx \frac{1}{N_s}\sum_{k=1}^{N_s} \mb q^{(k)} \otimes \nabla \tilde{U}(\mb q^{(k)}).     
\end{equation}
Therefore, the difference between the numerical summation and $k_BT \mb I$ provide a metric. For this case, $k_BT = 1$. The average term yields
\begin{equation}
\frac{1}{N_s}\sum_{k=1}^{N_s} \mb q^{(k)} \otimes \nabla \tilde{U}(\mb q^{(k)}) = 
\begin{bmatrix}
1.0362  & -0.0011 &0.0087 &0.0062\\
0.0094 & 0.9814 &0.0021 &0.0018\\
0.0096 & 0.0068 &0.9913 & -0.0020\\
0.0076&0.0098 &0.0008 & 0.9913\\
\end{bmatrix},    
\end{equation}
which verifies that the constructed $\tilde{U}(\mb q)$ is an accurate approximation of $U(\mb q)$.

\section{Fluctuation-dissipation theorem of the extended dynamics}
\label{app:fdt}
For the extended dynamics in form of Eqs. \eqref{eq:SDE_simple}\eqref{eq:G_matrix}, we can show that the embedded memory kernel $\tilde{\bm \theta}(t)$ and fluctuation term $\Tilde{\mb{\mathcal{R}}}(t)$ satisfy the second-fluctuation dissipation theorem. Without loss of generality, we set the covariance of the non-Markovian features to be $k_BT \mb I$ following the learning method presented in Sec. \ref{sec:learn_reduced_dynamics}, i.e., $\bm \Lambda = \mb I$, $\tilde{\mb J} = \mb J$. 
\begin{proposition} 
The embedded memory kernel of the extended dynamics \eqref{eq:SDE_simple}\eqref{eq:G_matrix} takes the form $\displaystyle \tilde{\bm \theta}(t) = -\left(\mb J_{11} \delta(t) + \mb J_{12} {\rm e}^{\mb J_{22} t} \mb J_{21}\right)$. Furthermore, by choosing the initial condition of $\bm \zeta$ and the white noise term $\bm\xi (t) = \bm \Sigma \dot{\mb W}_t$ satisfying
\begin{equation}
\begin{split}
\left\langle \bm \zeta(0) \bm\zeta(0)^T \right\rangle &= \beta^{-1} \mb I \\
\left\langle \bm \xi (t) \bm\xi (s)^T \right\rangle &= -\beta^{-1} (\mb J  +  \mb J^T) \delta(t-s),
\end{split}
\label{eq:sde_noise_condition}
\end{equation}
the embedded kernel $\tilde{\bm\theta}(t)$ and $\mb{\mathcal{R}}(t)$ satisfies the second fluctuation-dissipation theorem, i.e., 
\begin{equation}
\left\langle \Tilde{\mb{\mathcal{R}}}(t) \Tilde{\mb{\mathcal{R}}}(t')^T \right\rangle = -\beta^{-1} \left(\tilde{\mb J}_{12} {\rm e}^{\tilde{\mb J}_{22}(t-t')} \tilde{\mb J}_{21} + (\tilde{\mb J}_{11} + \tilde{\mb J}_{11}^T) \delta(t - t')\right).
\end{equation}
\end{proposition}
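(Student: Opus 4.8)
The plan is to obtain the embedded kernel by eliminating the auxiliary variables $\bm\zeta$ from the extended system \eqref{eq:SDE_simple}, and then to read off the noise correlation from the resulting closed equation for $\mb p$. First I would write out the block structure of Eq. \eqref{eq:SDE_simple} under the normalization $\bm\Lambda = \mb I$: the $\bm\zeta$-block reads $\dot{\bm\zeta} = \mb J_{21}\mb v + \mb J_{22}\bm\zeta + \bm\xi_\zeta(t)$, a linear (Ornstein--Uhlenbeck-type) equation driven by $\mb v$ and by the noise components $\bm\xi_\zeta$. Solving this by the variation-of-constants formula gives
\begin{equation}
\bm\zeta(t) = {\rm e}^{\mb J_{22} t}\bm\zeta(0) + \int_0^t {\rm e}^{\mb J_{22}(t-s)}\mb J_{21}\mb v(s)\diff s + \int_0^t {\rm e}^{\mb J_{22}(t-s)}\bm\xi_\zeta(s)\diff s.
\end{equation}
Substituting this into the $\mb p$-block $\dot{\mb p} = -\nabla U(\mb q) + \mb J_{11}\mb v + \mb J_{12}\bm\zeta + \bm\xi_p(t)$ and integrating the $\mb J_{12}\mb J_{21}$ convolution by parts to convert $\mb v$ into $\dot{\mb q}$ (if needed to match the GLE form) yields exactly the memory kernel $\tilde{\bm\theta}(t) = -\big(\mb J_{11}\delta(t) + \mb J_{12}{\rm e}^{\mb J_{22}t}\mb J_{21}\big)$ claimed in the first sentence of the proposition, together with an explicit expression for the residual noise
\begin{equation}
\tilde{\mb{\mathcal R}}(t) = \bm\xi_p(t) + \mb J_{12}{\rm e}^{\mb J_{22}t}\bm\zeta(0) + \mb J_{12}\int_0^t {\rm e}^{\mb J_{22}(t-s)}\bm\xi_\zeta(s)\diff s.
\end{equation}

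Next I would compute $\langle \tilde{\mb{\mathcal R}}(t)\tilde{\mb{\mathcal R}}(t')^T\rangle$ directly from this formula for $t' < t$, expanding into the sum of cross-correlations of the three terms. Using the prescribed initial condition $\langle\bm\zeta(0)\bm\zeta(0)^T\rangle = \beta^{-1}\mb I$ and the white-noise covariance $\langle\bm\xi(t)\bm\xi(s)^T\rangle = -\beta^{-1}(\mb J + \mb J^T)\delta(t-s)$ from \eqref{eq:sde_noise_condition}, together with the assumed absence of correlation between $\bm\zeta(0)$ and $\bm\xi$, the stochastic-integral term contributes (after an elementary $\delta$-function integration over $[0,t']$) a factor built from $-\beta^{-1}(\mb J_{12}\mb J_{12}^T)$ type terms, the $\bm\zeta(0)\bm\zeta(0)^T$ term contributes $\beta^{-1}\mb J_{12}{\rm e}^{\mb J_{22}t}{\rm e}^{\mb J_{22}^T t'}\mb J_{12}^T$, and the $\bm\xi_p$ self-term contributes $-\beta^{-1}(\mb J_{11} + \mb J_{11}^T)\delta(t-t')$. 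The key algebraic identity that makes everything collapse is the Lyapunov relation satisfied by the stationary covariance of $\bm\zeta$: since $\langle\bm\zeta\bm\zeta^T\rangle = \beta^{-1}\mb I$ is stationary under the OU dynamics with drift $\mb J_{22}$, one has $\mb J_{22} + \mb J_{22}^T + \beta\,(\text{noise covariance in the }\bm\zeta\text{ block}) = 0$, i.e. the $\bm\zeta$-block of $-(\mb J + \mb J^T)$ equals $-(\mb J_{22} + \mb J_{22}^T)$. Feeding this into the stochastic-integral term and combining with the $\bm\zeta(0)$ term, the two non-Markovian pieces telescope into $-\beta^{-1}\mb J_{12}{\rm e}^{\mb J_{22}(t-t')}\mb J_{21}$ — using also that stationarity forces the relevant off-diagonal blocks to satisfy $\mb J_{21} = -\mb J_{12}^T$ (the orthogonality constraint $\langle\bm\zeta,\mb v\rangle = 0$ and the antisymmetry/symmetry split of $\mb J$), which is exactly what converts ${\rm e}^{\mb J_{22}^T t'}\mb J_{12}^T$-type combinations back into $\mb J_{21}$.

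The main obstacle I anticipate is the careful bookkeeping of the $\delta$-function boundary term: integrating $\int_0^t\int_0^{t'}{\rm e}^{\mb J_{22}(t-s)}\langle\bm\xi_\zeta(s)\bm\xi_\zeta(s')^T\rangle{\rm e}^{\mb J_{22}^T(t'-s')}\diff s'\diff s$ against $\delta(s-s')$ gives an integral over $[0,\min(t,t')] = [0,t']$, and one must verify that the upper-limit contribution from this integral is precisely what combines with the $\bm\zeta(0)$ term rather than producing a spurious extra term at $s = t'$; this is where the stationary-covariance Lyapunov identity is essential and where sign errors are easy. A secondary point to handle cleanly is the distributional meaning of the $\mb J_{11}\delta(t)$ piece in the kernel and the corresponding $(\mb J_{11} + \mb J_{11}^T)\delta(t-t')$ in the noise, which should be treated as the $\mb J_{22}\to$ "infinitely fast" limit of an auxiliary variable, or simply verified consistently at the level of the defining SDE; I would relegate that to the observation that $\bm\xi_p$ already carries covariance $-\beta^{-1}(\mb J_{11} + \mb J_{11}^T)\delta(t-t')$ by \eqref{eq:sde_noise_condition}. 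Everything else is a routine Gaussian second-moment computation, and the result matches Eq. \eqref{eq:general_2nd_fdt} with $\tilde{\mb J} = \mb J$ as claimed.
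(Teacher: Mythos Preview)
Your overall strategy---solve the $\bm\zeta$ block by variation of constants, substitute into the $\mb p$ equation, identify the three noise contributions, and compute second moments---is exactly the route the paper takes. The memory-kernel identification and the three self-correlations $\langle\tilde{\mathcal R}_i(t)\tilde{\mathcal R}_i(t')^T\rangle$ are handled correctly.

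There is, however, a genuine gap in the final step. You write that the telescoping to $-\beta^{-1}\mb J_{12}{\rm e}^{\mb J_{22}(t-t')}\mb J_{21}$ requires ``stationarity forces the relevant off-diagonal blocks to satisfy $\mb J_{21} = -\mb J_{12}^T$''. This is not true in the generality of the proposition: the parameterization $\mb J = -\mb L\mb L^T + \mb J^A$ (Eq.~\eqref{eq:J_SPD}) allows the symmetric part $-\mb L\mb L^T$ to have nonzero off-diagonal blocks, so in general $\mb J_{12} + \mb J_{21}^T \neq 0$. The orthogonality $\langle\bm\zeta,\mb v\rangle = 0$ is a property of the invariant density $\rho_{\rm eq}\propto e^{-\beta W}$ and holds for \emph{any} $\mb J$ with $\mb J + \mb J^T\preccurlyeq 0$; it does not constrain the off-diagonal blocks of $\mb J$. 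The condition $\mb J_{21} = -\mb J_{12}^T$ is precisely the \emph{special case} of Eq.~\eqref{eq:J_SPD_2}, which the paper treats separately after the proof.

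What you are missing is the cross term $\langle\tilde{\mathcal R}_2(t)\tilde{\mathcal R}_1(t')^T\rangle$. The noise covariance in \eqref{eq:sde_noise_condition} has off-diagonal block $\langle\bm\xi_p(t)\bm\xi_\zeta(s)^T\rangle = -\beta^{-1}(\mb J_{12}+\mb J_{21}^T)\delta(t-s)$, which is nonzero in general. For $t>t'$ this gives
\[
\left\langle\tilde{\mathcal R}_2(t)\tilde{\mathcal R}_1(t')^T\right\rangle
= -\beta^{-1}\mb J_{12}{\rm e}^{\mb J_{22}(t-t')}(\mb J_{21}+\mb J_{12}^T),
\]
and it is precisely this term that combines with the surviving $\beta^{-1}\mb J_{12}{\rm e}^{\mb J_{22}(t-t')}\mb J_{12}^T$ (from $\tilde{\mathcal R}_2\tilde{\mathcal R}_2^T + \tilde{\mathcal R}_3\tilde{\mathcal R}_3^T$) to produce $-\beta^{-1}\mb J_{12}{\rm e}^{\mb J_{22}(t-t')}\mb J_{21}$, with no assumption on the block structure of $\mb J$. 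Once you include this cross term, the appeal to $\mb J_{21}=-\mb J_{12}^T$ becomes unnecessary and the argument goes through in full generality.
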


\begin{proof}
With $\bm \Lambda = \mb I$ and $\tilde{\mb J} = \mb J$, we can take the integration of $\bm\zeta(t)$ in Eq. \eqref{eq:SDE_simple}, yielding
\begin{equation}
\bm\zeta(t) = \int_0^t {\rm e}^{\mb J_{22}(t-s)} \mb J_{21} \mb v(s) \diff s + 
\int_0^t {\rm e}^{\mb J_{22}(t-s)}  \bm \xi_2(s) \diff s
+ {\rm e}^{\mb J_{22}t} \bm\zeta(0).
\end{equation}
Plugging $\bm\zeta(t)$ into the dynamic equation of $\mb v$ gives
\begin{equation}
\begin{split}
\mb M\dot{\mb v} = &-\nabla U(\mb q) + \mb J_{11}  \mb v+ \int_0^t \mb J_{12} {\rm e}^{\mb J_{22}(t-s)} \mb J_{21} \mb v(s) \diff t  \\
&+\underbrace{\bm \xi_1(t)}_{{\Tilde{\mathcal{R}}}_1(t)} 
+\underbrace{\int_0^t \mb J_{12} {\rm e}^{\mb J_{22}(t-s)}  \bm \xi_2(s) \diff s}_{{\Tilde{\mathcal{R}}}_2(t)}
+ \underbrace{\mb J_{12}  {\rm e}^{\mb J_{22}t} \bm\zeta(0)}_{{\Tilde{\mathcal{R}}}_3(t)}.  
\end{split}
\label{eq:extended_int}
\end{equation}
We check the covariance matrices of the noise terms, i.e., 
\begin{equation}
\begin{split}
\left\langle {\Tilde{\mathcal{R}}}_1(t) {\Tilde{\mathcal{R}}}_1(t')^T\right\rangle &= -\beta^{-1}(\mb J_{11} + \mb J_{11}^T) \delta(t - t'), \\
\left\langle {\Tilde{\mathcal{R}}}_2(t) {\Tilde{\mathcal{R}}}_2(t')^T\right\rangle &= 
\int_0^t \int_0^{t'} \mb J_{12} {\rm e}^{\mb J_{22}(t-s)}  \left\langle \bm \xi_2(s) \bm \xi_2(s')^T\right\rangle  {\rm e}^{\mb J_{22}^T(t'-s')} \mb J_{12}^T  \diff s \diff s' \\
&=  -\beta^{-1}\int_0^t\int_0^{t'} \mb J_{12} {\rm e}^{\mb J_{22}(t-s)}  (\mb J_{22} + \mb J_{22}^T)\delta(s-s')   \mb J_{21}^T {\rm e}^{\mb J_{22}^T(t'-s')} \mb J_{12}^T \diff s  \diff s' \\
&= -\beta^{-1}\int_0^{t'} \mb J_{12} {\rm e}^{\mb J_{22}(t-s')}  (\mb J_{22} + \mb J_{22}^T)  \mb {\rm e}^{\mb J_{22}^T(t'-s')} \mb J_{12}^T  \diff s', \\
&= -\beta^{-1} \mb J_{12} {\rm e}^{\mb J_{22}t + \mb J_{22}^T t'} \mb J_{12}^T + \beta^{-1} \mb J_{12} {\rm e}^{\mb J_{22} (t-t')}  \mb J_{12}^T, \quad \forall t' \le t \\
\left\langle {\Tilde{\mathcal{R}}}_3(t) {\Tilde{\mathcal{R}}}_3(t')^T\right\rangle &= \mb J_{12} {\rm e}^{\mb J_{22}t}
\left\langle \bm \zeta(0) \bm\zeta(0)^T \right\rangle {\rm e}^{\mb J_{22}^T t'}\mb J_{12}^T \\
&= \beta^{-1} \mb J_{12} {\rm e}^{\mb J_{22}t} {\rm e}^{\mb J_{22}^T t'}\mb J_{12}^T. 
\end{split}
\label{eq:cov_noise_1}
\end{equation}
Moreover, for $t > t'$, all the cross terms vanish except $\left\langle {\Tilde{\mathcal{R}}}_2(t) {\Tilde{\mathcal{R}}}_1(t')^T\right\rangle$, i.e.,
\begin{equation}
\begin{split}
\left\langle {\Tilde{\mathcal{R}}}_2(t) {\Tilde{\mathcal{R}}}_1(t')^T\right\rangle &= \int_0^t \mb J_{12} {\rm e}^{\mb J_{22}(t-s)}  \left\langle \bm \xi_2(s) \bm \xi_1(t') \right\rangle \diff s  \\
&= -\beta^{-1} \int_0^t \mb J_{12} {\rm e}^{\mb J_{22}(t-s)} (\mb J_{21} + \mb J_{12}^T) \delta(t' -s )\diff s \\
&= -\beta^{-1} \mb J_{12} {\rm e}^{\mb J_{22}(t-t')} (\mb J_{21} + \mb J_{12}^T).
\end{split}
\label{eq:cov_noise_2}
\end{equation}
Combining Eq. \eqref{eq:cov_noise_1} and Eq. \eqref{eq:cov_noise_2}, we have
\begin{equation}
\begin{split}
\left\langle {\Tilde{\mathcal{R}}}(t) {\Tilde{\mathcal{R}}}(t')^T\right\rangle &= \beta^{-1} \mb J_{12} {\rm e}^{\mb J_{22} (t-t')}  \mb J_{12}^T -\beta^{-1} \mb J_{12} {\rm e}^{\mb J_{22}(t-t')} (\mb J_{21} + \mb J_{12}^T) \\
&\quad - \beta^{-1}(\mb J_{11} + \mb J_{11}^T) \delta(t - t') \\
&= -\beta^{-1} \left(\mb J_{12} {\rm e}^{\mb J_{22}(t-t')} \mb J_{21} + (\mb J_{11} + \mb J_{11}^T) \delta(t - t')\right).   
\end{split}
\label{eq:cov_noise_3}
\end{equation}
\end{proof}
As a special case, by imposing the constraint specified by Eq. \eqref{eq:J_SPD_2} such that $\mb J_{11} + \mb J_{11}^T = 0$ and $\mb J_{12} = -\mb J_{21}^T$, the memory kernel $\tilde{\bm\theta}(t)$ recovers $-\mb J_{12} {\rm e}^{\mb J_{22}t} \mb J_{12}^T$ without the Markovian part, and the second fluctuation-dissipation theorem recovers the standard form, i.e., 
\begin{equation}
\left\langle \Tilde{\mb{\mathcal{R}}}(t) \Tilde{\mb{\mathcal{R}}}(0)^T \right\rangle = \beta^{-1} \tilde{\bm\theta}(t).    
\end{equation}

\section{Invariant probability density function}
\label{app:invariant_density}
\begin{proposition} 
By choosing the white noise following Eq. \eqref{eq:sde_noise_condition}, the reduced model \eqref{eq:SDE_simple}\eqref{eq:G_matrix} retains the invariant density function
\begin{equation}
\rho_{\rm eq}(\mb q, \mb p, \bm\zeta) =  \exp\left[-\beta W(\mb q, \mb p, \bm \zeta)\right]  /
\int \exp\left[-\beta W(\mb q, \mb p, \bm \zeta)\right] \diff \mb q \diff \mb p \diff \bm\zeta.
\end{equation}
\end{proposition}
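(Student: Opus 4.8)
The plan is to verify directly that $\rho_{\rm eq}$ is a stationary solution of the Fokker--Planck (forward Kolmogorov) equation attached to the extended SDE \eqref{eq:SDE_simple}. Because the noise amplitude $\bm\Sigma$ is a \emph{constant} matrix, the diffusion matrix $\mb D=\tfrac12\bm\Sigma\bm\Sigma^{T}$ is constant, and with drift $\mb b=\mb G\nabla W$ the stationarity condition reads $-\nabla\!\cdot\!(\mb b\,\rho_{\rm eq})+\tfrac12\sum_{ij}(\bm\Sigma\bm\Sigma^{T})_{ij}\,\partial_i\partial_j\rho_{\rm eq}=0$. The first move is to substitute the ansatz and use $\nabla\rho_{\rm eq}=-\beta(\nabla W)\rho_{\rm eq}$ together with $\partial_i\partial_j\rho_{\rm eq}=\big(-\beta\,\partial_i\partial_j W+\beta^{2}(\partial_iW)(\partial_jW)\big)\rho_{\rm eq}$. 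Since $\mb G$ and $\bm\Sigma$ are constant, all the second-derivative pieces collect into a trace against $\nabla^{2}W$ and all the first-derivative pieces into a quadratic form in $\nabla W$, and the whole left-hand side reduces, after dividing by $\rho_{\rm eq}$, to
\[
-\Tr\!\big(\mb A\,\nabla^{2}W\big)+\beta\,(\nabla W)^{T}\mb A\,(\nabla W),\qquad \mb A:=\mb G+\tfrac{\beta}{2}\,\bm\Sigma\bm\Sigma^{T}.
\]

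The key step is then to show that $\mb A$ is \emph{antisymmetric}. I would multiply out the two-factor block expression for $\mb G$ in \eqref{eq:G_matrix}: writing $\mb G=\mathcal G_{1}\mathcal G_{2}$ with $\mathcal G_{2}=\mathrm{diag}(\mb I,\mb I,\hat{\bm\Lambda})$ symmetric, one gets $\mb G+\mb G^{T}=\mathcal G_{1}\mathcal G_{2}+\mathcal G_{2}\mathcal G_{1}^{T}=\mathrm{diag}\big(0,\ \mb J\bm\Lambda+\bm\Lambda\mb J^{T}\big)$, because the skew $\pm\mb I$ blocks coupling $\mb q$ and $\mb p$ cancel while the selector blocks satisfy $E^{T}\bm\Lambda=E^{T}$ and $\bm\Lambda E=E$. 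Comparing with the noise prescription \eqref{eq:noise_covariance}, $\bm\Sigma\bm\Sigma^{T}=-\beta^{-1}(\mb J\bm\Lambda+\bm\Lambda^{T}\mb J^{T})$ (read as a matrix on the full extended space with a vanishing $\mb q$-block), gives exactly $\bm\Sigma\bm\Sigma^{T}=-\beta^{-1}(\mb G+\mb G^{T})$, i.e.\ $\mb A+\mb A^{T}=0$. With $\mb A$ antisymmetric and $\nabla^{2}W$ symmetric we have $\Tr(\mb A\nabla^{2}W)=0$, and $(\nabla W)^{T}\mb A\,(\nabla W)=0$ as a quadratic form of an antisymmetric matrix; hence the displayed expression vanishes identically and $\rho_{\rm eq}$ is stationary. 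Granted $U$ is confining and $\mb M,\hat{\bm\Lambda}\succ0$ so that $e^{-\beta W}$ is integrable, $\rho_{\rm eq}$ is a normalizable probability density, and since the generator is hypoelliptic it is \emph{the} invariant density. Note that only $\mb J+\mb J^{T}\preccurlyeq0$ from \eqref{eq:noise_constraint} is needed (to guarantee $\bm\Sigma\bm\Sigma^{T}\succcurlyeq0$ and hence the existence of $\bm\Sigma$); the initial-value prescription in \eqref{eq:sde_noise_condition} plays no role for the invariant measure, only for the second fluctuation--dissipation identity of Appendix \ref{app:fdt}.

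I expect the only real difficulty to be organizational rather than conceptual: one must parse the product form of $\mb G$ in \eqref{eq:G_matrix} correctly, keep straight where $\hat{\bm\Lambda}$, $\hat{\bm\Lambda}^{-1}$, and $\mb M^{-1}$ sit in $\mb G$, $\nabla W$, and $\nabla^{2}W$, and pad $\bm\Sigma\bm\Sigma^{T}$ with the zero $\mb q$-block so that the identity $\bm\Sigma\bm\Sigma^{T}=-\beta^{-1}(\mb G+\mb G^{T})$ holds exactly. Once this bookkeeping is set up, the two cancellations (the $\Tr$ term and the quadratic-form term) are immediate, and the same computation exhibits the GENERIC-type splitting of $\mb G$ — a conservative antisymmetric part plus a dissipative symmetric part tied to the noise by fluctuation--dissipation — that also underlies Appendix \ref{app:fdt}.
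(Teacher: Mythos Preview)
Your proposal is correct and follows essentially the same route as the paper's own proof: both verify stationarity of $\rho_{\rm eq}\propto e^{-\beta W}$ in the Fokker--Planck equation by using the key identity $\bm\Sigma\bm\Sigma^{T}=-\beta^{-1}(\mb G+\mb G^{T})$ (on the full extended space, with zero $\mb q$-block) and then observing that only the antisymmetric part $\mb G^{A}=\tfrac12(\mb G-\mb G^{T})$ survives, which kills the right-hand side. The paper keeps the computation at the level of the probability current, writing the RHS as $\beta^{-1}\nabla\!\cdot(\mb G^{A}\nabla\rho_{\rm eq})\equiv 0$, whereas you expand this one step further into $-\Tr(\mb A\nabla^{2}W)+\beta(\nabla W)^{T}\mb A(\nabla W)$ and argue each piece vanishes by antisymmetry of $\mb A=\mb G^{A}$; the content is the same. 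Your added remarks on integrability and hypoellipticity (for uniqueness) go a bit beyond what the paper states but are appropriate.
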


\begin{proof}
By Eq. \eqref{eq:sde_noise_condition}, the covariance of the white noise of the full extended system is given by $\mb G + \mb G^T = {\rm diag}(0, \bm\Sigma \bm\Sigma^T)$. Accordingly, the Fokker-Plank equation follows
\begin{equation}
\frac{\partial \rho(\mb z, t)}{\partial t} = \nabla \cdot \left(-\mb G \nabla W(\mb z) \rho(\mb z, t) - \frac{1}{2} \beta^{-1}(\mb G+ \mb G^T) \nabla \rho(\mb z, t) \right), 
\end{equation}
where $\rho(\mb z, t)$ represents the probability density function of the extended variables $\mb z = [\mb q; \mb p; \bm \zeta]$. 
For $\rho_{\rm eq}(\mb q, \mb p, \bm\zeta) \propto  \exp\left[-\beta W(\mb q, \mb p, \bm \zeta)\right]$, the RHS follows
\begin{equation}
\begin{split}
\nabla \cdot \left(\beta^{-1} \mb G \nabla \rho_{\rm eq}(\mb z, t)   - \frac{1}{2} \beta^{-1}(\mb G+ \mb G^T) \nabla \rho_{\rm eq}(\mb z, t) \right) 
&=  \beta^{-1} \nabla \cdot  \left({\mb G}^A   \nabla \rho_{\rm eq}(\mb z, t)\right) \\
&\equiv 0,
\end{split}
\end{equation}
where the last identity holds because $\mb G^A$ is anti-symmetric. 
\end{proof}


\begin{thebibliography}{64}%
\makeatletter
\providecommand \@ifxundefined [1]{%
 \@ifx{#1\undefined}
}%
\providecommand \@ifnum [1]{%
 \ifnum #1\expandafter \@firstoftwo
 \else \expandafter \@secondoftwo
 \fi
}%
\providecommand \@ifx [1]{%
 \ifx #1\expandafter \@firstoftwo
 \else \expandafter \@secondoftwo
 \fi
}%
\providecommand \natexlab [1]{#1}%
\providecommand \enquote  [1]{``#1''}%
\providecommand \bibnamefont  [1]{#1}%
\providecommand \bibfnamefont [1]{#1}%
\providecommand \citenamefont [1]{#1}%
\providecommand \href@noop [0]{\@secondoftwo}%
\providecommand \href [0]{\begingroup \@sanitize@url \@href}%
\providecommand \@href[1]{\@@startlink{#1}\@@href}%
\providecommand \@@href[1]{\endgroup#1\@@endlink}%
\providecommand \@sanitize@url [0]{\catcode `\\12\catcode `\$12\catcode
  `\&12\catcode `\#12\catcode `\^12\catcode `\_12\catcode `\%12\relax}%
\providecommand \@@startlink[1]{}%
\providecommand \@@endlink[0]{}%
\providecommand \url  [0]{\begingroup\@sanitize@url \@url }%
\providecommand \@url [1]{\endgroup\@href {#1}{\urlprefix }}%
\providecommand \urlprefix  [0]{URL }%
\providecommand \Eprint [0]{\href }%
\providecommand \doibase [0]{http://dx.doi.org/}%
\providecommand \selectlanguage [0]{\@gobble}%
\providecommand \bibinfo  [0]{\@secondoftwo}%
\providecommand \bibfield  [0]{\@secondoftwo}%
\providecommand \translation [1]{[#1]}%
\providecommand \BibitemOpen [0]{}%
\providecommand \bibitemStop [0]{}%
\providecommand \bibitemNoStop [0]{.\EOS\space}%
\providecommand \EOS [0]{\spacefactor3000\relax}%
\providecommand \BibitemShut  [1]{\csname bibitem#1\endcsname}%
\let\auto@bib@innerbib\@empty
\bibitem [{\citenamefont {Mori}(1965{\natexlab{a}})}]{Mori1965}%
  \BibitemOpen
  \bibfield  {author} {\bibinfo {author} {\bibfnamefont {H.}~\bibnamefont
  {Mori}},\ }\href {\doibase 10.1143/ptp.33.423} {\bibfield  {journal}
  {\bibinfo  {journal} {Progress of Theoretical Physics}\ }\textbf {\bibinfo
  {volume} {33}},\ \bibinfo {pages} {423} (\bibinfo {year}
  {1965}{\natexlab{a}})}\BibitemShut {NoStop}%
\bibitem [{\citenamefont {Zwanzig}(1973)}]{Zwanzig1973}%
  \BibitemOpen
  \bibfield  {author} {\bibinfo {author} {\bibfnamefont {R.}~\bibnamefont
  {Zwanzig}},\ }\href {\doibase 10.1007/bf01008729} {\bibfield  {journal}
  {\bibinfo  {journal} {Journal of Statistical Physics}\ }\textbf {\bibinfo
  {volume} {9}},\ \bibinfo {pages} {215} (\bibinfo {year} {1973})}\BibitemShut
  {NoStop}%
\bibitem [{\citenamefont {Zwanzig}(2001)}]{Zwanzigbook}%
  \BibitemOpen
  \bibfield  {author} {\bibinfo {author} {\bibfnamefont {R.}~\bibnamefont
  {Zwanzig}},\ }\href@noop {} {\emph {\bibinfo {title} {Nonequilibrium
  Statistical Mechanics}}}\ (\bibinfo  {publisher} {Oxford University Press},\
  \bibinfo {year} {2001})\BibitemShut {NoStop}%
\bibitem [{\citenamefont {Chorin}\ \emph {et~al.}(2002)\citenamefont {Chorin},
  \citenamefont {Hald},\ and\ \citenamefont {Kupferman}}]{ChHaKu02}%
  \BibitemOpen
  \bibfield  {author} {\bibinfo {author} {\bibfnamefont {A.~J.}\ \bibnamefont
  {Chorin}}, \bibinfo {author} {\bibfnamefont {O.~H.}\ \bibnamefont {Hald}}, \
  and\ \bibinfo {author} {\bibfnamefont {R.}~\bibnamefont {Kupferman}},\
  }\href@noop {} {\bibfield  {journal} {\bibinfo  {journal} {Phys. D}\ }\textbf
  {\bibinfo {volume} {166}},\ \bibinfo {pages} {239} (\bibinfo {year}
  {2002})}\BibitemShut {NoStop}%
\bibitem [{\citenamefont {Darve}\ \emph {et~al.}(2009)\citenamefont {Darve},
  \citenamefont {Solomon},\ and\ \citenamefont {Kia}}]{Darve_PNAS_2009}%
  \BibitemOpen
  \bibfield  {author} {\bibinfo {author} {\bibfnamefont {E.}~\bibnamefont
  {Darve}}, \bibinfo {author} {\bibfnamefont {J.}~\bibnamefont {Solomon}}, \
  and\ \bibinfo {author} {\bibfnamefont {A.}~\bibnamefont {Kia}},\ }\href@noop
  {} {\bibfield  {journal} {\bibinfo  {journal} {Proc. Natl. Acad. Sci.}\
  }\textbf {\bibinfo {volume} {106}},\ \bibinfo {pages} {10884} (\bibinfo
  {year} {2009})}\BibitemShut {NoStop}%
\bibitem [{\citenamefont {Lange}\ and\ \citenamefont
  {Grubm\"{u}ller}(2006)}]{Lange2006}%
  \BibitemOpen
  \bibfield  {author} {\bibinfo {author} {\bibfnamefont {O.~F.}\ \bibnamefont
  {Lange}}\ and\ \bibinfo {author} {\bibfnamefont {H.}~\bibnamefont
  {Grubm\"{u}ller}},\ }\href {\doibase 10.1063/1.2199530} {\bibfield  {journal}
  {\bibinfo  {journal} {The Journal of Chemical Physics}\ }\textbf {\bibinfo
  {volume} {124}},\ \bibinfo {pages} {214903} (\bibinfo {year}
  {2006})}\BibitemShut {NoStop}%
\bibitem [{\citenamefont {Chen}\ \emph {et~al.}(2014)\citenamefont {Chen},
  \citenamefont {Li},\ and\ \citenamefont {Liu}}]{LiXian2014}%
  \BibitemOpen
  \bibfield  {author} {\bibinfo {author} {\bibfnamefont {M.}~\bibnamefont
  {Chen}}, \bibinfo {author} {\bibfnamefont {X.}~\bibnamefont {Li}}, \ and\
  \bibinfo {author} {\bibfnamefont {C.}~\bibnamefont {Liu}},\ }\href@noop {}
  {\bibfield  {journal} {\bibinfo  {journal} {J. Chem. Phys.}\ }\textbf
  {\bibinfo {volume} {141}},\ \bibinfo {pages} {064112} (\bibinfo {year}
  {2014})}\BibitemShut {NoStop}%
\bibitem [{\citenamefont {Stinis}(2015)}]{Panos_rMZ_2015}%
  \BibitemOpen
  \bibfield  {author} {\bibinfo {author} {\bibfnamefont {P.}~\bibnamefont
  {Stinis}},\ }\href@noop {} {\bibfield  {journal} {\bibinfo  {journal}
  {Proceedings of the Royal Society A: Mathematical, Physical and Engineering
  Sciences}\ }\textbf {\bibinfo {volume} {471}},\ \bibinfo {pages} {20140446}
  (\bibinfo {year} {2015})}\BibitemShut {NoStop}%
\bibitem [{\citenamefont {Zhu}\ and\ \citenamefont
  {Venturi}(2018)}]{zhu2018faber}%
  \BibitemOpen
  \bibfield  {author} {\bibinfo {author} {\bibfnamefont {Y.}~\bibnamefont
  {Zhu}}\ and\ \bibinfo {author} {\bibfnamefont {D.}~\bibnamefont {Venturi}},\
  }\href@noop {} {\bibfield  {journal} {\bibinfo  {journal} {Journal of
  Computational Physics}\ }\textbf {\bibinfo {volume} {372}},\ \bibinfo {pages}
  {694} (\bibinfo {year} {2018})}\BibitemShut {NoStop}%
\bibitem [{\citenamefont {Hudson}\ and\ \citenamefont {Li}(2020)}]{Hudson2018}%
  \BibitemOpen
  \bibfield  {author} {\bibinfo {author} {\bibfnamefont {T.}~\bibnamefont
  {Hudson}}\ and\ \bibinfo {author} {\bibfnamefont {X.~H.}\ \bibnamefont
  {Li}},\ }\href@noop {} {\bibfield  {journal} {\bibinfo  {journal} {Multiscale
  Modeling \& Simulation}\ }\textbf {\bibinfo {volume} {18}},\ \bibinfo {pages}
  {1113} (\bibinfo {year} {2020})}\BibitemShut {NoStop}%
\bibitem [{\citenamefont {Price}\ \emph {et~al.}(2021)\citenamefont {Price},
  \citenamefont {Meuris}, \citenamefont {Shapiro},\ and\ \citenamefont
  {Stinis}}]{Price_Stinis_PNAS_2021}%
  \BibitemOpen
  \bibfield  {author} {\bibinfo {author} {\bibfnamefont {J.}~\bibnamefont
  {Price}}, \bibinfo {author} {\bibfnamefont {B.}~\bibnamefont {Meuris}},
  \bibinfo {author} {\bibfnamefont {M.}~\bibnamefont {Shapiro}}, \ and\
  \bibinfo {author} {\bibfnamefont {P.}~\bibnamefont {Stinis}},\ }\href@noop {}
  {\bibfield  {journal} {\bibinfo  {journal} {Proceedings of the National
  Academy of Sciences}\ }\textbf {\bibinfo {volume} {118}},\ \bibinfo {pages}
  {e2102266118} (\bibinfo {year} {2021})}\BibitemShut {NoStop}%
\bibitem [{\citenamefont {Ma}\ \emph {et~al.}(2018)\citenamefont {Ma},
  \citenamefont {Wang},\ and\ \citenamefont {E}}]{ma2018model}%
  \BibitemOpen
  \bibfield  {author} {\bibinfo {author} {\bibfnamefont {C.}~\bibnamefont
  {Ma}}, \bibinfo {author} {\bibfnamefont {J.}~\bibnamefont {Wang}}, \ and\
  \bibinfo {author} {\bibfnamefont {W.}~\bibnamefont {E}},\ }\href@noop {}
  {\bibfield  {journal} {\bibinfo  {journal} {Communications in Computational
  Physics}\ }\textbf {\bibinfo {volume} {25}},\ \bibinfo {pages} {947}
  (\bibinfo {year} {2018})}\BibitemShut {NoStop}%
\bibitem [{\citenamefont {Vlachas}\ \emph {et~al.}(2018)\citenamefont
  {Vlachas}, \citenamefont {Byeon}, \citenamefont {Wan}, \citenamefont
  {Sapsis},\ and\ \citenamefont {Koumoutsakos}}]{Vlachas_Byeon_PRSA_2018}%
  \BibitemOpen
  \bibfield  {author} {\bibinfo {author} {\bibfnamefont {P.~R.}\ \bibnamefont
  {Vlachas}}, \bibinfo {author} {\bibfnamefont {W.}~\bibnamefont {Byeon}},
  \bibinfo {author} {\bibfnamefont {Z.~Y.}\ \bibnamefont {Wan}}, \bibinfo
  {author} {\bibfnamefont {T.~P.}\ \bibnamefont {Sapsis}}, \ and\ \bibinfo
  {author} {\bibfnamefont {P.}~\bibnamefont {Koumoutsakos}},\ }\href {\doibase
  10.1098/rspa.2017.0844} {\bibfield  {journal} {\bibinfo  {journal}
  {Proceedings of the Royal Society A: Mathematical, Physical and Engineering
  Sciences}\ }\textbf {\bibinfo {volume} {474}},\ \bibinfo {pages} {20170844}
  (\bibinfo {year} {2018})}\BibitemShut {NoStop}%
\bibitem [{\citenamefont {Harlim}\ \emph {et~al.}(2020)\citenamefont {Harlim},
  \citenamefont {Jiang}, \citenamefont {Liang},\ and\ \citenamefont
  {Yang}}]{Harlim_JCP_2020}%
  \BibitemOpen
  \bibfield  {author} {\bibinfo {author} {\bibfnamefont {J.}~\bibnamefont
  {Harlim}}, \bibinfo {author} {\bibfnamefont {S.~W.}\ \bibnamefont {Jiang}},
  \bibinfo {author} {\bibfnamefont {S.}~\bibnamefont {Liang}}, \ and\ \bibinfo
  {author} {\bibfnamefont {H.}~\bibnamefont {Yang}},\ }\href@noop {} {\bibfield
   {journal} {\bibinfo  {journal} {Journal of Computational Physics}\ ,\
  \bibinfo {pages} {109922}} (\bibinfo {year} {2020})}\BibitemShut {NoStop}%
\bibitem [{\citenamefont {Wang}\ \emph {et~al.}(2020)\citenamefont {Wang},
  \citenamefont {Ripamonti},\ and\ \citenamefont
  {Hesthaven}}]{WANG_Hesthaven_RNN_JCP_2020}%
  \BibitemOpen
  \bibfield  {author} {\bibinfo {author} {\bibfnamefont {Q.}~\bibnamefont
  {Wang}}, \bibinfo {author} {\bibfnamefont {N.}~\bibnamefont {Ripamonti}}, \
  and\ \bibinfo {author} {\bibfnamefont {J.~S.}\ \bibnamefont {Hesthaven}},\
  }\href@noop {} {\bibfield  {journal} {\bibinfo  {journal} {Journal of
  Computational Physics}\ }\textbf {\bibinfo {volume} {410}},\ \bibinfo {pages}
  {109402} (\bibinfo {year} {2020})}\BibitemShut {NoStop}%
\bibitem [{\citenamefont {Hochreiter}\ and\ \citenamefont
  {Schmidhuber}(1997)}]{HochSchm97}%
  \BibitemOpen
  \bibfield  {author} {\bibinfo {author} {\bibfnamefont {S.}~\bibnamefont
  {Hochreiter}}\ and\ \bibinfo {author} {\bibfnamefont {J.}~\bibnamefont
  {Schmidhuber}},\ }\href@noop {} {\bibfield  {journal} {\bibinfo  {journal}
  {Neural Computation}\ }\textbf {\bibinfo {volume} {9}},\ \bibinfo {pages}
  {1735} (\bibinfo {year} {1997})}\BibitemShut {NoStop}%
\bibitem [{\citenamefont {Davtyan}\ \emph {et~al.}(2015)\citenamefont
  {Davtyan}, \citenamefont {Dama}, \citenamefont {Voth},\ and\ \citenamefont
  {Andersen}}]{Dav_Voth_JCP_2015}%
  \BibitemOpen
  \bibfield  {author} {\bibinfo {author} {\bibfnamefont {A.}~\bibnamefont
  {Davtyan}}, \bibinfo {author} {\bibfnamefont {J.~F.}\ \bibnamefont {Dama}},
  \bibinfo {author} {\bibfnamefont {G.~A.}\ \bibnamefont {Voth}}, \ and\
  \bibinfo {author} {\bibfnamefont {H.~C.}\ \bibnamefont {Andersen}},\
  }\href@noop {} {\bibfield  {journal} {\bibinfo  {journal} {J. Chem. Phys.}\
  }\textbf {\bibinfo {volume} {142}},\ \bibinfo {eid} {154104} (\bibinfo {year}
  {2015})}\BibitemShut {NoStop}%
\bibitem [{\citenamefont {Wall}(1948)}]{Wall_1948_book}%
  \BibitemOpen
  \bibfield  {author} {\bibinfo {author} {\bibfnamefont {H.}~\bibnamefont
  {Wall}},\ }\href@noop {} {\emph {\bibinfo {title} {Analytic Theory of
  Continued Fractions}}},\ Analytic Theory of Continued Fractions\ (\bibinfo
  {publisher} {D. Van Nostrand Company},\ \bibinfo {year} {1948})\BibitemShut
  {NoStop}%
\bibitem [{\citenamefont {Mori}(1965{\natexlab{b}})}]{Mori1965b}%
  \BibitemOpen
  \bibfield  {author} {\bibinfo {author} {\bibfnamefont {H.}~\bibnamefont
  {Mori}},\ }\href@noop {} {\bibfield  {journal} {\bibinfo  {journal} {Prog.
  Theor. Phys.}\ }\textbf {\bibinfo {volume} {34}},\ \bibinfo {pages} {399}
  (\bibinfo {year} {1965}{\natexlab{b}})}\BibitemShut {NoStop}%
\bibitem [{\citenamefont {Corless}\ and\ \citenamefont
  {Frazho}(2003)}]{Corless_2003_book}%
  \BibitemOpen
  \bibfield  {author} {\bibinfo {author} {\bibfnamefont {M.}~\bibnamefont
  {Corless}}\ and\ \bibinfo {author} {\bibfnamefont {A.}~\bibnamefont
  {Frazho}},\ }\href {https://books.google.com/books?id=I0T4nQEACAAJ} {\emph
  {\bibinfo {title} {Linear Systems and Control: An Operator Perspective}}},\
  Chapman \& Hall/CRC Pure and Applied Mathematics\ (\bibinfo  {publisher}
  {Taylor \& Francis},\ \bibinfo {year} {2003})\BibitemShut {NoStop}%
\bibitem [{\citenamefont {Ceriotti}\ \emph {et~al.}(2009)\citenamefont
  {Ceriotti}, \citenamefont {Bussi},\ and\ \citenamefont
  {Parrinello}}]{ceriotti2009langevin}%
  \BibitemOpen
  \bibfield  {author} {\bibinfo {author} {\bibfnamefont {M.}~\bibnamefont
  {Ceriotti}}, \bibinfo {author} {\bibfnamefont {G.}~\bibnamefont {Bussi}}, \
  and\ \bibinfo {author} {\bibfnamefont {M.}~\bibnamefont {Parrinello}},\
  }\href@noop {} {\bibfield  {journal} {\bibinfo  {journal} {Physical review
  letters}\ }\textbf {\bibinfo {volume} {102}},\ \bibinfo {pages} {020601}
  (\bibinfo {year} {2009})}\BibitemShut {NoStop}%
\bibitem [{\citenamefont {Baczewski}\ and\ \citenamefont
  {Bond}(2013)}]{baczewski2013numerical}%
  \BibitemOpen
  \bibfield  {author} {\bibinfo {author} {\bibfnamefont {A.~D.}\ \bibnamefont
  {Baczewski}}\ and\ \bibinfo {author} {\bibfnamefont {S.~D.}\ \bibnamefont
  {Bond}},\ }\href@noop {} {\bibfield  {journal} {\bibinfo  {journal} {The
  Journal of chemical physics}\ }\textbf {\bibinfo {volume} {139}},\ \bibinfo
  {pages} {044107} (\bibinfo {year} {2013})}\BibitemShut {NoStop}%
\bibitem [{\citenamefont {Lei}\ \emph {et~al.}(2016)\citenamefont {Lei},
  \citenamefont {Baker},\ and\ \citenamefont {Li}}]{Lei2016}%
  \BibitemOpen
  \bibfield  {author} {\bibinfo {author} {\bibfnamefont {H.}~\bibnamefont
  {Lei}}, \bibinfo {author} {\bibfnamefont {N.~A.}\ \bibnamefont {Baker}}, \
  and\ \bibinfo {author} {\bibfnamefont {X.}~\bibnamefont {Li}},\ }\href
  {\doibase 10.1073/pnas.1609587113} {\bibfield  {journal} {\bibinfo  {journal}
  {Proceedings of the National Academy of Sciences}\ }\textbf {\bibinfo
  {volume} {113}},\ \bibinfo {pages} {14183} (\bibinfo {year}
  {2016})}\BibitemShut {NoStop}%
\bibitem [{\citenamefont {Jung}\ \emph {et~al.}(2017)\citenamefont {Jung},
  \citenamefont {Hanke},\ and\ \citenamefont {Schmid}}]{Jung_Hanke_JCTC_2017}%
  \BibitemOpen
  \bibfield  {author} {\bibinfo {author} {\bibfnamefont {G.}~\bibnamefont
  {Jung}}, \bibinfo {author} {\bibfnamefont {M.}~\bibnamefont {Hanke}}, \ and\
  \bibinfo {author} {\bibfnamefont {F.}~\bibnamefont {Schmid}},\ }\href
  {\doibase 10.1021/acs.jctc.7b00274} {\bibfield  {journal} {\bibinfo
  {journal} {Journal of Chemical Theory and Computation}\ }\textbf {\bibinfo
  {volume} {13}},\ \bibinfo {pages} {2481} (\bibinfo {year}
  {2017})}\BibitemShut {NoStop}%
\bibitem [{\citenamefont {Lee}\ \emph {et~al.}(2019)\citenamefont {Lee},
  \citenamefont {Ahn},\ and\ \citenamefont {Darve}}]{Lee2019}%
  \BibitemOpen
  \bibfield  {author} {\bibinfo {author} {\bibfnamefont {H.~S.}\ \bibnamefont
  {Lee}}, \bibinfo {author} {\bibfnamefont {S.-H.}\ \bibnamefont {Ahn}}, \ and\
  \bibinfo {author} {\bibfnamefont {E.~F.}\ \bibnamefont {Darve}},\ }\href
  {\doibase 10.1063/1.5055573} {\bibfield  {journal} {\bibinfo  {journal} {The
  Journal of Chemical Physics}\ }\textbf {\bibinfo {volume} {150}},\ \bibinfo
  {pages} {174113} (\bibinfo {year} {2019})}\BibitemShut {NoStop}%
\bibitem [{\citenamefont {Russo}\ \emph {et~al.}(2019)\citenamefont {Russo},
  \citenamefont {Dur{\'a}n-Olivencia}, \citenamefont {Kevrekidis},\ and\
  \citenamefont {Kalliadasis}}]{russo2019deep}%
  \BibitemOpen
  \bibfield  {author} {\bibinfo {author} {\bibfnamefont {A.}~\bibnamefont
  {Russo}}, \bibinfo {author} {\bibfnamefont {M.~A.}\ \bibnamefont
  {Dur{\'a}n-Olivencia}}, \bibinfo {author} {\bibfnamefont {I.~G.}\
  \bibnamefont {Kevrekidis}}, \ and\ \bibinfo {author} {\bibfnamefont
  {S.}~\bibnamefont {Kalliadasis}},\ }\href@noop {} {\bibfield  {journal}
  {\bibinfo  {journal} {arXiv preprint arXiv:1903.09562}\ } (\bibinfo {year}
  {2019})}\BibitemShut {NoStop}%
\bibitem [{\citenamefont {Ma}\ \emph {et~al.}(2019)\citenamefont {Ma},
  \citenamefont {Li},\ and\ \citenamefont {Liu}}]{ma2019coarse}%
  \BibitemOpen
  \bibfield  {author} {\bibinfo {author} {\bibfnamefont {L.}~\bibnamefont
  {Ma}}, \bibinfo {author} {\bibfnamefont {X.}~\bibnamefont {Li}}, \ and\
  \bibinfo {author} {\bibfnamefont {C.}~\bibnamefont {Liu}},\ }\href@noop {}
  {\bibfield  {journal} {\bibinfo  {journal} {Journal of Computational
  Physics}\ }\textbf {\bibinfo {volume} {380}},\ \bibinfo {pages} {170}
  (\bibinfo {year} {2019})}\BibitemShut {NoStop}%
\bibitem [{\citenamefont {Grogan}\ \emph {et~al.}(2020)\citenamefont {Grogan},
  \citenamefont {Lei}, \citenamefont {Li},\ and\ \citenamefont
  {Baker}}]{Grogan_Lei_JCP_2020}%
  \BibitemOpen
  \bibfield  {author} {\bibinfo {author} {\bibfnamefont {F.}~\bibnamefont
  {Grogan}}, \bibinfo {author} {\bibfnamefont {H.}~\bibnamefont {Lei}},
  \bibinfo {author} {\bibfnamefont {X.}~\bibnamefont {Li}}, \ and\ \bibinfo
  {author} {\bibfnamefont {N.~A.}\ \bibnamefont {Baker}},\ }\href@noop {}
  {\bibfield  {journal} {\bibinfo  {journal} {J. Comput. Phys.}\ }\textbf
  {\bibinfo {volume} {418}},\ \bibinfo {pages} {109633} (\bibinfo {year}
  {2020})}\BibitemShut {NoStop}%
\bibitem [{\citenamefont {Chorin}\ and\ \citenamefont
  {Lu}(2015)}]{chorin2015discrete}%
  \BibitemOpen
  \bibfield  {author} {\bibinfo {author} {\bibfnamefont {A.~J.}\ \bibnamefont
  {Chorin}}\ and\ \bibinfo {author} {\bibfnamefont {F.}~\bibnamefont {Lu}},\
  }\href@noop {} {\bibfield  {journal} {\bibinfo  {journal} {Proceedings of the
  National Academy of Sciences}\ }\textbf {\bibinfo {volume} {112}},\ \bibinfo
  {pages} {9804} (\bibinfo {year} {2015})}\BibitemShut {NoStop}%
\bibitem [{\citenamefont {Lin}\ and\ \citenamefont
  {Lu}(2021)}]{Lin_Lu_JCP_2021}%
  \BibitemOpen
  \bibfield  {author} {\bibinfo {author} {\bibfnamefont {K.~K.}\ \bibnamefont
  {Lin}}\ and\ \bibinfo {author} {\bibfnamefont {F.}~\bibnamefont {Lu}},\
  }\href@noop {} {\bibfield  {journal} {\bibinfo  {journal} {Journal of
  Computational Physics}\ }\textbf {\bibinfo {volume} {424}},\ \bibinfo {pages}
  {109864} (\bibinfo {year} {2021})}\BibitemShut {NoStop}%
\bibitem [{\citenamefont {Zhu}\ \emph {et~al.}(2022)\citenamefont {Zhu},
  \citenamefont {Tang},\ and\ \citenamefont {Kim}}]{Zhu_Tang_arxiv_2022}%
  \BibitemOpen
  \bibfield  {author} {\bibinfo {author} {\bibfnamefont {Y.}~\bibnamefont
  {Zhu}}, \bibinfo {author} {\bibfnamefont {Y.-H.}\ \bibnamefont {Tang}}, \
  and\ \bibinfo {author} {\bibfnamefont {C.}~\bibnamefont {Kim}},\ }\href@noop
  {} {\bibfield  {journal} {\bibinfo  {journal} {arXiv preprint:
  arXiv:2202.12278}\ } (\bibinfo {year} {2022})}\BibitemShut {NoStop}%
\bibitem [{\citenamefont {Lei}\ and\ \citenamefont
  {Li}(2021)}]{Lei_Li_JCP_2021}%
  \BibitemOpen
  \bibfield  {author} {\bibinfo {author} {\bibfnamefont {H.}~\bibnamefont
  {Lei}}\ and\ \bibinfo {author} {\bibfnamefont {X.}~\bibnamefont {Li}},\
  }\href {\doibase 10.1063/5.0042679} {\bibfield  {journal} {\bibinfo
  {journal} {The Journal of Chemical Physics}\ }\textbf {\bibinfo {volume}
  {154}},\ \bibinfo {pages} {184108} (\bibinfo {year} {2021})}\BibitemShut
  {NoStop}%
\bibitem [{\citenamefont {Kubo}(1966{\natexlab{a}})}]{Kubo66}%
  \BibitemOpen
  \bibfield  {author} {\bibinfo {author} {\bibfnamefont {R.}~\bibnamefont
  {Kubo}},\ }\href@noop {} {\bibfield  {journal} {\bibinfo  {journal} {Rep.
  Prog. Phys.}\ }\textbf {\bibinfo {volume} {29(1)}},\ \bibinfo {pages} {255 }
  (\bibinfo {year} {1966}{\natexlab{a}})}\BibitemShut {NoStop}%
\bibitem [{\citenamefont {Espa{\~n}ol}(2004)}]{espanol2004statistical}%
  \BibitemOpen
  \bibfield  {author} {\bibinfo {author} {\bibfnamefont {P.}~\bibnamefont
  {Espa{\~n}ol}},\ }in\ \href@noop {} {\emph {\bibinfo {booktitle} {Novel
  Methods in Soft Matter Simulations}}}\ (\bibinfo  {publisher} {Springer},\
  \bibinfo {year} {2004})\ pp.\ \bibinfo {pages} {69--115}\BibitemShut
  {NoStop}%
\bibitem [{\citenamefont {Noid}\ \emph {et~al.}(2008)\citenamefont {Noid},
  \citenamefont {Chu}, \citenamefont {Ayton}, \citenamefont {Krishna},
  \citenamefont {Izvekov}, \citenamefont {Voth}, \citenamefont {Das},\ and\
  \citenamefont {Andersen}}]{noid_multiscale_2008}%
  \BibitemOpen
  \bibfield  {author} {\bibinfo {author} {\bibfnamefont {W.~G.}\ \bibnamefont
  {Noid}}, \bibinfo {author} {\bibfnamefont {J.-W.}\ \bibnamefont {Chu}},
  \bibinfo {author} {\bibfnamefont {G.~S.}\ \bibnamefont {Ayton}}, \bibinfo
  {author} {\bibfnamefont {V.}~\bibnamefont {Krishna}}, \bibinfo {author}
  {\bibfnamefont {S.}~\bibnamefont {Izvekov}}, \bibinfo {author} {\bibfnamefont
  {G.~A.}\ \bibnamefont {Voth}}, \bibinfo {author} {\bibfnamefont
  {A.}~\bibnamefont {Das}}, \ and\ \bibinfo {author} {\bibfnamefont {H.~C.}\
  \bibnamefont {Andersen}},\ }\href@noop {} {\bibfield  {journal} {\bibinfo
  {journal} {J. Chem. Phys.}\ }\textbf {\bibinfo {volume} {128}},\ \bibinfo
  {pages} {244114} (\bibinfo {year} {2008})}\BibitemShut {NoStop}%
\bibitem [{\citenamefont {Lei}\ \emph {et~al.}(2020)\citenamefont {Lei},
  \citenamefont {Wu},\ and\ \citenamefont {E}}]{Lei_Wu_E_2020}%
  \BibitemOpen
  \bibfield  {author} {\bibinfo {author} {\bibfnamefont {H.}~\bibnamefont
  {Lei}}, \bibinfo {author} {\bibfnamefont {L.}~\bibnamefont {Wu}}, \ and\
  \bibinfo {author} {\bibfnamefont {W.}~\bibnamefont {E}},\ }\href@noop {}
  {\bibfield  {journal} {\bibinfo  {journal} {Physical Review E}\ }\textbf
  {\bibinfo {volume} {102}},\ \bibinfo {pages} {043309} (\bibinfo {year}
  {2020})}\BibitemShut {NoStop}%
\bibitem [{\citenamefont {Fang}\ \emph {et~al.}(2022)\citenamefont {Fang},
  \citenamefont {Ge}, \citenamefont {Zhang}, \citenamefont {E},\ and\
  \citenamefont {Lei}}]{Lei_E_DeePN2_2022}%
  \BibitemOpen
  \bibfield  {author} {\bibinfo {author} {\bibfnamefont {L.}~\bibnamefont
  {Fang}}, \bibinfo {author} {\bibfnamefont {P.}~\bibnamefont {Ge}}, \bibinfo
  {author} {\bibfnamefont {L.}~\bibnamefont {Zhang}}, \bibinfo {author}
  {\bibfnamefont {W.}~\bibnamefont {E}}, \ and\ \bibinfo {author}
  {\bibfnamefont {H.}~\bibnamefont {Lei}},\ }\href@noop {} {\bibfield
  {journal} {\bibinfo  {journal} {Journal of Machine Learning}\ }\textbf
  {\bibinfo {volume} {1}},\ \bibinfo {pages} {114} (\bibinfo {year}
  {2022})}\BibitemShut {NoStop}%
\bibitem [{\citenamefont {Rohrdanz}\ \emph {et~al.}(2011)\citenamefont
  {Rohrdanz}, \citenamefont {Zheng}, \citenamefont {Maggioni},\ and\
  \citenamefont {Clementi}}]{Rohrdanz_Zheng_JCP_2011}%
  \BibitemOpen
  \bibfield  {author} {\bibinfo {author} {\bibfnamefont {M.~A.}\ \bibnamefont
  {Rohrdanz}}, \bibinfo {author} {\bibfnamefont {W.}~\bibnamefont {Zheng}},
  \bibinfo {author} {\bibfnamefont {M.}~\bibnamefont {Maggioni}}, \ and\
  \bibinfo {author} {\bibfnamefont {C.}~\bibnamefont {Clementi}},\ }\href@noop
  {} {\bibfield  {journal} {\bibinfo  {journal} {The Journal of Chemical
  Physics}\ }\textbf {\bibinfo {volume} {134}},\ \bibinfo {pages} {124116}
  (\bibinfo {year} {2011})}\BibitemShut {NoStop}%
\bibitem [{\citenamefont {P{\'e}rez-Hern{\'a}ndez}\ \emph
  {et~al.}(2013)\citenamefont {P{\'e}rez-Hern{\'a}ndez}, \citenamefont {Paul},
  \citenamefont {Giorgino}, \citenamefont {De~Fabritiis},\ and\ \citenamefont
  {No{\'e}}}]{Per_Noe_JCP_2013}%
  \BibitemOpen
  \bibfield  {author} {\bibinfo {author} {\bibfnamefont {G.}~\bibnamefont
  {P{\'e}rez-Hern{\'a}ndez}}, \bibinfo {author} {\bibfnamefont
  {F.}~\bibnamefont {Paul}}, \bibinfo {author} {\bibfnamefont {T.}~\bibnamefont
  {Giorgino}}, \bibinfo {author} {\bibfnamefont {G.}~\bibnamefont
  {De~Fabritiis}}, \ and\ \bibinfo {author} {\bibfnamefont {F.}~\bibnamefont
  {No{\'e}}},\ }\href@noop {} {\bibfield  {journal} {\bibinfo  {journal} {The
  Journal of Chemical Physics}\ }\textbf {\bibinfo {volume} {139}},\ \bibinfo
  {pages} {015102} (\bibinfo {year} {2013})}\BibitemShut {NoStop}%
\bibitem [{\citenamefont {Li}\ and\ \citenamefont {Ma}(2014)}]{Li_Ma_MS_2014}%
  \BibitemOpen
  \bibfield  {author} {\bibinfo {author} {\bibfnamefont {W.}~\bibnamefont
  {Li}}\ and\ \bibinfo {author} {\bibfnamefont {A.}~\bibnamefont {Ma}},\
  }\href@noop {} {\bibfield  {journal} {\bibinfo  {journal} {Molecular
  Simulation}\ }\textbf {\bibinfo {volume} {40}},\ \bibinfo {pages} {784}
  (\bibinfo {year} {2014})}\BibitemShut {NoStop}%
\bibitem [{\citenamefont {Krivov}(2013)}]{Krivov_JCTC_2013}%
  \BibitemOpen
  \bibfield  {author} {\bibinfo {author} {\bibfnamefont {S.~V.}\ \bibnamefont
  {Krivov}},\ }\href@noop {} {\bibfield  {journal} {\bibinfo  {journal}
  {Journal of Chemical Theory and Computation}\ }\textbf {\bibinfo {volume}
  {9}},\ \bibinfo {pages} {135} (\bibinfo {year} {2013})}\BibitemShut {NoStop}%
\bibitem [{\citenamefont {Lu}\ and\ \citenamefont
  {Vanden-Eijnden}(2014)}]{Lu_Vanden_JCP_2014}%
  \BibitemOpen
  \bibfield  {author} {\bibinfo {author} {\bibfnamefont {J.}~\bibnamefont
  {Lu}}\ and\ \bibinfo {author} {\bibfnamefont {E.}~\bibnamefont
  {Vanden-Eijnden}},\ }\href@noop {} {\bibfield  {journal} {\bibinfo  {journal}
  {The Journal of Chemical Physics}\ }\textbf {\bibinfo {volume} {141}},\
  \bibinfo {pages} {044109} (\bibinfo {year} {2014})}\BibitemShut {NoStop}%
\bibitem [{\citenamefont {Bittracher}\ \emph {et~al.}(2018)\citenamefont
  {Bittracher}, \citenamefont {Koltai}, \citenamefont {Klus}, \citenamefont
  {Banisch}, \citenamefont {Dellnitz},\ and\ \citenamefont
  {Sch{\"u}tte}}]{Bitt_Sch_JNS_2018}%
  \BibitemOpen
  \bibfield  {author} {\bibinfo {author} {\bibfnamefont {A.}~\bibnamefont
  {Bittracher}}, \bibinfo {author} {\bibfnamefont {P.}~\bibnamefont {Koltai}},
  \bibinfo {author} {\bibfnamefont {S.}~\bibnamefont {Klus}}, \bibinfo {author}
  {\bibfnamefont {R.}~\bibnamefont {Banisch}}, \bibinfo {author} {\bibfnamefont
  {M.}~\bibnamefont {Dellnitz}}, \ and\ \bibinfo {author} {\bibfnamefont
  {C.}~\bibnamefont {Sch{\"u}tte}},\ }\href@noop {} {\bibfield  {journal}
  {\bibinfo  {journal} {Journal of Nonlinear Science}\ }\textbf {\bibinfo
  {volume} {28}},\ \bibinfo {pages} {471} (\bibinfo {year} {2018})}\BibitemShut
  {NoStop}%
\bibitem [{\citenamefont {Coifman}\ \emph {et~al.}(2008)\citenamefont
  {Coifman}, \citenamefont {Kevrekidis}, \citenamefont {Lafon}, \citenamefont
  {Maggioni},\ and\ \citenamefont {Nadler}}]{Coifman_Kevrekidis_MMS_2008}%
  \BibitemOpen
  \bibfield  {author} {\bibinfo {author} {\bibfnamefont {R.~R.}\ \bibnamefont
  {Coifman}}, \bibinfo {author} {\bibfnamefont {I.~G.}\ \bibnamefont
  {Kevrekidis}}, \bibinfo {author} {\bibfnamefont {S.}~\bibnamefont {Lafon}},
  \bibinfo {author} {\bibfnamefont {M.}~\bibnamefont {Maggioni}}, \ and\
  \bibinfo {author} {\bibfnamefont {B.}~\bibnamefont {Nadler}},\ }\href@noop {}
  {\bibfield  {journal} {\bibinfo  {journal} {Multiscale Modeling \&
  Simulation}\ }\textbf {\bibinfo {volume} {7}},\ \bibinfo {pages} {842}
  (\bibinfo {year} {2008})}\BibitemShut {NoStop}%
\bibitem [{\citenamefont {Chiavazzo}\ \emph {et~al.}(2017)\citenamefont
  {Chiavazzo}, \citenamefont {Covino}, \citenamefont {Coifman}, \citenamefont
  {Gear}, \citenamefont {Georgiou}, \citenamefont {Hummer},\ and\ \citenamefont
  {Kevrekidis}}]{Chiavazzo_Covino_PNAS_2017}%
  \BibitemOpen
  \bibfield  {author} {\bibinfo {author} {\bibfnamefont {E.}~\bibnamefont
  {Chiavazzo}}, \bibinfo {author} {\bibfnamefont {R.}~\bibnamefont {Covino}},
  \bibinfo {author} {\bibfnamefont {R.~R.}\ \bibnamefont {Coifman}}, \bibinfo
  {author} {\bibfnamefont {C.~W.}\ \bibnamefont {Gear}}, \bibinfo {author}
  {\bibfnamefont {A.~S.}\ \bibnamefont {Georgiou}}, \bibinfo {author}
  {\bibfnamefont {G.}~\bibnamefont {Hummer}}, \ and\ \bibinfo {author}
  {\bibfnamefont {I.~G.}\ \bibnamefont {Kevrekidis}},\ }\href@noop {}
  {\bibfield  {journal} {\bibinfo  {journal} {Proceedings of the National
  Academy of Sciences}\ }\textbf {\bibinfo {volume} {114}},\ \bibinfo {pages}
  {E5494} (\bibinfo {year} {2017})}\BibitemShut {NoStop}%
\bibitem [{\citenamefont {Crosskey}\ and\ \citenamefont
  {Maggioni}(2017)}]{Crosskey_Maggioni_MMS_2017}%
  \BibitemOpen
  \bibfield  {author} {\bibinfo {author} {\bibfnamefont {M.}~\bibnamefont
  {Crosskey}}\ and\ \bibinfo {author} {\bibfnamefont {M.}~\bibnamefont
  {Maggioni}},\ }\href@noop {} {\bibfield  {journal} {\bibinfo  {journal}
  {Multiscale Modeling \& Simulation}\ }\textbf {\bibinfo {volume} {15}},\
  \bibinfo {pages} {110} (\bibinfo {year} {2017})}\BibitemShut {NoStop}%
\bibitem [{\citenamefont {Ye}\ \emph {et~al.}(2021)\citenamefont {Ye},
  \citenamefont {Yang},\ and\ \citenamefont
  {Maggioni}}]{Ye_Yang_Maggioni_arxiv_2021}%
  \BibitemOpen
  \bibfield  {author} {\bibinfo {author} {\bibfnamefont {F.~X.~F.}\
  \bibnamefont {Ye}}, \bibinfo {author} {\bibfnamefont {S.}~\bibnamefont
  {Yang}}, \ and\ \bibinfo {author} {\bibfnamefont {M.}~\bibnamefont
  {Maggioni}},\ }\href@noop {} {\bibfield  {journal} {\bibinfo  {journal}
  {arXiv preprint:arXiv:2104.02120}\ } (\bibinfo {year} {2021})}\BibitemShut
  {NoStop}%
\bibitem [{\citenamefont {Feng}\ \emph {et~al.}(2022)\citenamefont {Feng},
  \citenamefont {Gao}, \citenamefont {Dai},\ and\ \citenamefont
  {Duan}}]{Feng_Gao_arxiv_2022}%
  \BibitemOpen
  \bibfield  {author} {\bibinfo {author} {\bibfnamefont {L.}~\bibnamefont
  {Feng}}, \bibinfo {author} {\bibfnamefont {T.}~\bibnamefont {Gao}}, \bibinfo
  {author} {\bibfnamefont {M.}~\bibnamefont {Dai}}, \ and\ \bibinfo {author}
  {\bibfnamefont {J.}~\bibnamefont {Duan}},\ }\href@noop {} {\bibfield
  {journal} {\bibinfo  {journal} {arXiv preprint:arXiv:2205.04151}\ } (\bibinfo
  {year} {2022})}\BibitemShut {NoStop}%
\bibitem [{\citenamefont {Zieli{\'n}ski}\ and\ \citenamefont
  {Hesthaven}(2022)}]{Zieli_Hesthaven_2022}%
  \BibitemOpen
  \bibfield  {author} {\bibinfo {author} {\bibfnamefont {P.}~\bibnamefont
  {Zieli{\'n}ski}}\ and\ \bibinfo {author} {\bibfnamefont {J.~S.}\ \bibnamefont
  {Hesthaven}},\ }\href@noop {} {\bibfield  {journal} {\bibinfo  {journal}
  {Journal of Nonlinear Science}\ }\textbf {\bibinfo {volume} {32}},\ \bibinfo
  {pages} {51} (\bibinfo {year} {2022})}\BibitemShut {NoStop}%
\bibitem [{\citenamefont {Giannakis}(2019)}]{Giannakis_ACHA_2019}%
  \BibitemOpen
  \bibfield  {author} {\bibinfo {author} {\bibfnamefont {D.}~\bibnamefont
  {Giannakis}},\ }\href@noop {} {\bibfield  {journal} {\bibinfo  {journal}
  {Applied and Computational Harmonic Analysis}\ }\textbf {\bibinfo {volume}
  {47}},\ \bibinfo {pages} {338} (\bibinfo {year} {2019})}\BibitemShut
  {NoStop}%
\bibitem [{\citenamefont {Klus}\ \emph {et~al.}(2018)\citenamefont {Klus},
  \citenamefont {N{\"u}ske}, \citenamefont {Koltai}, \citenamefont {Wu},
  \citenamefont {Kevrekidis}, \citenamefont {Sch{\"u}tte},\ and\ \citenamefont
  {No{\'e}}}]{Klus_Schu_JNonlinear_2018}%
  \BibitemOpen
  \bibfield  {author} {\bibinfo {author} {\bibfnamefont {S.}~\bibnamefont
  {Klus}}, \bibinfo {author} {\bibfnamefont {F.}~\bibnamefont {N{\"u}ske}},
  \bibinfo {author} {\bibfnamefont {P.}~\bibnamefont {Koltai}}, \bibinfo
  {author} {\bibfnamefont {H.}~\bibnamefont {Wu}}, \bibinfo {author}
  {\bibfnamefont {I.}~\bibnamefont {Kevrekidis}}, \bibinfo {author}
  {\bibfnamefont {C.}~\bibnamefont {Sch{\"u}tte}}, \ and\ \bibinfo {author}
  {\bibfnamefont {F.}~\bibnamefont {No{\'e}}},\ }\href@noop {} {\bibfield
  {journal} {\bibinfo  {journal} {Journal of Nonlinear Science}\ }\textbf
  {\bibinfo {volume} {28}},\ \bibinfo {pages} {985} (\bibinfo {year}
  {2018})}\BibitemShut {NoStop}%
\bibitem [{\citenamefont {Dibak}\ \emph {et~al.}(2018)\citenamefont {Dibak},
  \citenamefont {del Razo}, \citenamefont {De~Sancho}, \citenamefont
  {Sch{\"u}tte},\ and\ \citenamefont {No{\'e}}}]{Mauricio_Schut_JCP_2018}%
  \BibitemOpen
  \bibfield  {author} {\bibinfo {author} {\bibfnamefont {M.}~\bibnamefont
  {Dibak}}, \bibinfo {author} {\bibfnamefont {M.~J.}\ \bibnamefont {del Razo}},
  \bibinfo {author} {\bibfnamefont {D.}~\bibnamefont {De~Sancho}}, \bibinfo
  {author} {\bibfnamefont {C.}~\bibnamefont {Sch{\"u}tte}}, \ and\ \bibinfo
  {author} {\bibfnamefont {F.}~\bibnamefont {No{\'e}}},\ }\href@noop {}
  {\bibfield  {journal} {\bibinfo  {journal} {The Journal of Chemical Physics}\
  }\textbf {\bibinfo {volume} {148}},\ \bibinfo {pages} {214107} (\bibinfo
  {year} {2018})}\BibitemShut {NoStop}%
\bibitem [{\citenamefont {Klus}\ \emph {et~al.}(2020)\citenamefont {Klus},
  \citenamefont {N{\"u}ske}, \citenamefont {Peitz}, \citenamefont {Niemann},
  \citenamefont {Clementi},\ and\ \citenamefont
  {Sch{\"u}tte}}]{Klus_Schutte_Physical_D_2020}%
  \BibitemOpen
  \bibfield  {author} {\bibinfo {author} {\bibfnamefont {S.}~\bibnamefont
  {Klus}}, \bibinfo {author} {\bibfnamefont {F.}~\bibnamefont {N{\"u}ske}},
  \bibinfo {author} {\bibfnamefont {S.}~\bibnamefont {Peitz}}, \bibinfo
  {author} {\bibfnamefont {J.-H.}\ \bibnamefont {Niemann}}, \bibinfo {author}
  {\bibfnamefont {C.}~\bibnamefont {Clementi}}, \ and\ \bibinfo {author}
  {\bibfnamefont {C.}~\bibnamefont {Sch{\"u}tte}},\ }\href@noop {} {\bibfield
  {journal} {\bibinfo  {journal} {Physica D: Nonlinear Phenomena}\ }\textbf
  {\bibinfo {volume} {406}},\ \bibinfo {pages} {132416} (\bibinfo {year}
  {2020})}\BibitemShut {NoStop}%
\bibitem [{\citenamefont {Koopman}(1931)}]{Koopman315}%
  \BibitemOpen
  \bibfield  {author} {\bibinfo {author} {\bibfnamefont {B.~O.}\ \bibnamefont
  {Koopman}},\ }\href@noop {} {\bibfield  {journal} {\bibinfo  {journal}
  {Proceedings of the National Academy of Sciences}\ }\textbf {\bibinfo
  {volume} {17}},\ \bibinfo {pages} {315} (\bibinfo {year} {1931})}\BibitemShut
  {NoStop}%
\bibitem [{\citenamefont {Kubo}(1966{\natexlab{b}})}]{Kubo1966}%
  \BibitemOpen
  \bibfield  {author} {\bibinfo {author} {\bibfnamefont {R.}~\bibnamefont
  {Kubo}},\ }\href {\doibase 10.1088/0034-4885/29/1/306} {\bibfield  {journal}
  {\bibinfo  {journal} {Reports on Progress in Physics}\ }\textbf {\bibinfo
  {volume} {29}},\ \bibinfo {pages} {255} (\bibinfo {year}
  {1966}{\natexlab{b}})}\BibitemShut {NoStop}%
\bibitem [{\citenamefont {Kumar}\ \emph {et~al.}(1992)\citenamefont {Kumar},
  \citenamefont {Rosenberg}, \citenamefont {Bouzida}, \citenamefont
  {Swendsen},\ and\ \citenamefont {Kollman}}]{Kumar_Kollman_JCC_1992}%
  \BibitemOpen
  \bibfield  {author} {\bibinfo {author} {\bibfnamefont {S.}~\bibnamefont
  {Kumar}}, \bibinfo {author} {\bibfnamefont {J.~M.}\ \bibnamefont
  {Rosenberg}}, \bibinfo {author} {\bibfnamefont {D.}~\bibnamefont {Bouzida}},
  \bibinfo {author} {\bibfnamefont {R.~H.}\ \bibnamefont {Swendsen}}, \ and\
  \bibinfo {author} {\bibfnamefont {P.~A.}\ \bibnamefont {Kollman}},\
  }\href@noop {} {\bibfield  {journal} {\bibinfo  {journal} {Journal of
  Computational Chemistry}\ }\textbf {\bibinfo {volume} {13}},\ \bibinfo
  {pages} {1011} (\bibinfo {year} {1992})}\BibitemShut {NoStop}%
\bibitem [{\citenamefont {Darve}\ and\ \citenamefont
  {Pohorille}(2001)}]{Darve_JCP_2001}%
  \BibitemOpen
  \bibfield  {author} {\bibinfo {author} {\bibfnamefont {E.}~\bibnamefont
  {Darve}}\ and\ \bibinfo {author} {\bibfnamefont {A.}~\bibnamefont
  {Pohorille}},\ }\href@noop {} {\bibfield  {journal} {\bibinfo  {journal} {The
  Journal of Chemical Physics}\ }\textbf {\bibinfo {volume} {115}},\ \bibinfo
  {pages} {9169} (\bibinfo {year} {2001})}\BibitemShut {NoStop}%
\bibitem [{\citenamefont {Laio}\ and\ \citenamefont
  {Parrinello}(2002)}]{Laio_Parrinello_PNAS_2002}%
  \BibitemOpen
  \bibfield  {author} {\bibinfo {author} {\bibfnamefont {A.}~\bibnamefont
  {Laio}}\ and\ \bibinfo {author} {\bibfnamefont {M.}~\bibnamefont
  {Parrinello}},\ }\href@noop {} {\bibfield  {journal} {\bibinfo  {journal}
  {Proceedings of the National Academy of Sciences}\ }\textbf {\bibinfo
  {volume} {99}},\ \bibinfo {pages} {12562} (\bibinfo {year}
  {2002})}\BibitemShut {NoStop}%
\bibitem [{\citenamefont {Rosso}\ \emph {et~al.}(2002)\citenamefont {Rosso},
  \citenamefont {Min{\`a}ry}, \citenamefont {Zhu},\ and\ \citenamefont
  {Tuckerman}}]{Tuckerman_JCP_2002}%
  \BibitemOpen
  \bibfield  {author} {\bibinfo {author} {\bibfnamefont {L.}~\bibnamefont
  {Rosso}}, \bibinfo {author} {\bibfnamefont {P.}~\bibnamefont {Min{\`a}ry}},
  \bibinfo {author} {\bibfnamefont {Z.}~\bibnamefont {Zhu}}, \ and\ \bibinfo
  {author} {\bibfnamefont {M.~E.}\ \bibnamefont {Tuckerman}},\ }\href {\doibase
  10.1063/1.1448491} {\bibfield  {journal} {\bibinfo  {journal} {The Journal of
  Chemical Physics}\ }\textbf {\bibinfo {volume} {116}},\ \bibinfo {pages}
  {4389} (\bibinfo {year} {2002})}\BibitemShut {NoStop}%
\bibitem [{\citenamefont {Maragliano}\ and\ \citenamefont
  {Vanden-Eijnden}(2006)}]{TAMD_CPL_2006}%
  \BibitemOpen
  \bibfield  {author} {\bibinfo {author} {\bibfnamefont {L.}~\bibnamefont
  {Maragliano}}\ and\ \bibinfo {author} {\bibfnamefont {E.}~\bibnamefont
  {Vanden-Eijnden}},\ }\href@noop {} {\bibfield  {journal} {\bibinfo  {journal}
  {Chemical Physics Letters}\ }\textbf {\bibinfo {volume} {426}},\ \bibinfo
  {pages} {168 } (\bibinfo {year} {2006})}\BibitemShut {NoStop}%
\bibitem [{\citenamefont {Maragliano}\ and\ \citenamefont
  {Vanden-Eijnden}(2008)}]{TAMD_JCP_2006}%
  \BibitemOpen
  \bibfield  {author} {\bibinfo {author} {\bibfnamefont {L.}~\bibnamefont
  {Maragliano}}\ and\ \bibinfo {author} {\bibfnamefont {E.}~\bibnamefont
  {Vanden-Eijnden}},\ }\href {\doibase 10.1063/1.2907241} {\bibfield  {journal}
  {\bibinfo  {journal} {The Journal of Chemical Physics}\ }\textbf {\bibinfo
  {volume} {128}},\ \bibinfo {pages} {184110} (\bibinfo {year}
  {2008})}\BibitemShut {NoStop}%
\bibitem [{\citenamefont {Frenkel}\ and\ \citenamefont
  {Smit}(2001)}]{frenkel2001understanding}%
  \BibitemOpen
  \bibfield  {author} {\bibinfo {author} {\bibfnamefont {D.}~\bibnamefont
  {Frenkel}}\ and\ \bibinfo {author} {\bibfnamefont {B.}~\bibnamefont {Smit}},\
  }\href@noop {} {\emph {\bibinfo {title} {Understanding molecular simulation:
  from algorithms to applications}}},\ Vol.~\bibinfo {volume} {1}\ (\bibinfo
  {publisher} {Elsevier},\ \bibinfo {year} {2001})\BibitemShut {NoStop}%
\bibitem [{\citenamefont {Kingma}\ and\ \citenamefont
  {Ba}(2015)}]{Kingma_Ba_Adam_2015}%
  \BibitemOpen
  \bibfield  {author} {\bibinfo {author} {\bibfnamefont {D.}~\bibnamefont
  {Kingma}}\ and\ \bibinfo {author} {\bibfnamefont {J.}~\bibnamefont {Ba}},\
  }\href@noop {} {\bibfield  {journal} {\bibinfo  {journal} {International
  Conference on Learning Representations (ICLR)}\ } (\bibinfo {year}
  {2015})}\BibitemShut {NoStop}%
\bibitem [{\citenamefont {de~Gennes}(1979)}]{degennes_scaling}%
  \BibitemOpen
  \bibfield  {author} {\bibinfo {author} {\bibfnamefont {P.}~\bibnamefont
  {de~Gennes}},\ }\href@noop {} {\emph {\bibinfo {title} {Scaling concepts in
  polymer physics}}}\ (\bibinfo  {publisher} {Cornell Univ. Pr.},\ \bibinfo
  {year} {1979})\BibitemShut {NoStop}%
\end{thebibliography}

%

\end{document}